\theoremstyle{plain}
\newtheorem{theorem}{Theorem}[section]
\newtheorem{lemma}[theorem]{Lemma}
\theoremstyle{definition}
\newtheorem{definition}[theorem]{Definition}
\theoremstyle{remark}
\newtheorem{remark}{Remark}
\newcommand{\Prob}{\mathsf{P}}
\newcommand{\Expect}{\mathsf{E}}
\DeclareMathOperator*{\esssup}{ess\,sup}
\begin{document}


\title{Quickest Change Detection in Statistically Periodic Processes with Unknown Post-Change Distribution}

\author{
\name{Yousef Oleyaeimotlagh\textsuperscript{a}, Taposh Banerjee\textsuperscript{a}\thanks{CONTACT Taposh Banerjee. Email: taposh.banerjee@pitt.edu}, Ahmad Taha\textsuperscript{b}, and Eugene John\textsuperscript{c}}
\affil{\textsuperscript{a}University of Pittsburgh, \textsuperscript{b}Vanderbilt University, \textsuperscript{c}University of Texas at San Antonio
}
}
\maketitle

\begin{abstract}
Algorithms are developed for the quickest detection of a change in statistically periodic processes. These are processes in which the statistical properties are nonstationary but repeat after a fixed time interval. It is assumed that the pre-change law is known to the decision maker but the post-change law is unknown. In this framework, three families of problems are studied: robust quickest change detection, joint quickest change detection and classification, and multislot quickest change detection. In the multislot problem, the exact slot within a period where a change may occur is unknown. Algorithms are  proposed for each problem, and either exact optimality or asymptotic optimal in the low false alarm regime is proved for each of them. The developed algorithms are then used for anomaly detection in traffic data and arrhythmia detection and identification in electrocardiogram (ECG) data. The effectiveness of the algorithms is also demonstrated on simulated data.  
\end{abstract}

\begin{keywords}
Robust change detection, joint change detection and fault isolation, multislot change detection, anomaly detection, traffic data, arrhythmia detection and identification. 
\end{keywords}

\section{Introduction}

In the classical problem of quickest change detection (see  \cite{poor-hadj-qcd-book-2009}, \cite{tart-niki-bass-2014}, \cite{veer-bane-elsevierbook-2013}),
a decision maker observes a stochastic process with a given distribution. At some point in time, the distribution of the process changes. The
problem objective is to detect this change in distribution as quickly as possible, with minimum possible delay, subject to a constraint
on the rate of false alarms. This problem has applications in statistical process control (\cite{taga-jqt-1998}), sensor networks (\cite{bane-tsp-2015}), cyber-physical system monitoring (\cite{chen-bane-tps-2016}), regime changes in neural data (\cite{bane-NER-2019}), traffic monitoring (\cite{bane-fusion-2018}), and in general, anomaly detection (\cite{bane-fusion-2018, bane-globalsip-2018}).

In many applications of anomaly detection, the observed process has statistically periodic behavior. Some examples are as follows:
\begin{enumerate}
	\item \textit{Arrhythmia detection in ECG Data}: The electrocardiography (ECG) data has an almost periodic waveform pattern with a series of P waves, QRS complexes, and ST segments. An arrhythmia can cause a change in this regular pattern (\cite{PhysioNet}).
	\item \textit{Detecting changes in neural spike data}: In certain brain-computer interface (BCI) studies (\cite{zhang-demba-2018}), an identical experiment is performed on an animal in a series of trials leading to similar firing patterns in each trial. An event or a trigger (which is part of the experiment) can change the firing pattern after a certain trial.
	\item \textit{Anomaly detection in city traffic data}: The count of vehicles at a street intersection in New York City (NYC) has been found to show regular patterns of busy and quiet periods (\cite{bane-fusion-2018, bane-globalsip-2018, bane-icassp-2019, bane-isit-2019, bane-allerton-2019}). Congestion or an accident can cause a drop or increase in these vehicle counts.
	\item \textit{Social network data}: The count of Instagram messages posted near a CCTV camera in NYC has also been found to show approximately periodic behavior (\cite{bane-fusion-2018, bane-globalsip-2018, bane-icassp-2019, bane-isit-2019, bane-allerton-2019}).
	\item \textit{Congestion mode detection on highways}: In traffic density estimation problems, it is of interest to detect the mode (congested or uncongested) of the traffic first before deciding on a model to be used for estimation (\cite{taha-bane-acc-2020}). Motivated by the NYC data behavior, the traffic intensity in this application can also be modeled as statistically periodic.
\end{enumerate}

In \cite{bane-tit-2021}, a new class of stochastic processes, called independent and periodically identically distributed (i.p.i.d.) processes, has been introduced to model statistically periodic data. In this process, the sequence of random variables is independent and the distribution of the variables is periodic with a given period $T$.

Statistically periodic processes can also be modeled using cyclostationary processes (\cite{gardner2006cyclostationarity}). However, modeling using i.p.i.d. processes allow for sample-level detection and the development of strong optimality theory.

In \cite{bane-tit-2021}, a Bayesian theory is developed for quickest change detection in i.p.i.d. processes. It is shown that, similar to the i.i.d. setting, it is optimal to use the Shiryaev statistic, i.e., the \textit{a-posteriori} probability that the change has already occurred given the data, for change detection. However, in the i.p.i.d. setting, a change is declared when the sequence of Shiryaev statistics crosses a sequence of time-varying but periodic thresholds.
It is also shown that a single-threshold test is asymptotically optimal, as the constraint on the probability of a false alarm goes to zero. The proposed algorithm can also be implemented recursively and using finite memory. Thus, the set-up of i.p.i.d. processes gives an example of a non-i.i.d. setting in which exactly optimal algorithm can be implemented efficiently. The results in \cite{bane-tit-2021} is valid when both pre- and post-change distributions are known.

In this paper, we consider the problem of quickest change detection in i.p.i.d. processes when the post-change law is unknown. We consider three different formulations of the problem in minimax and Bayesian settings:
\begin{enumerate}
    \item \textit{Robust quickest change detection}: In Section~\ref{sec:robustQCD}, we first consider the problem of robust quickest change detection in i.p.i.d. processes. In this problem, we assume that the post-change family of distributions is not known but belongs to an uncertainty class. We further assume that the post-change family has a distribution that is least favorable. We then show that the algorithm designed using the least favorable distribution is minimax robust for the Bayesian delay metric.
    \item \textit{Quickest detection and fault identification}: 
    In Section~\ref{sec:QCDFaultIsolation}, we consider the problem in which the post-change distribution is unknown but belongs to a finite class of distributions. For this setup, we solve the problem of joint quickest  change detection and isolation in i.p.i.d. processes. We also apply the developed algorithm to real ECG data to detect heart arrhythmia. 
    \item \textit{Multislot quickest change detection}: In Section~\ref{sec:multislot}, we consider the problem of multislot quickest change detection in i.p.i.d. processes. In this problem, the exact time slots in a given period where the change can occur are unknown. We show that a mixture-based test is asymptotically optimal.
\end{enumerate}
A salient feature of our work is that in addition to developing the optimality theory for the proposed algorithms, we also apply them to real or simulation data to demonstrate their effectiveness. Specifically, in Section~\ref{sec:traffic}, we study anomaly detection in traffic data. In Section~\ref{sec:ECG}, we apply the developed algorithms to arrhythmia detection and isolation in ECG data. In Section~\ref{sec:secRobust}, Section~\ref{sec:secMultit}, and Section~\ref{sec:wavelet}, we also apply our algorithms to simulated data to show their effectiveness.

\section{Robust Quickest Change Detection}
\label{sec:robustQCD}

\subsection{Model and Problem Formulation}\label{sec:ProbForm}
We first define the process that we will use to model statistically periodic random processes in this paper.

\medspace
\medspace
\medspace
\begin{definition}[\hspace{-0.03cm}\cite{bane-tit-2021}]
	A random process $\{X_n\}$
	is called independent and periodically identically distributed (i.p.i.d) if
	\begin{enumerate}
		\item The random variables $\{X_n\}$ are independent.
		\item If $X_n$ has density $f_n$, for $n \geq 1$, then there is a positive integer $T$ such
		that the sequence of densities $\{f_n\}$ is periodic with period $T$:
		$$
		f_{n+T} = f_n, \quad \forall n \geq 1.
		$$
	\end{enumerate}
\end{definition}
\medspace
\medspace
\medspace

The law of an i.p.i.d. process is completely characterized by the finite-dimensional product distribution of $(X_1, \dots, X_T)$ or the set of densities $(f_1, \cdots, f_T)$, and we say that the process is i.p.i.d. with the law $(f_1, \cdots, f_T)$.
The change point problem of interest is the following. In the normal regime, the data is modeled as an i.p.i.d. process with law $(f_1, \cdots, f_T)$. At some point in time, due to an event,
the distribution of the i.p.i.d. process deviates from $(f_1, \cdots, f_T)$.
Specifically, consider another periodic sequence of densities $\{g_n\}$ such that
$$
g_{n+T} = g_{n}, \quad \forall n \geq 1.
$$
It is assumed that at the change point $\nu$, the law of the i.p.i.d. process switches from $(f_1, \cdots, f_T)$ to $(g_1, \cdots, g_T)$:
\begin{equation}\label{eq:changepointmodel}
	X_n \sim
	\begin{cases}
		f_n, &\quad \forall n < \nu, \\
		g_n &\quad \forall n \geq \nu.
	\end{cases}
\end{equation}
The densities $(g_1, \cdots, g_T)$ need not be all different from the set of densities $(f_1, \cdots, f_T)$,
but we assume that there exists at least an $i$ such that they are different:
\begin{equation}\label{eq:diffpdfassum}
	g_i \neq f_i, \quad \text{for some } i = 1, 2, \cdots, T.
\end{equation}
In this paper, we assume that the post-change law $(g_1, \cdots, g_T)$ is unknown. Further, there are $T$ families of distributions $\{\mathcal{P}_i\}_{i=1}^T$ such that
$$
g_i \in \mathcal{P}_i, \quad i=1,2, \dots, T.
$$
The families $\{\mathcal{P}_i\}_{i=1}^T$ are known to the decision maker. Below, we use the notation
$$
G = (g_1, g_2, \dots, g_T)
$$
to denote the post-change i.p.i.d. law.

Let $\tau$ be a stopping time for the process $\{X_n\}$, i.e., a positive integer-valued random variable such that the event $\{\tau \leq n\}$ belongs
to the $\sigma$-algebra generated by $X_1, \cdots, X_n$. In other words, whether or not $\tau \leq n$ is completely determined by the first $n$
observations. We declare that a change has occurred at the stopping time $\tau$. To find the best stopping rule to detect the change in distribution, we
need a performance criterion. Towards this end, we model the change point $\nu$ as a random variable with a prior distribution given by
$$
\pi_n = \Prob(\nu =n), \quad \text{ for } n = 1, 2, \cdots.
$$
For each $n \in \mathbb{N}$, we use $\Prob_n^G$ to denote the law of the observation process $\{X_n\}$ when the change occurs at $\nu=n$ and the post-change law is $G$. We use $\Expect_n^G$ to denote the corresponding expectation.
Using this notation, we define the average probability measure
$$
\Prob^{\pi,G} = \sum_{n=1}^\infty \pi_n \; \Prob_n^G.
$$
To capture a penalty for the false alarms,
in the event that the stopping time occurs before the change,
we use the probability of a false alarm defined as
$$
\Prob^{\pi,G}(\tau < \nu).
$$
Note that the probability of a false alarm
$\Prob^{\pi,G}(\tau < \nu)$ is not a function of the post-change law $G$. Hence, in the following, we suppress the mention of $G$ and refer to the probability of false alarm only by
$$
\Prob^{\pi}(\tau < \nu).
$$
To penalize the detection delay, we use the average detection delay given by
$$
\Expect^{\pi,G}\left[(\tau - \nu)^+\right],
$$
where $x^+ = \max\{x, 0\}$.

The optimization problem we are interested in solving is
\begin{equation}\label{eq:robustProb}
	\inf_{\tau \in \mathbf{C}_\alpha} \;\; \sup_{G: g_i \in \mathcal{P}_i, i \leq T} \; \Expect^{\pi,G}\left[(\tau - \nu)^+\right],
\end{equation}
where
$$
\mathbf{C}_\alpha = \left\{\tau:  \Prob^{\pi}(\tau < \nu) \leq \alpha\right\},
$$
and $\alpha$ is a given constraint on the probability of a false alarm.

In the case when the family of distributions $\{\mathcal{P}_i\}_{i=1}^T$ are singleton sets, i.e. when the post-change law is known and fixed $G$, a Lagrangian relaxation of this problem was investigated in \cite{bane-tit-2021}. Understanding the solution reported in \cite{bane-tit-2021} is fundamental to solving the robust problem in  \eqref{eq:robustProb}. In the next section, we discuss the solution provided in \cite{bane-tit-2021} and also its implication for the constrained version in \eqref{eq:robustProb}.

\subsection{Exactly and Asymptotically Optimal Solutions for Known Post-Change Law}
\label{sec:TITreview}
For known post-change law $G=(g_1, \dots, g_T)$ and geometrically distributed change point, it is shown in \cite{bane-tit-2021} that the exact optimal solution to a relaxed version of \eqref{eq:robustProb} is a stopping rule based on a periodic sequence of thresholds. It is also shown that it is sufficient to use only one threshold in the asymptotic regime of false alarm constraint $\alpha \to 0$. Furthermore, the assumption of a geometrically distributed change point can be relaxed in the asymptotic regime. In the rest of this section, we assume that $G$ is known and fixed.

\medspace
\medspace
\medspace
\subsubsection{Exactly Optimal Algorithm}
Let the change point $\nu$ be a geometric random variable:
$$
\Prob(\nu =n) =  (1-\rho)^{n-1} \rho, \quad \text{ for } n = 1, 2, \cdots.
$$
The relaxed version of \eqref{eq:robustProb} (for known $G$) is
\begin{equation}\label{eq:relaxedShiryProb}
	\inf_{\tau} \; \Expect^{\pi, G} \left[( \tau - \nu)^+\right] + \lambda_f \; \Prob^\pi(\tau < \nu),
\end{equation}
where $\lambda_f > 0$ is a penalty on the cost of false alarms.
Now, define $p_0=0$ and
\begin{equation}\label{eq:Shirpn}
	p_n = \Prob^{\pi, G}(\nu \leq n | X_1, \cdots, X_n), \text{ for } n \geq 1.
\end{equation}
Then, \eqref{eq:relaxedShiryProb} is equivalent to solving
\begin{equation}\label{eq:ShirPOMDP}
	\inf_{\tau} \; \Expect^{\pi, G}\left[ \sum_{n=0}^{\tau-1} p_n + \lambda_f (1-p_\tau)\right].
\end{equation}
The belief updated $p_n$ can be computed recursively using the following equations: $p_0=0$ and for $n \geq 1$,
\begin{equation}\label{eq:beliefUpdates}
	p_n = \frac{\tilde{p}_{n-1} \; g_n(X_n)}{\tilde{p}_{n-1} \; g_n(X_n) + (1-\tilde{p}_{n-1}) f_n(X_n)},
\end{equation}
where
$$
\tilde{p}_{n-1} = p_{n-1} + (1-p_{n-1}) \rho.
$$
Since these updates are not stationary, the problem cannot be solved using classical optimal stopping theory \cite{shir-opt-stop-book-1978} or dynamic programming \cite{bert-dyn-prog-book-2017}. However, the structure in \eqref{eq:beliefUpdates} repeats after every fixed time $T$. Motivated by this, in \cite{bane-tit-2021}, a control theory is developed for Markov decision processes with periodic transition and cost structures. This new control theory is then
used to solve the problem in \eqref{eq:ShirPOMDP}.

\medspace
\medspace
\medspace
\begin{theorem}[\hspace{-0.02cm}\cite{bane-tit-2021}]\label{thm:PeriodicThresOpt}
	There exist thresholds
	$A_1$, $A_2$, $\dots$, $A_T$, $A_i \geq 0, \forall i$, such that the  stopping rule
	\begin{equation}\label{eq:periodicThresOptAlgo}
		\tau^* = \inf \{n \geq 1: p_n \geq A_{(n \bmod T)}\},
	\end{equation}
	where $(n \bmod T)$ represents $n$ modulo $T$, is optimal for
	problem in \eqref{eq:ShirPOMDP}. These thresholds depend on the choice of $\lambda_f$.
\end{theorem}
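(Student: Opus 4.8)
The plan is to turn \eqref{eq:ShirPOMDP} into an optimal stopping problem for a \emph{time-homogeneous} Markov chain by augmenting the state with the current phase. Define $Z_n = (p_n,\; n \bmod T)$ on $[0,1]\times\{0,1,\dots,T-1\}$. By the recursion \eqref{eq:beliefUpdates} together with the periodicity $f_{n+T}=f_n$, $g_{n+T}=g_n$, the pair $\{Z_n\}$ is a homogeneous Markov chain, the per-stage (continuation) cost is $p_n$, and the terminal (stopping) cost is $\lambda_f(1-p_n)$, so that stopping at $\tau$ incurs $\sum_{n=0}^{\tau-1}p_n+\lambda_f(1-p_\tau)$ as in \eqref{eq:ShirPOMDP}. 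I would next check that the problem is well posed: under $\Prob^{\pi,G}$ we have $\nu<\infty$ a.s., and after the change $p_n\to 1$ a.s., so $\sum_n p_n=\infty$ a.s.; hence every stopping time with finite objective is a.s.\ finite, while stopping at $n=1$ already attains objective at most $\lambda_f$, so the infimum in \eqref{eq:ShirPOMDP} is finite. This places the problem in the scope of standard optimal stopping theory for Markov chains \cite{shir-opt-stop-book-1978}, which yields value functions $V(\,\cdot\,,i)$, $i=0,\dots,T-1$, satisfying the Bellman equation
\[
V(p,i) \;=\; \min\Bigl\{\,\lambda_f(1-p),\ \ p + \Expect\bigl[V\bigl(p',\,(i+1)\bmod T\bigr)\;\big|\;p_n=p\bigr]\Bigr\},
\]
where $p'$ is the one-step update of $p$ via \eqref{eq:beliefUpdates} using the densities $f_{i+1},g_{i+1}$ (indices mod $T$), and such that the rule $\hat\tau=\inf\{n\ge 1: V(p_n,\,n\bmod T)=\lambda_f(1-p_n)\}$ --- stop on first entry into the stopping region --- is optimal.

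The heart of the argument is to prove that $p\mapsto V(p,i)$ is \emph{concave} (and non-increasing) on $[0,1]$ for every phase $i$. I would do this by value iteration: put $V^{(0)}\equiv 0$ and
\[
V^{(k+1)}(p,i)=\min\Bigl\{\lambda_f(1-p),\ p+\Expect\bigl[V^{(k)}\bigl(p',(i+1)\bmod T\bigr)\mid p_n=p\bigr]\Bigr\}.
\]
Writing $\tilde p = p+(1-p)\rho$ (affine in $p$), the conditional expectation has the form
\[
\Expect\bigl[\phi(p')\mid p_n=p\bigr]=\int \phi\!\left(\frac{\tilde p\,g_{i+1}(x)}{\tilde p\,g_{i+1}(x)+(1-\tilde p)f_{i+1}(x)}\right)\bigl(\tilde p\,g_{i+1}(x)+(1-\tilde p)f_{i+1}(x)\bigr)\,dx,
\]
which, being an integral of perspective-type maps of $\phi$ evaluated at affine functions of $\tilde p$, is concave in $\tilde p$ (hence in $p$) whenever $\phi$ is concave. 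Thus the continuation cost $p+\Expect[V^{(k)}(p',\cdot)\mid p_n=p]$ is concave whenever $V^{(k)}(\cdot,\cdot)$ is; since $\lambda_f(1-p)$ is affine, $V^{(k+1)}(\cdot,i)$ is a minimum of an affine and a concave function, hence concave, and monotonicity propagates likewise because a larger $\tilde p$ shifts $p'$ up in the likelihood-ratio order. As the running cost is non-negative, value iteration converges monotonically, $V^{(k)}\uparrow V$, and concavity and monotonicity pass to the limit.

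The threshold structure is then immediate. Fix a phase $i$ and set $d_i(p)=p+\Expect[V(p',(i+1)\bmod T)\mid p_n=p]-\lambda_f(1-p)$, which is concave in $p$. Stopping is optimal at phase $i$ precisely on the super-level set $\{p:d_i(p)\ge 0\}$, which is an interval because $d_i$ is concave, and it contains $p=1$ since $d_i(1)=1+\Expect[V(1,\cdot)]=1>0$ (using $V(1,\cdot)=0$). Hence the stopping region at phase $i$ equals $[A_i,1]$ for some $A_i\in[0,1]$ (with $A_i=0$ when $d_i\ge 0$ throughout), and relabeling $i\leftrightarrow(n\bmod T)$ identifies $\hat\tau$ with $\tau^*$ of \eqref{eq:periodicThresOptAlgo}; the $A_i$ depend on $\lambda_f$ through the cost functional. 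The step I expect to be most delicate is the infinite-horizon bookkeeping --- verifying that value iteration converges to the genuine value function and that the induced \emph{stationary}, phase-periodic rule is exactly optimal (not merely $\varepsilon$-optimal) --- which is precisely where the control theory for periodic MDPs of \cite{bane-tit-2021} supplies the careful justification; the concavity lemma, although the conceptual core, is routine once the homogeneous reformulation is in place.
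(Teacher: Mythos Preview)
Your proposal is sound. The paper itself does not prove this theorem; it is quoted from \cite{bane-tit-2021}, and the surrounding text only says that a ``control theory for Markov decision processes with periodic transition and cost structures'' is developed there, after remarking that the non-stationary recursion \eqref{eq:beliefUpdates} prevents a direct appeal to classical optimal stopping. Your route---augment the state with the phase $n\bmod T$ so that $\{(p_n,\,n\bmod T)\}$ becomes a time-homogeneous Markov chain, then run the standard concavity/perspective argument to get a threshold in each phase---is the textbook way to reduce a periodic MDP to the classical framework, and is in substance equivalent to what a bespoke ``periodic MDP'' theory would deliver. The gain of your framing is that it makes transparent why classical tools \emph{do} apply once the state is enlarged, and it isolates the only genuinely nonstandard step (infinite-horizon convergence of value iteration and exact optimality of the induced stationary, phase-periodic rule), which you correctly flag and which is precisely the content supplied by \cite{bane-tit-2021}. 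The concavity-preservation lemma via the perspective of $\phi$ and the identification of the stopping set at phase $i$ as an interval $[A_i,1]$ from $d_i(1)=1>0$ are both correct; the monotonicity claim is not actually needed for the threshold conclusion and could be dropped without loss.
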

\medspace
\medspace
\medspace

In fact, the solution given in \cite{bane-tit-2021} is valid for a more general change point problem in which separate delay and false alarm penalty is used for each time slot. We do not discuss it here.

\medspace
\medspace
\medspace

\subsubsection{Asymptotically Optimal Algorithm}
For large values of $T$, which can easily be more than a million for certain applications, it is computationally not feasible to store $T$ different values of thresholds. Thus, it is of interest to see if a single-threshold algorithm is optimal. It is shown in \cite{bane-tit-2021} that periodic threshold algorithms are strictly optimal. However, it is shown in \cite{bane-tit-2021} that a single-threshold test is asymptotically optimal in the regime of low probability of false alarms. We discuss this result below. 

Let there exist $d\geq 0$ such that
\begin{equation}\label{eq:priortail}
	\lim_{n \to \infty} \frac{\log \Prob(\nu > n)}{n} = -d.
\end{equation}
If $\pi = \text{Geom}(\rho)$, then $d = |\log(1-\rho)|$.
Further, let
\begin{equation}\label{eq:KLnumber}
	I = \frac{1}{T}\sum_{i=1}^T D(g_i \; \| \; f_i),
\end{equation}
where $D(g_i \; \| \; f_i)$ is the Kullback-Leibler divergence between the densities $g_i$ and $f_i$:
$$
D(g_i \; \| \; f_i) = \int g_i(x) \log \frac{g_i(x)}{f_i(x)} dx. 
$$
The following theorem is proved in \cite{bane-tit-2021}. 


\begin{theorem}[\hspace{-0.02cm}\cite{bane-tit-2021}]\label{thm:LB}
	Let the information number $I$ be as defined in \eqref{eq:KLnumber} and satisfy $0 < I < \infty$. Also, let $d$ be as in \eqref{eq:priortail}.
	Then, with
	$$
	A_1 = A_2 = \dots = A_T = 1-\alpha,
	$$
	$\tau^* \in \mathbf{C}_\alpha$, and \vspace{-0.2cm}
	\begin{equation}
		\begin{split}
			\Expect^{\pi, G}\left[ (\tau^* - \nu)^+\right]	&= \inf_{\tau \in \mathbf{C}_\alpha}
			\Expect^{\pi, G}\left[ (\tau - \nu)^+\right](1+o(1))\\
			&= \frac{|\log \alpha| }{I + d}(1+o(1)), \quad \text{ as } \alpha \to 0.
		\end{split}
	\end{equation}
	Here $o(1) \to 0$ as $\alpha \to 0$.
\end{theorem}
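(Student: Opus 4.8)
\emph{Overall plan.} The plan is to prove the two-sided bound by a sandwich: (i) $\tau^*\in\mathbf C_\alpha$; (ii) an achievability estimate $\Expect^{\pi,G}[(\tau^*-\nu)^+]\le\frac{|\log\alpha|}{I+d}(1+o(1))$; and (iii) a converse $\inf_{\tau\in\mathbf C_\alpha}\Expect^{\pi,G}[(\tau-\nu)^+]\ge\frac{|\log\alpha|}{I+d}(1+o(1))$. Since $\tau^*\in\mathbf C_\alpha$, (ii) and (iii) together force equality throughout. Part (i) is quick: because $1-p_n=\Prob^{\pi,G}(\nu>n\mid X_1,\dots,X_n)$ and $p_n\ge A_{(n\bmod T)}=1-\alpha$ on $\{\tau^*=n\}$, summing over $n$ and using $\{\tau^*<\nu\}\cap\{\tau^*=\infty\}=\emptyset$ gives $\Prob^{\pi}(\tau^*<\nu)=\Expect^{\pi,G}\!\left[(1-p_{\tau^*})\mathbf 1\{\tau^*<\infty\}\right]\le\alpha$; finiteness of $\tau^*$ a.s.\ on $\{\nu<\infty\}$ follows from the post-change divergence of $p_n$ shown in (ii).

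\emph{Achievability (ii).} Set $B=(1-\alpha)/\alpha$, so $\tau^*=\inf\{n:R_n\ge B\}$ with $R_n:=p_n/(1-p_n)$ and $\log B=|\log\alpha|(1+o(1))$. The recursion \eqref{eq:beliefUpdates} gives $R_n=\big(\tfrac{1}{1-\rho}R_{n-1}+\tfrac{\rho}{1-\rho}\big)\tfrac{g_n(X_n)}{f_n(X_n)}\ge e^{d}R_{n-1}\tfrac{g_n(X_n)}{f_n(X_n)}$ for the geometric prior (for a general prior obeying \eqref{eq:priortail} the multiplicative prior factor still tends to $e^d$, so the argument carries over in the limit, cf.\ \cite{bane-tit-2021}). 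Iterating from the change point, on $\{\nu=\nu_0\}$,
\[
\log R_{\nu_0+m}\ \ge\ \log\tfrac{\rho}{1-\rho}+\log\tfrac{g_{\nu_0}(X_{\nu_0})}{f_{\nu_0}(X_{\nu_0})}+\sum_{i=\nu_0+1}^{\nu_0+m}\Big(d+\log\tfrac{g_i(X_i)}{f_i(X_i)}\Big),
\]
so $(\tau^*-\nu_0)^+$ is at most the first $m$ at which this lower bound reaches $\log B$. The increments $d+\log\tfrac{g_i(X_i)}{f_i(X_i)}$ are independent, have a period-$T$ law, and (since $0<I<\infty$) are integrable with per-period average mean $I+d$; by the strong law of large numbers the corresponding normalized first-passage time converges a.s.\ to $1/(I+d)$, and a standard uniform-integrability/renewal (Wald-type) argument upgrades this to $\Expect_{\nu_0}^G[(\tau^*-\nu_0)^+]\le\frac{\log B}{I+d}(1+o(1))$ uniformly in $\nu_0$ (only the residue $\nu_0\bmod T$ matters). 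Averaging over $\nu_0\sim\pi$ yields (ii).

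\emph{Converse (iii).} Fix $\tau\in\mathbf C_\alpha$, $\epsilon\in(0,1)$, and put $N_\alpha=\lfloor(1-\epsilon)|\log\alpha|/(I+d)\rfloor$. Since $\Prob^{\pi}(\tau<\nu)\le\alpha\to0$, it suffices to show $\Prob^{\pi,G}(\nu\le\tau<\nu+N_\alpha)\to0$: this forces $\Prob^{\pi,G}(\tau\ge\nu+N_\alpha)\to1$, so by Markov's inequality $\Expect^{\pi,G}[(\tau-\nu)^+]\ge N_\alpha(1-o(1))$, and letting $\epsilon\downarrow0$ gives the bound. Let $\Prob_\infty$ denote the no-change law and $Z_k^n=\sum_{i=k}^n\log\tfrac{g_i(X_i)}{f_i(X_i)}$. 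By the likelihood-ratio identity for stopping times,
\[
\Prob^{\pi,G}(\nu\le\tau<\nu+N_\alpha)=\sum_{k\ge1}\pi_k\,\Expect_\infty\!\Big[\mathbf 1\{k\le\tau<k+N_\alpha\}\,e^{Z_k^\tau}\Big].
\]
Split each term according to whether $\max_{0\le j<N_\alpha}Z_k^{k+j}$ exceeds $C:=(1+\delta)IN_\alpha$, choosing $\delta>0$ so small that $\theta:=\big(1+\tfrac{\delta I}{I+d}\big)(1-\epsilon)<1$. On $\{Z_k^\tau\le C\}$ bound $e^{Z_k^\tau}\le e^{C}$, and sum over $k$ using $\sum_k\pi_k\Prob_\infty(k\le\tau<k+N_\alpha)\le e^{dN_\alpha}\Expect_\infty[\Prob(\nu>\tau)]\le e^{dN_\alpha}\alpha$ (exact for the geometric prior; for a general prior satisfying \eqref{eq:priortail} this holds with $e^{dN_\alpha(1+o(1))}$ after first truncating $\tau$ at a slowly growing horizon); the resulting contribution is then at most $\alpha^{1-\theta}$ (with an extra $(1+o(1))$ in the exponent for general priors), hence vanishes. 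On the complement, read the likelihood-ratio identity backwards: the contribution equals $\sum_k\pi_k\Prob_k^G(k\le\tau<k+N_\alpha,\,Z_k^\tau>C)\le\sum_k\pi_k\Prob_k^G\big(\max_{0\le j<N_\alpha}Z_k^{k+j}>(1+\delta)IN_\alpha\big)$, which tends to $0$ by the periodic strong law of large numbers (uniformly in $k$, since finitely many residues occur and $I<\infty$), because $\sum_k\pi_k=1$. Hence $\Prob^{\pi,G}(\nu\le\tau<\nu+N_\alpha)\to0$, proving (iii).

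\emph{Main obstacle.} I expect the delicate point to be the bookkeeping coupling the prior tail to the change of measure in (iii): for a prior only assumed to satisfy \eqref{eq:priortail} (rather than being exactly geometric), bounding $\sum_k\pi_k\Prob_\infty(k\le\tau<k+N_\alpha)$ by $e^{dN_\alpha(1+o(1))}\alpha$ requires the truncation device together with uniform control of the $o(1)$ over the relevant range of stopping times. A secondary technical point is the uniform-in-$\nu_0$ passage from a.s.\ to $L^1$ convergence of the first-passage time in (ii), which rests on the standard moment/uniform-integrability lemmas for the log-likelihood-ratio random walk. Both steps are routine once arranged, and are carried out in detail in \cite{bane-tit-2021}.
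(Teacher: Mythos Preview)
The paper does not give its own proof of this theorem: it is stated as a result quoted from \cite{bane-tit-2021} (note the attribution in the theorem header and the sentence ``The following theorem is proved in \cite{bane-tit-2021}'' preceding it), and no argument is supplied here. So there is nothing in the present paper to compare your proposal against line by line.

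That said, your three-part scheme (false-alarm control via $\Prob^\pi(\tau^*<\nu)=\Expect[1-p_{\tau^*}]\le\alpha$; achievability via the $R_n$ recursion and a periodic renewal/first-passage bound; converse via a change-of-measure overshoot argument with the prior tail) is exactly the standard Tartakovsky--Veeravalli template that \cite{bane-tit-2021} follows, specialized to the i.p.i.d.\ setting where only finitely many residues $\nu_0\bmod T$ occur. Your identification of the two delicate points---the $e^{dN_\alpha(1+o(1))}$ prior-tail bookkeeping for non-geometric priors satisfying \eqref{eq:priortail}, and the uniform-in-$\nu_0$ upgrade from a.s.\ to $L^1$ first-passage asymptotics---matches the places where the cited reference does the real work. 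In short: your outline is correct and aligned with the intended proof, but the present paper simply defers the details to \cite{bane-tit-2021}.
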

\medspace
\medspace
\medspace

\subsubsection{Solution to the Constraint Version of the Problem}
We now argue that, just as in the classical case, the relaxed version of the problem \eqref{eq:ShirPOMDP} can be used to provide a solution to the constraint version of the problem \eqref{eq:robustProb}. We provide proof for completeness.

\medspace
\medspace
\medspace
\begin{lemma}
\label{lem:langtoconst}
	If $\alpha$ is a value of the probability of false alarm achievable by the optimal stopping rule $\tau^*$ in \eqref{eq:ShirPOMDP}, then $\tau^*$ is also optimal for the constraint problem \eqref{eq:robustProb} for this $\alpha$.
\end{lemma}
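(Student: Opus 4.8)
The plan is to run the standard Lagrangian-to-constrained argument. Since $\tau^*$ is, by Theorem~\ref{thm:PeriodicThresOpt}, an \emph{unconstrained} minimizer of the Lagrangian cost in \eqref{eq:relaxedShiryProb} (equivalently \eqref{eq:ShirPOMDP}) for some fixed $\lambda_f>0$, and since by hypothesis it meets the false alarm level exactly, $\Prob^\pi(\tau^*<\nu)=\alpha$, one only needs to trade the delay term against the false alarm term. Note that in this subsection each $\mathcal{P}_i$ is a singleton, so the inner supremum over $G$ in \eqref{eq:robustProb} is vacuous and it suffices to compare average detection delays at the fixed, known law $G$.

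First I would fix the $\lambda_f>0$ for which $\tau^*$ is optimal in \eqref{eq:relaxedShiryProb} and observe that $\Prob^\pi(\tau^*<\nu)=\alpha$ immediately gives $\tau^*\in\mathbf{C}_\alpha$, so $\tau^*$ is feasible for \eqref{eq:robustProb}. Next, let $\tau\in\mathbf{C}_\alpha$ be arbitrary, so $\Prob^\pi(\tau<\nu)\le\alpha$. Optimality of $\tau^*$ for the Lagrangian cost yields
\[
\Expect^{\pi,G}\!\left[(\tau^*-\nu)^+\right] + \lambda_f\,\Prob^\pi(\tau^*<\nu) \;\le\; \Expect^{\pi,G}\!\left[(\tau-\nu)^+\right] + \lambda_f\,\Prob^\pi(\tau<\nu).
\]
Rearranging and using $\Prob^\pi(\tau<\nu)\le\alpha=\Prob^\pi(\tau^*<\nu)$ together with $\lambda_f>0$ gives $\Expect^{\pi,G}\!\left[(\tau^*-\nu)^+\right]\le\Expect^{\pi,G}\!\left[(\tau-\nu)^+\right]$. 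As $\tau\in\mathbf{C}_\alpha$ was arbitrary and $\tau^*\in\mathbf{C}_\alpha$, this is exactly the assertion that $\tau^*$ attains the infimum in \eqref{eq:robustProb} at this $\alpha$.

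This argument is essentially bookkeeping, mirroring the classical i.i.d.\ case; the only step that needs any care is the sign of the false alarm term, which is where both $\lambda_f>0$ and the equality $\Prob^\pi(\tau^*<\nu)=\alpha$ (not merely $\le\alpha$) are used. I would also add a remark that the lemma is non-vacuous precisely because Theorem~\ref{thm:PeriodicThresOpt} delivers a family of optimal rules parametrized by $\lambda_f$, so that varying $\lambda_f$ realizes a range of achievable false alarm levels $\alpha$; proving that every $\alpha\in(0,1)$ is achievable would require a monotonicity/continuity argument in $\lambda_f$, but that is not needed for the statement as phrased.
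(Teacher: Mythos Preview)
Your proposal is correct and follows essentially the same Lagrangian-to-constrained argument as the paper's proof: start from the optimality of $\tau^*$ for the relaxed problem, substitute $\Prob^\pi(\tau^*<\nu)=\alpha$ and $\Prob^\pi(\tau<\nu)\le\alpha$, and cancel the $\lambda_f\alpha$ terms. Your added remarks on feasibility, the role of $\lambda_f>0$, and the achievability of $\alpha$ are sound and slightly more explicit than the paper's presentation, but the substance is identical.
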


\medspace
\medspace
\medspace
\begin{proof}
	By Theorem~\ref{thm:PeriodicThresOpt}, we have
	\begin{equation}\label{eq:relaxedShiryProb1}
		\begin{split}
			\Expect^{\pi, G} \left[( \tau^* - \nu)^+\right] &+ \lambda_f \; \Prob^\pi(\tau^* < \nu) 
			\leq \Expect^{\pi, G} \left[( \tau - \nu)^+\right] + \lambda_f \; \Prob^\pi(\tau < \nu).
		\end{split}
	\end{equation}
	If $\Prob^\pi(\tau^* < \nu) =\alpha$ and $\Prob^\pi(\tau < \nu) \leq \alpha$, then
	\begin{equation}\label{eq:relaxedShiryProb2}
		\begin{split}
			\Expect^{\pi, G} &\left[( \tau^* - \nu)^+\right] + \lambda_f \; \Prob^\pi(\tau^* < \nu) 
			=\Expect^{\pi, G} \left[( \tau^* - \nu)^+\right] + \lambda_f \; \alpha \\
			&\leq \Expect^{\pi, G} \left[( \tau - \nu)^+\right] + \lambda_f \; \Prob^\pi(\tau < \nu) 
			\leq \Expect^{\pi, G} \left[( \tau - \nu)^+\right] + \lambda_f \; \alpha.
		\end{split}
	\end{equation}
	Canceling $\lambda_f \; \alpha$ from both sides we get
	$$
	\Expect^{\pi, G} \left[( \tau^* - \nu)^+\right] \leq \Expect^{\pi, G} \left[( \tau - \nu)^+\right].
	$$
\end{proof}

The following lemma guarantees that a wide range of probability of false alarm $\alpha$ is achievable by the optimal stopping rule $\tau^*$.

\medspace
\medspace
\medspace

\begin{lemma}
	As we increase $\lambda_f \to \infty$ in \eqref{eq:ShirPOMDP}, the probability of false alarm achieved by the optimal stopping rule $\tau^*$ goes to zero.
\end{lemma}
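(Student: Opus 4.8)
The plan is to exploit the Lagrangian structure of \eqref{eq:ShirPOMDP} together with the existence of a single conservative competitor rule whose false alarm probability is small but whose mean delay is still finite. Write $\tau^*_{\lambda_f}$ for the optimal rule of Theorem~\ref{thm:PeriodicThresOpt} at penalty level $\lambda_f$, and recall from the equivalence of \eqref{eq:relaxedShiryProb} and \eqref{eq:ShirPOMDP} that for \emph{every} stopping time $\tau$,
\[
J_{\lambda_f}(\tau) \;:=\; \Expect^{\pi,G}\!\left[\sum_{n=0}^{\tau-1} p_n + \lambda_f(1-p_\tau)\right] \;=\; \Expect^{\pi,G}\!\left[(\tau-\nu)^+\right] + \lambda_f\,\Prob^{\pi}(\tau<\nu),
\]
a sum of two nonnegative quantities (the second identity, $\Expect^{\pi,G}[1-p_\tau]=\Prob^{\pi}(\tau<\nu)$, follows by conditioning on $X_1,\dots,X_\tau$ and using the definition \eqref{eq:Shirpn} of $p_n$).

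First I would fix $\epsilon>0$ and exhibit a benchmark rule $\tau_\epsilon$, \emph{not depending on} $\lambda_f$, with $\Prob^{\pi}(\tau_\epsilon<\nu)\le\epsilon$ and $D_\epsilon:=\Expect^{\pi,G}[(\tau_\epsilon-\nu)^+]<\infty$. The natural choice is the single-threshold Shiryaev rule $\tau_\epsilon=\inf\{n\ge 1: p_n\ge 1-\epsilon\}$. Since $p_{\tau_\epsilon}\ge 1-\epsilon$ on $\{\tau_\epsilon<\infty\}$ and $\{\tau_\epsilon<\nu\}\subseteq\{\tau_\epsilon<\infty\}$, conditioning on the observations up to time $\tau_\epsilon$ gives $\Prob^{\pi}(\tau_\epsilon<\nu)=\Expect^{\pi,G}[(1-p_{\tau_\epsilon})\mathbf{1}\{\tau_\epsilon<\infty\}]\le\epsilon$; and, since $0<I<\infty$ and $\nu$ is geometric (so the tail condition \eqref{eq:priortail} holds with $d>0$), Theorem~\ref{thm:LB} applied with all thresholds equal to $1-\epsilon$ guarantees, for all sufficiently small $\epsilon$, that $\tau_\epsilon\in\mathbf{C}_\epsilon$ and $D_\epsilon<\infty$.

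Then I would invoke optimality. As $\tau_\epsilon$ is a feasible competitor in the infimum \eqref{eq:ShirPOMDP} with finite cost $J_{\lambda_f}(\tau_\epsilon)=D_\epsilon+\lambda_f\,\Prob^{\pi}(\tau_\epsilon<\nu)\le D_\epsilon+\lambda_f\,\epsilon$, optimality of $\tau^*_{\lambda_f}$ and nonnegativity of the delay term yield
\[
\lambda_f\,\Prob^{\pi}(\tau^*_{\lambda_f}<\nu)\;\le\; J_{\lambda_f}(\tau^*_{\lambda_f})\;\le\; J_{\lambda_f}(\tau_\epsilon)\;\le\; D_\epsilon+\lambda_f\,\epsilon,
\]
so $\Prob^{\pi}(\tau^*_{\lambda_f}<\nu)\le D_\epsilon/\lambda_f+\epsilon$. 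Letting $\lambda_f\to\infty$ gives $\limsup_{\lambda_f\to\infty}\Prob^{\pi}(\tau^*_{\lambda_f}<\nu)\le\epsilon$, and since $\epsilon>0$ was arbitrary among sufficiently small values, $\Prob^{\pi}(\tau^*_{\lambda_f}<\nu)\to 0$. I would also note, as a remark, that monotonicity is automatic: comparing $J_{\lambda_1}(\tau^*_{\lambda_1})\le J_{\lambda_1}(\tau^*_{\lambda_2})$ with $J_{\lambda_2}(\tau^*_{\lambda_2})\le J_{\lambda_2}(\tau^*_{\lambda_1})$ and adding shows $\Prob^{\pi}(\tau^*_{\lambda_f}<\nu)$ is nonincreasing in $\lambda_f$, so the limit exists a priori --- though the bound above already pins it to $0$.

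The one nonroutine ingredient is the finiteness $D_\epsilon<\infty$, i.e., that some rule is simultaneously very conservative and not hopeless in delay; this is exactly where the regularity $0<I<\infty$ (with $I$ as in \eqref{eq:KLnumber}) enters, via Theorem~\ref{thm:LB}. If one prefers a self-contained argument, $D_\epsilon<\infty$ can also be established directly from a change-of-measure estimate showing that, after $\nu$, the statistic $p_n$ crosses $1-\epsilon$ within a number of steps having a geometric tail. Everything else is the Lagrangian bookkeeping displayed above.
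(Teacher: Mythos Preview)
Your proof is correct and follows the same Lagrangian comparison scheme as the paper: bound the optimal risk above by the risk of a competitor with arbitrarily small false alarm probability and finite delay, then let $\lambda_f\to\infty$. The paper's proof is a two-line contradiction argument using a \emph{deterministic} time $N$ as the competitor: $\Expect^{\pi,G}[(N-\nu)^+]\le N<\infty$ trivially, and $\Prob^\pi(N<\nu)=(1-\rho)^N\to 0$ as $N\to\infty$, so no appeal to Theorem~\ref{thm:LB} or the condition $0<I<\infty$ is needed. Your choice of the single-threshold Shiryaev rule $\tau_\epsilon$ as competitor works too, but it imports the extra hypothesis $0<I<\infty$ to guarantee $D_\epsilon<\infty$; the deterministic competitor sidesteps this entirely. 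On the other hand, your explicit inequality $\Prob^\pi(\tau^*_{\lambda_f}<\nu)\le D_\epsilon/\lambda_f+\epsilon$ and the monotonicity remark are sharper than what the paper records.
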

\medspace
\medspace
\medspace
\begin{proof}
	As $\lambda_f \to \infty$, if the probability of false alarm for $\tau^*$ stays bounded away from zero, then the Bayesian risk
	$$
	\Expect^{\pi, G} \left[( \tau^* - \nu)^+\right] + \lambda_f \; \Prob^\pi(\tau^* < \nu)
	$$
	would diverge to infinity. This will contradict the fact that $\tau^*$ is optimal because we can get a smaller risk at large enough $\lambda_f$ by stopping at a large enough deterministic time.
\end{proof}

\subsection{Optimal Robust Algorithm for Unknown Post-Change Law}
We now assume that the post-change law $G$ is unknown and provide the optimal solution to \eqref{eq:robustProb} under assumptions on the families of post-change laws $\{\mathcal{P}_i\}_{i=1}^T$. Specifically, we extend the results in \cite{unni-etal-ieeeit-2011} for i.i.d. processes to i.p.i.d. processes.
We assume in the rest of this section that all densities involved are equivalent to each other (absolutely continuous with respect to each other). Also, we assume that the change point $\nu$ is a  geometrically distributed random variable.

To state the assumptions on $\{\mathcal{P}_i\}_{i=1}^T$, we need some defintions. We say that a random variable $Z_2$ is stochastically larger than another random variable $Z_1$ if
$$
\Prob(Z_2 \geq t) \geq \Prob(Z_1 \geq t), \quad \text{for all } t \in \mathbb{R}.
$$
We use the notation
$$
Z_2 \succ Z_1.
$$
If $\mathcal{L}_{Z_2}$ and $\mathcal{L}_{Z_1}$ are the probability laws of $Z_2$ and $Z_1$, then we also use the notation
$$
\mathcal{L}_{Z_2} \succ \mathcal{L}_{Z_1}.
$$
We now introduce the notion of stochastic boundedness in i.p.i.d. processes. In the following, we use
$$
\mathcal{L}(\phi(X), g)
$$
to denote the law of some function $\phi(X)$ of the random variable $X$, when the variable $X$ has density $g$.

\medspace
\medspace
\medspace
\medspace
\medspace

\begin{definition}[Stochastic Boundedness in i.p.i.d. Processes; Least Favorable Law]
	We say that the family $\{\mathcal{P}_i\}_{i=1}^T$ is stochastically bounded by the i.p.i.d. law
	$$
	\bar{G}=(\bar{g}_1, \bar{g}_2, \dots, \bar{g}_T),
	$$
	and call $\bar{G}$ the least favorable law (LFL), if
	$$
	\bar{g}_i \in \mathcal{P}_i, \quad i=1,2, \dots, T,
	$$
	and
	\begin{equation}
 \label{eq:stocbounded}
		\begin{split}
			\mathcal{L}\left(\log \frac{\bar{g}_i(X_i)}{f_i(X_i)}, g_i\right) &\succ 	\mathcal{L}\left(\log \frac{\bar{g}_i(X_i)}{f_i(X_i)},
			\bar{g}_i\right), \quad \text{for all } \; g_i \in \mathcal{P}_i, \quad i=1,2, \dots, T.
		\end{split}
	\end{equation}
\end{definition}

\medspace
\medspace
\medspace
\medspace
\medspace

Consider the stopping rule $\tau^*$ designed using the LFL $\bar{G}=(\bar{g}_1, \bar{g}_2, \dots, \bar{g}_T)$:
\begin{equation}\label{eq:LFLshir}
	\bar{\tau}^* = \inf \{n \geq 1: \bar{p}_n \geq A_{(n \bmod T)}\},
\end{equation}
where $\bar{p}_0=0$, and
\begin{equation}\label{eq:LFPbelief}
	\bar{p}_n = \frac{\tilde{p}_{n-1} \; \bar{g}_n(X_n)}{\tilde{p}_{n-1} \; \bar{g}_n(X_n) + (1-\tilde{p}_{n-1}) f_n(X_n)},
\end{equation}
where
$$
\tilde{p}_{n-1} = \bar{p}_{n-1} + (1-\bar{p}_{n-1}) \rho.
$$

We now state our main result on robust quickest change detection in i.p.i.d. processes. 

\medspace
\medspace
\medspace
\medspace

\begin{theorem}
	Suppose the following conditions hold:
	\begin{enumerate}
		\item 	The family $\{\mathcal{P}_i\}_{i=1}^T$ be stochastically bounded by the i.p.i.d. law
		$$
		\bar{G}=(\bar{g}_1, \bar{g}_2, \dots, \bar{g}_T).
		$$
		\item Let $\alpha \in [0,1]$ be a constraint such that
		$$
		\Prob^\pi(\bar{\tau}^* < \nu) = \alpha,
		$$
		where $\bar{\tau}^*$ is the optimal rule designed using the LFL \eqref{eq:LFLshir}.
		\item All likelihood ratio functions involved are continuous.
		\item The change point $\nu$ is geometrically distributed.
	\end{enumerate}
	Then, the stopping rule $\bar{\tau}^*$ in \eqref{eq:LFLshir} designed using the LFL is optimal for the robust constraint problem in \eqref{eq:robustProb}.
	
\end{theorem}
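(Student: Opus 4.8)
The plan is to exhibit $(\bar{\tau}^*,\bar G)$ as a saddle point of the map $(\tau,G)\mapsto \Expect^{\pi,G}[(\tau-\nu)^+]$ over $\tau\in\mathbf{C}_\alpha$ and over $G$ with $g_i\in\mathcal{P}_i$. Concretely, I would establish the two inequalities
$$
\Expect^{\pi,G}\!\left[(\bar{\tau}^*-\nu)^+\right]\;\le\;\Expect^{\pi,\bar G}\!\left[(\bar{\tau}^*-\nu)^+\right]\;\le\;\Expect^{\pi,\bar G}\!\left[(\tau-\nu)^+\right],
$$
valid for every $G$ with $g_i\in\mathcal{P}_i$ and every $\tau\in\mathbf{C}_\alpha$. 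Granting these, for any $\tau\in\mathbf{C}_\alpha$ one gets $\sup_G\Expect^{\pi,G}[(\bar{\tau}^*-\nu)^+]=\Expect^{\pi,\bar G}[(\bar{\tau}^*-\nu)^+]\le\Expect^{\pi,\bar G}[(\tau-\nu)^+]\le\sup_G\Expect^{\pi,G}[(\tau-\nu)^+]$; since $\bar{\tau}^*\in\mathbf{C}_\alpha$ by condition~(2) and the false-alarm probability does not depend on $G$, $\bar{\tau}^*$ attains the minimax value in \eqref{eq:robustProb}.

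The right-hand inequality is the easy half. The statistic $\bar p_n$ and the periodic threshold sequence defining $\bar{\tau}^*$ in \eqref{eq:LFLshir}--\eqref{eq:LFPbelief} are exactly those of the exactly optimal rule of Theorem~\ref{thm:PeriodicThresOpt} for the \emph{known} post-change law $\bar G$, so $\bar{\tau}^*$ solves the Lagrangian problem \eqref{eq:ShirPOMDP} with post-change law $\bar G$. Since by condition~(2) its false-alarm probability equals $\alpha$, Lemma~\ref{lem:langtoconst} applied with $G=\bar G$ shows $\bar{\tau}^*$ is optimal for $\inf_{\tau\in\mathbf{C}_\alpha}\Expect^{\pi,\bar G}[(\tau-\nu)^+]$, which is exactly the right-hand inequality.

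The left-hand inequality, namely that $\bar G$ is the worst-case post-change law for $\bar{\tau}^*$, is the main obstacle, and this is where the stochastic boundedness assumption \eqref{eq:stocbounded} enters, through a coupling-and-monotonicity argument. First I would write the belief recursion \eqref{eq:LFPbelief} as $\bar p_n=\psi(\bar p_{n-1},\Lambda_n)$ with $\Lambda_n=\bar g_n(X_n)/f_n(X_n)$, where $\psi(p,\lambda)=\tilde p\lambda/(\tilde p\lambda+1-\tilde p)$ and $\tilde p=p+(1-p)\rho$; a direct computation shows $\psi$ is nondecreasing in $p$ and in $\lambda$. Fixing $\nu=k$, I would build on one probability space two sequences $(L_n^{(G)})_n$ and $(L_n^{(\bar G)})_n$ of reference log-likelihood ratios $\log\Lambda_n$ such that they coincide for $n<k$ with law $\mathcal{L}(\log(\bar g_n/f_n),f_n)$; for $n\ge k$ have laws $\mathcal{L}(\log(\bar g_n/f_n),g_n)$ and $\mathcal{L}(\log(\bar g_n/f_n),\bar g_n)$ respectively; are independent across $n$; and satisfy $L_n^{(G)}\ge L_n^{(\bar G)}$ pointwise. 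The slot-wise coupling exists by \eqref{eq:stocbounded} (e.g. the quantile coupling, whose well-definedness uses the continuity in condition~(3)), and the independence across $n$ is inherited from the i.p.i.d. structure; with these marginals the trajectory $(\bar p_n)_n$ computed from $(L_n^{(G)})_n$ (resp.\ from $(L_n^{(\bar G)})_n$) has exactly the law it has under $\Prob_k^{G}$ (resp.\ under $\Prob_k^{\bar G}$). Since $\bar p_{k-1}$ agrees on the two streams and $\Lambda_n^{(G)}\ge\Lambda_n^{(\bar G)}$ for $n\ge k$, monotonicity of $\psi$ and induction on $n$ give $\bar p_n^{(G)}\ge\bar p_n^{(\bar G)}$ for all $n$; hence the first crossing of the periodic thresholds occurs no later under $G$, so $\bar{\tau}^{*,(G)}\le\bar{\tau}^{*,(\bar G)}$ almost surely on this space and $(\bar{\tau}^{*,(G)}-k)^+\le(\bar{\tau}^{*,(\bar G)}-k)^+$. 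Taking expectations gives $\Expect_k^{G}[(\bar{\tau}^*-k)^+]\le\Expect_k^{\bar G}[(\bar{\tau}^*-k)^+]$, and averaging over $k$ against the geometric prior $\pi$ yields $\Expect^{\pi,G}[(\bar{\tau}^*-\nu)^+]\le\Expect^{\pi,\bar G}[(\bar{\tau}^*-\nu)^+]$; since $\bar G$ itself lies in the family, this is the left-hand inequality, with equality attained at $G=\bar G$.

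I expect the delicate points to be: (i) assembling the slot-wise couplings for $i=1,\dots,T$ into a single coupling of the entire post-change trajectory while leaving the shared pre-change portion intact, and checking that the coupled $\bar p_n$ still carries the correct distribution under $\Prob_k^{G}$ and $\Prob_k^{\bar G}$ --- the independence in the i.p.i.d. model is precisely what makes this product coupling legitimate; and (ii) the continuity and measurability bookkeeping behind condition~(3) needed for the quantile coupling and for passing from the pathwise ordering of stopping times to the ordering of expectations. The remaining manipulations with the geometric prior are routine.
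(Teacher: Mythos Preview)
Your proposal is correct and follows the same saddle-point architecture as the paper: establish that $\bar G$ is worst-case for $\bar\tau^*$ (left inequality) and that $\bar\tau^*$ is optimal for $\bar G$ (right inequality, via Theorem~\ref{thm:PeriodicThresOpt} and Lemma~\ref{lem:langtoconst}), then combine. The only difference is in how the left inequality is argued. You build an explicit quantile coupling of the log-likelihood ratios slot by slot and propagate the pathwise ordering through the monotone belief recursion $\bar p_n=\psi(\bar p_{n-1},\Lambda_n)$ to get $\bar\tau^{*,(G)}\le\bar\tau^{*,(\bar G)}$ almost surely on the coupled space. The paper instead works at the level of tail probabilities: it writes the event $\{\bar\tau^*\le k+N\}$ conditional on $\mathcal{F}_{k-1}$ as $\{f(h_1(X_1),\dots,h_{k+N}(X_{k+N}))\ge 0\}$ for an explicit function $f$ that is continuous and coordinatewise nondecreasing in the $h_i=\log(\bar g_i/f_i)$, and then invokes Lemma~III.1 of \cite{unni-etal-ieeeit-2011} (preservation of stochastic order under such maps) to compare the two measures. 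Both routes rest on the same monotonicity-plus-stochastic-order idea and both use condition~(3); yours is self-contained and pathwise, while the paper's is more modular and makes the role of the conditioning on $\mathcal{F}_{k-1}$ (your ``shared pre-change portion'') explicit.
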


\medspace
\medspace
\medspace

\begin{proof}
	The key step in the proof is to show that for each $k \in \mathbb{N}$,
	\begin{equation}\label{eq:keystep}
		\begin{split}
			\Expect_k^{\bar{G}}&\left[(\bar{\tau}^* - k)^+ |\mathcal{F}_{k-1}\right] \succ \Expect_k^{G}\left[(\bar{\tau}^* - k)^+ | \mathcal{F}_{k-1}\right], \\
			&\; \text{ for all } G=(g_1, \dots g_T): g_i \in \mathcal{P}_i, \; i \leq T,
		\end{split}
	\end{equation}
	where $\mathcal{F}_{k-1}$ is the sigma algebra generated by observations $X_1, \dots, X_{k-1}$.
	If \eqref{eq:keystep} is true then we have for each $k \in \mathbb{N}$,
	\begin{equation}
		\begin{split}
			\Expect_k^{\bar{G}}&\left[(\bar{\tau}^* - k)^+\right] \geq \Expect_k^{G}\left[(\bar{\tau}^* - k)^+ \right] \\
			&\; \text{ for all } G=(g_1, \dots g_T): g_i \in \mathcal{P}_i, \; i \leq T.
		\end{split}
	\end{equation}

	Averaging over the prior on the change point, we get
	\begin{equation}
		\begin{split}
			&\Expect^{\pi,\bar{G}}\left[(\bar{\tau}^* - \nu)^+\right] =  \sum_k \pi_k \Expect_k^{\bar{G}}\left[(\bar{\tau}^* - k)^+\right] 
			\geq \sum_k \pi_k \Expect_k^{G}\left[(\bar{\tau}^* - k)^+ \right] =	\Expect^{\pi,G}\left[(\bar{\tau}^* - \nu)^+ \right], \\
			&\; \text{ for all } G=(g_1, \dots g_T): g_i \in \mathcal{P}_i, \; i \leq T.
		\end{split}
	\end{equation}
	The last equation gives
	\begin{equation}
		\begin{split}
			\Expect^{\pi,\bar{G}}&\left[(\bar{\tau}^* - \nu)^+\right]  \geq \Expect^{\pi,G}\left[(\bar{\tau}^* - \nu)^+ \right], \\
			&\; \text{ for all } G=(g_1, \dots g_T): g_i \in \mathcal{P}_i, \; i \leq T.
		\end{split}
	\end{equation}
	This implies that
	\begin{equation}
		\begin{split}
			\Expect^{\pi,\bar{G}}&\left[(\bar{\tau}^* - \nu)^+\right]  = \sup_{G: g_i \in \mathcal{P}_i, i \leq T} \Expect^{\pi,G}\left[(\bar{\tau}^* - \nu)^+ \right],
		\end{split}
	\end{equation}
 where we have equality because the law $\bar{G}$ belongs to the family considered on the right. 
	Now, if $\tau$ is any stopping rule satisfying the probability of false alarm constraint of $\alpha$, then since $\bar{\tau}^*$ is the optimal test for the LFL $\bar{G}$, we have (see Theorem~\ref{thm:PeriodicThresOpt} and Lemma~\ref{lem:langtoconst})
	\begin{equation}
		\begin{split}
			\sup_{G: g_i \in \mathcal{P}_i, i \leq T} \Expect^{\pi,G}\left[(\tau - \nu)^+ \right] &\geq \Expect^{\pi,\bar{G}}\left[(\tau - \nu)^+\right]
			\geq
			\Expect^{\pi,\bar{G}}\left[(\bar{\tau}^* - \nu)^+\right]  \\
			&= \sup_{G: g_i \in \mathcal{P}_i, i \leq T} \Expect^{\pi,G}\left[(\bar{\tau}^* - \nu)^+ \right].
		\end{split}
	\end{equation}
	The last equation proves the robust optimality of the stopping rule $\bar{\tau}^*$ for the problem in \eqref{eq:robustProb}.
	
	We now prove the key step \eqref{eq:keystep}. Towards this end, we prove
	that for every integer $N\geq 0$,
	\begin{equation}\label{eq:keystep2}
		\begin{split}
			\Prob_k^{\bar{G}}\left[(\bar{\tau}^* - k)^+ > N |\mathcal{F}_{k-1}\right] &\geq \Prob_k^{G}\left[(\bar{\tau}^* - k)^+ > N| \mathcal{F}_{k-1}\right], \\
			&\quad \quad \quad \; \text{ for all } G=(g_1, \dots g_T): g_i \in \mathcal{P}_i, \; i \leq T.
		\end{split}
	\end{equation}
	This is trivially true for $N=0$ since event $\{(\bar{\tau}^* - k)^+ > 0\}$ is $\mathcal{F}_{k-1}$-measurable. So we only prove it for $N \geq 1$. 
 Towards this end, we first have
	\begin{equation}\label{eq:temp1}
		\begin{split}
			\Prob_k^{\bar{G}}&\left[(\bar{\tau}^* - k)^+ \leq N |\mathcal{F}_{k-1}\right] 
			= \Prob_k^{\bar{G}}\left[\bar{\tau}^* \leq k+N |\mathcal{F}_{k-1}\right] \\
			&\quad = \Prob_k^{\bar{G}}\left[f(h_1(X_1), h_2(X_2), \dots, h_{k+N}(X_{k+N})) \; \geq \; 0 \; | \; \mathcal{F}_{k-1}\right],
		\end{split}
	\end{equation}
	where 
 $$
 h_i(x) = \log \frac{\bar{g}_i(x)}{f_i(x)}, 
 $$ the function $f(z_1, z_2, \dots, z_{N})$ is given by
	\begin{equation}
		\begin{split}
			f&(z_1, z_2, \dots, z_N) 
			= \max_{1 \leq n \leq N} \left(\sum_{i=1}^n (1-\rho)^{k-1}\rho \; \exp \left(\sum_{i=k}^n z_i\right)-B_n\right),
		\end{split}
	\end{equation}
 and 
 $$
 B_n = \frac{A_n}{1-A_n} (1-\rho)^n.
 $$ Now recall from \eqref{eq:stocbounded} that 
 \begin{equation}
 		\begin{split}
			\mathcal{L}\left(h_i(X_i), g_i\right) &\succ 	\mathcal{L}\left(h_i(X_i),
			\bar{g}_i\right), \quad \text{for all } \; g_i \in \mathcal{P}_i, \quad i=1,2, \dots, T.
		\end{split}
	\end{equation}
 Since the function $f$ is continuous (being the maximum of continuous functions) and non-decreasing in each of its arguments, Lemma III.1 in \cite{unni-etal-ieeeit-2011} implies that
	\begin{equation}\label{eq:temp2}
		\begin{split}
			\Prob_k^{\bar{G}}&\left[f(h_1(X_1), h_2(X_2), \dots, h_{k+N}(X_{k+N})) \; \geq \; 0 \; |\; \mathcal{F}_{k-1}\right] 
			\\
   &\leq \Prob_k^{G}\left[f(h_1(X_1), h_2(X_2), \dots, h_{k+N}(X_{k+N})) \; \geq \; 0 \; |\; \mathcal{F}_{k-1}\right], \\
   &\quad \quad \text{ for all } G=(g_1, \dots g_T): g_i \in \mathcal{P}_i, \; i \leq T. 
		\end{split}
	\end{equation}
	Equations \eqref{eq:temp1} and \eqref{eq:temp2} combined gives
	\begin{equation}\label{eq:temp3}
		\begin{split}
			\Prob_k^{\bar{G}}\left[(\bar{\tau}^* - k)^+ \leq N \; | \; \mathcal{F}_{k-1}\right] 
			 &= \Prob_k^{\bar{G}}\left[\bar{\tau}^* \leq k+N \; |\; \mathcal{F}_{k-1}\right] \\
			& = \Prob_k^{\bar{G}}\left[f(h_1(X_1), h_2(X_2), \dots, h_{k+N}(X_{k+N})) \geq 0 \; |\; \mathcal{F}_{k-1}\right] \\
			&\leq \Prob_k^{G}\left[f(h_1(X_1), h_2(X_2), \dots, h_{k+N}(X_{k+N})) \geq 0 \; | \; \mathcal{F}_{k-1}\right] \\
   &= \Prob_k^{G}\left[\bar{\tau}^* \leq k+N \; |\; \mathcal{F}_{k-1}\right] \\
			&=\Prob_k^{{G}}\left[(\bar{\tau}^* - k)^+ \leq N \; |\; \mathcal{F}_{k-1}\right], \\
   &\quad \quad \quad \text{ for all } G=(g_1, \dots g_T): g_i \in \mathcal{P}_i, \; i \leq T.
		\end{split}
	\end{equation}
	This proves \eqref{eq:keystep2} and hence \eqref{eq:keystep}.
\end{proof}

\medspace
\medspace
\medspace

\section{Quickest Joint Detection and Classification}
\label{sec:QCDFaultIsolation}

\subsection{Joint Detection and Classification Formulation}
We assume that in a normal regime, the data can be modeled as an i.p.i.d. process with the law $(g_1^{(0)}, \cdots, g_T^{(0)})$. At some point in time $\nu$,
the law of the i.p.i.d. process is governed not by the densities $(g_1^{(0)}, \cdots, g_T^{(0)})$,
but by one of the densities $(g_1^{(\ell)}, \cdots, g_T^{(\ell)} )$,  $\ell=1,2,\dots, M$, with
$$
g_{n+T}^{(\ell)} = g_{n}^{(\ell)},  \quad \forall n \geq 1, \quad \ell=1,2,\dots, M.
$$
Specifically, at the time point $\nu$, the distribution of the random variables change from $\{g_n^{(0)}\}$ to $\{g_n^{(\ell)} \}$: for some $\ell=1,2,\dots, M$,
\begin{equation}\label{eq:changepointmodelfault}
	X_n \sim
	\begin{cases}
		g_n^{(0)}, &\quad \forall n < \nu, \\
		g_n^{(\ell)}  &\quad \forall n \geq \nu.
	\end{cases}
\end{equation}

We want to detect the change described in \eqref{eq:changepointmodelfault} as quickly as possible, subject to a constraint on the rate of false alarms and on the probability of misclassification. Mathematically, we are looking for a pair $(\tau, \delta)$, where $\tau$ is stopping time, i.e.,
$$
\{\tau \leq n\} \in \sigma(X_1, X_2, \dots, X_n),
$$
and $\delta$ is a decision rule, i.e., a map such that
$$
\delta(X_1, X_2, \dots, X_\tau) \in \{1,2,\dots, M\}.
$$
Let $\Prob_\nu^{(\ell)}$ denote the
probability law of the process $\{X_n\}$ when the change occurs at time $\nu$ and the post-change law is $(g_1^{(\ell)}, \cdots, g_T^{(\ell)} )$. We let $\Expect_\nu^{(\ell)}$ denote the corresponding expectation. When there is no change,
we use the notation $\Expect_\infty$.
The problem of interest is as follows \cite{lai2000faultiso}:
\begin{equation}\label{eq:Lorden}
	\begin{split}
		\min_{\tau, \delta} &\;\;\max_{1 \leq \ell \leq M} \sup_{\nu\geq 1} \; \text{ess} \sup \Expect_\nu^{(\ell)}  [(\tau - \nu + 1)^+| X_1, \cdots, X_{\nu-1}],\\
		\text{subj. to}& \; \; \;\Expect_\infty[\tau] \geq \beta, \\
		\text{and }& \;\;\;\Prob_1^{(\ell)}[\tau < \infty, \delta\neq \ell ] \leq a_\beta \; \Expect_1^{(\ell)}[\tau], \; \quad \ell=1,2,\dots, M, \\
		& \quad \quad \text{where } \log a_\beta^{-1} \sim \log \beta, \; \text{ as } \beta \to \infty.
	\end{split}
\end{equation}
Here $\text{ess} \sup$ is the essential supremum of the random variable $\Expect_\nu^{(\ell)}  [(\tau - \nu + 1)^+| X_1, \cdots, X_{\nu-1}]$, i.e., the smallest constant dominating the random variable with probability one. Here and below, for two functions $h(\beta)$ and $f(\beta)$ of $\beta$, we use $f(\beta) \sim h(\beta)$, as $\beta \to \infty$, to denote that the ratio of the two functions goes to $1$ in the limit.
Further motivation for this and other problem formulations for change point detection and isolation can be found in the literature \cite{tart-niki-bass-2014}, \cite{lai2000faultiso}, \cite{niki-ieeetit-2003}.

\subsection{Algorithm for Detection when $M=1$}
When $M=1$, i.e., when there is only one post-change i.p.i.d. law, then an algorithm that is asymptotically optimal for detecting a change in the distribution is the periodic-CUSUM algorithm proposed in \cite{bane-icassp-2019}. In this algorithm, we compute the sequence of statistics
\begin{equation}\label{eq:PeriodicCUSUM}
	W_{n+1} = W_n^{+} + \log \frac{g_{n+1}^{(1)}(X_{n+1})}{g_{n+1}^{(0)}(X_{n+1})}
\end{equation}
and raise an alarm as soon as the statistic is above a threshold $A$:
\begin{equation}\label{eq:PeriodicCUSUMstop}
	\tau_c = \inf \{n \geq 1: W_n \geq A\}.
\end{equation}

Define
\begin{equation}\label{eq:KLnumberfault}
	I_{10} =  \frac{1}{T}\sum_{i=1}^T D(g_i^{(1)} \; \| \; g_i^{(0)}),
\end{equation}
where $D(g_i^{(1)} \; \| \; g_i^{(0)})$ is the Kullback-Leibler divergence between the densities $g_i^{(1)} $ and $g_i^{(0)}$. Then, the following result is proved in \cite{bane-icassp-2019}.

\begin{theorem}[\cite{bane-icassp-2019}]\label{thm:LB_fault}
	Let the information number $I_{10}$ as defined in \eqref{eq:KLnumberfault} satisfy $0 < I_{10} < \infty$. Then, with $A=\log \beta$,
	$$\Expect_\infty[\tau_c] \geq \beta,
	$$
	and as $\beta \to \infty$,
	\begin{equation}\label{eq:UB}
		\begin{split}
			&\sup_{\nu\geq 1} \; \text{ess} \sup \Expect_\nu [(\tau_c - \nu +1^+| X_1, \cdots, X_{\nu-1}] \\
			&\sim\inf_{\tau: \Expect_\infty[\tau] \geq \beta} \; \sup_{\nu\geq 1} \; \text{ess} \sup \Expect_\nu [(\tau - \nu+1)^+ | X_1, \cdots, X_{\nu-1}]\\
			&\sim  \frac{\log \beta}{I_{10}}.
		\end{split}
	\end{equation}
\end{theorem}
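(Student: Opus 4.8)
The plan is to prove the two halves of the claim separately and then match them. First I would establish the average-run-length bound $\Expect_\infty[\tau_c]\ge\beta$ for $A=\log\beta$; then the worst-case delay upper bound $\esssup\,\Expect_\nu[(\tau_c-\nu+1)^+\mid X_1,\dots,X_{\nu-1}]\le\frac{\log\beta}{I_{10}}(1+o(1))$; and finally the universal converse $\inf_{\tau:\Expect_\infty[\tau]\ge\beta}\esssup\,\Expect_\nu[(\tau-\nu+1)^+\mid X_1,\dots,X_{\nu-1}]\ge\frac{\log\beta}{I_{10}}(1-o(1))$, which together close the chain of $\sim$ relations in \eqref{eq:UB}. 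Throughout, write $Z_n=\log\frac{g_n^{(1)}(X_n)}{g_n^{(0)}(X_n)}$, so that the recursion \eqref{eq:PeriodicCUSUM} unrolls to $W_n=\max_{0\le k\le n}\sum_{i=k+1}^{n}Z_i\ge 0$.

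For the false-alarm bound, note that under $\Prob_\infty$ each $Z_i$ is a genuine slot-$i$ log-likelihood-ratio increment, so $\Expect_\infty[e^{Z_i}\mid\mathcal F_{i-1}]=1$ and, for each fixed restart epoch $j$, $M_k^{(j)}:=\exp\!\big(\sum_{i=j}^{k}Z_i\big)$ is a nonnegative $\Prob_\infty$-martingale of unit mean. I would then run the classical CUSUM renewal argument: by a maximal inequality for $M^{(j)}$, the probability that a single excursion of $W$ (between successive returns to $0$) crosses level $A$ is at most $e^{-A}$; since each excursion has length at least one and crossings succeed with probability at most $e^{-A}$, a Wald-type count of excursions gives $\Expect_\infty[\tau_c]\ge e^{A}=\beta$. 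The periodic (non-i.i.d.) structure is harmless here because the only ingredient used is the slot-wise identity $\Expect_\infty[e^{Z_i}\mid\mathcal F_{i-1}]=1$; alternatively one may invoke the general non-i.i.d.\ ARL bound of Lai.

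For the delay upper bound I would exploit the restart property of CUSUM. Since $W_n=\max_{0\le k\le n}\sum_{i=k+1}^{n}Z_i$, taking $k=\nu-1$ gives $W_m\ge\sum_{i=\nu}^{m}Z_i$ for $m\ge\nu$, hence $(\tau_c-\nu+1)^+\le\tilde\tau:=\inf\{n\ge1:\sum_{i=\nu}^{\nu+n-1}Z_i\ge A\}$; and because the conditional law of $(X_\nu,X_{\nu+1},\dots)$ given $\mathcal F_{\nu-1}$ under $\Prob_\nu$ is the post-change product law, the conditional law of $\tilde\tau$ does not depend on $X_1,\dots,X_{\nu-1}$, which disposes of the essential supremum. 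It remains to bound $\Expect_\nu[\tilde\tau]$: group the post-change increments into blocks of length $T$, so that the block sums $\sum_{j=1}^{T}Z_{\nu+mT+j-1}$ are i.i.d.\ with common mean $\sum_{i}D(g_i^{(1)}\|g_i^{(0)})=T I_{10}$ for every phase; by the strong law $\frac1n\sum_{i=\nu}^{\nu+n-1}Z_i\to I_{10}$ a.s., and a renewal/Wald estimate (exactly as in the i.i.d.\ CUSUM proof, e.g.\ Lorden) upgrades this to $\Expect_\nu[\tilde\tau]\le\frac{A}{I_{10}}+O(1)=\frac{\log\beta}{I_{10}}(1+o(1))$, uniformly in $\nu$. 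For the converse, I would use the standard change-of-measure lower bound: with $m=\lceil\frac{\log\beta}{I_{10}}(1-\varepsilon)\rceil$, combine the $\Prob_\infty$-to-$\Prob_1$ likelihood-ratio identity, the constraint $\Expect_\infty[\tau]\ge\beta$, and a maximal inequality applied to the post-change log-LR walk to show $\Prob_1(\tau<m)\to0$, whence $\esssup\,\Expect_\nu[(\tau-\nu+1)^+\mid\cdots]\ge\Expect_1[(\tau-1)^+]\ge m(1-o(1))$.

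The step I expect to be the main obstacle is the delay upper bound: converting the block-wise strong law for the periodically modulated random walk into a clean bound on $\Expect_\nu[\tilde\tau]$ with an $O(1)$ (not merely $o(\log\beta)$) correction, uniformly over all phases $\nu\bmod T$, requires either a nonlinear-renewal-theory argument or a uniform-integrability/overshoot estimate for the modulated walk. The ARL bound and the information-theoretic converse are, by contrast, fairly direct transcriptions of the i.i.d.\ arguments once the slot-wise martingale identity and the $T$-block i.i.d.\ decomposition are in place.
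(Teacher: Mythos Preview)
The paper does not contain a proof of this theorem: it is quoted verbatim from \cite{bane-icassp-2019}, preceded by ``the following result is proved in \cite{bane-icassp-2019}.'' So there is no in-paper proof to compare against. What can be inferred from the surrounding material is the style of argument the authors favor: in the immediately following Theorem~\ref{thm:isolation} (the $M>1$ generalization), the proof consists of verifying Lai's conditions A1 and A2 for the periodic log-likelihood increments by exploiting the i.p.i.d.\ structure (reducing the $\sup_\nu$ to a maximum over $T$ phases, then using the law of large numbers), and then invoking Theorem~4 of \cite{lai2000faultiso}. The $M=1$ case in \cite{bane-icassp-2019} is almost certainly handled the same way, appealing to Lai's general non-i.i.d.\ CUSUM theory rather than redoing the renewal/Wald/change-of-measure machinery from scratch.

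Your proposal is sound and more self-contained than that route: you reconstruct the three classical pillars (martingale ARL bound, CUSUM restart plus renewal for the delay upper bound, change-of-measure for the converse) directly, using the slot-wise identity $\Expect_\infty[e^{Z_i}\mid\mathcal F_{i-1}]=1$ and the $T$-block i.i.d.\ decomposition as the only places where periodicity enters. One minor over-worry: for the asymptotic statement $\sim\log\beta/I_{10}$ you only need an $o(\log\beta)$ correction in the delay upper bound, not $O(1)$, so the uniform-integrability/overshoot issue you flag is milder than you suggest; the block-wise strong law together with the finiteness of $I_{10}$ and the maximum over $T$ phases already delivers this. Your approach buys a transparent, elementary proof; the paper's (inferred) approach buys brevity by outsourcing to Lai's framework.
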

Thus, the periodic-CUSUM algorithm is asymptotically optimal for detecting a change in the distribution, as the false alarm constraint $\beta \to \infty$. Further, since the set of pre- and post-change densities $(g_1^{(0)}, \cdots, g_T^{(0)} )$ and $(g_1^{(1)}, \cdots, g_T^{(1)} )$ are finite,
the recursion in \eqref{eq:PeriodicCUSUM} can be computed using finite memory needed to store these $2T$ densities.

\subsection{Algorithm for Joint Detection and Classification}\label{sec:FalseIso}
When the possible number of post-change distributions $M > 1$ and when we are also interested in accurately classifying the true post-change law, the periodic-CUSUM algorithm is not sufficient. We now propose an algorithm that can perform joint detection and classification.

For $\ell = 1, \dots, M$, define the stopping times
\begin{equation}\label{eq:tauell}
	\begin{split}
		\tau_\ell &= \inf\left\{n \geq 1: \max_{1 \leq k \leq n} \; \min_{0 \leq m \leq M, m \neq \ell} \;  \sum_{i=k}^n \log \frac{g_{i}^{(\ell)}(X_{i})}{g_{i}^{(m)}(X_{i})} \geq A \right\}.
	\end{split}
\end{equation}
The stopping time and decision rule for our detection-classification problem is defined as follows:
\begin{equation}
	\begin{split}
		\tau_{dc} &= \min_{1 \leq \ell \leq M} \; \tau_\ell, \\
		\delta_{dc} &= \arg \min_{1 \leq \ell \leq M} \tau_\ell.
	\end{split}
\end{equation}
A window-limited version of the above algorithm is obtained by replacing each $\tau_\ell$ in \eqref{eq:tauell} by
\begin{equation}\label{eq:windowlimited}
	\begin{split}
		\tilde{\tau}_\ell &= \inf\left\{n: \max_{n-L_\beta \leq k \leq n} \; \min_{0 \leq m \leq M, m \neq \ell} \; \sum_{i=k}^n \log \frac{g_{i}^{(\ell)}(X_{i})}{g_{i}^{(m)}(X_{i})} \geq A \right\}
	\end{split}
\end{equation}
for an appropriate choice of window $L_\beta$ (to be specified in the theorem below). 

For $1 \leq \ell \leq M$ and $0 \leq m \leq M, \; m \neq \ell$, define
\begin{equation}\label{eq:KLnumberellm}
	I_{\ell m} = \frac{1}{T}\sum_{i=1}^T D(g_i^{(\ell)} \; \| \; g_i^{(m)}),
\end{equation}
and
\begin{equation}\label{eq:KLstar}
	I^* = \min_{1 \leq \ell \leq M}\; \min_{0 \leq m \leq M, m \neq \ell}  \; I_{\ell m}.
\end{equation}
Recall that we are looking for $(\tau, \delta)$ such that
\begin{equation}\label{eq:MFA}
	\begin{split}
		\Expect_\infty[\tau] \geq \beta (1+o(1)), \text{ as } \beta \to \infty,
	\end{split}
\end{equation}
and
\begin{equation}\label{eq:faultIsolProb}
	\begin{split}
		\Prob_1^{(\ell)}&[\tau < \infty, \delta\neq \ell ] \leq a_\beta \; \Expect_1^{(\ell)}[\tau], \; \quad \ell=1,2,\dots, M, \\
		&\quad \quad \text{where } \log a_\beta^{-1} \sim \log \beta, \; \text{ as } \beta \to \infty.
	\end{split}
\end{equation}
Let
\begin{equation}
	C_\beta = \{(\tau, \delta): \text{conditions in \eqref{eq:MFA}} \text{ and } \eqref{eq:faultIsolProb} \text{ hold}\}.
\end{equation}

\begin{theorem}\label{thm:isolation}
	Let the information number $I^*$ be as defined in \eqref{eq:KLstar} and satisfy $0 < I^* < \infty$. Then, with $A=\log 4M \beta$,
	$$
	(\tau_{dc}, \delta_{dc}) \in C_\beta.
	$$
	Also,
	\begin{equation}\label{eq:UBiso}
		\begin{split}
			&\max_{1 \leq \ell \leq M} \sup_{\nu\geq 1} \; \text{ess} \sup \Expect_\nu^{(\ell)}  [(\tau_{dc} - \nu + 1)^+| X_1, \cdots, X_{\nu-1}]\\
			&\sim \inf_{(\tau, \delta) \in C_\beta} \max_{1 \leq \ell \leq M} \sup_{\nu\geq 1} \; \text{ess} \sup \Expect_\nu^{(\ell)}  [(\tau - \nu + 1)^+| X_1, \cdots, X_{\nu-1}]\\
			&\sim  \frac{\log \beta}{I^*}, \; \text{ as } \beta \to \infty.
		\end{split}
	\end{equation}
	Finally, the window-limited version of the test \eqref{eq:windowlimited} also satisfies the same asymptotic optimality property as long as
	$$
	\lim \inf_{\beta \to \infty} \frac{L_\beta}{\log \beta} > \frac{1}{I^*}.
	$$
 This condition is satisfied, for example, by 
 $$
 L_\beta = \frac{\log \beta}{I^*}(1+ \epsilon)
 $$
 for any fixed $\epsilon > 0$.
\end{theorem}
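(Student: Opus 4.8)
The plan is to prove the three assertions --- that $(\tau_{dc},\delta_{dc})\in C_\beta$, that the worst-case conditional delay of $(\tau_{dc},\delta_{dc})$ attains $\frac{\log\beta}{I^*}(1+o(1))$, and that the window-limited rule in \eqref{eq:windowlimited} does the same --- by pairing a lower bound in the spirit of Theorem~\ref{thm:LB_fault} with an achievability analysis that adapts the detection--isolation arguments of \cite{lai2000faultiso} and the periodic-CUSUM technology of \cite{bane-icassp-2019} to the i.p.i.d.\ setting.

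\emph{Lower bound.} For every $(\tau,\delta)\in C_\beta$ and every index $\ell$ I would show the delay is at least $\frac{\log\beta}{I_{\ell m}}(1+o(1))$ for each $m\neq\ell$, and then minimize over $m$ and maximize over $\ell$. The case $m=0$ is exactly the lower-bound direction behind Theorem~\ref{thm:LB_fault}: a change of measure from $\Prob_\infty$ to $\Prob_\nu^{(\ell)}$, the constraint $\Expect_\infty[\tau]\geq\beta$, and the strong law $\frac1n\sum_i\log\frac{g_i^{(\ell)}(X_i)}{g_i^{(0)}(X_i)}\to I_{\ell 0}$ under $\Prob^{(\ell)}$ force a delay of at least $\frac{\log\beta}{I_{\ell 0}}(1+o(1))$. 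For $m\geq1$, the misclassification constraint $\Prob_1^{(\ell)}[\tau<\infty,\delta\neq\ell]\leq a_\beta\Expect_1^{(\ell)}[\tau]$ with $\log a_\beta^{-1}\sim\log\beta$, together with a change of measure between $\Prob_1^{(\ell)}$ and $\Prob_1^{(m)}$ and $\frac1n\sum_i\log\frac{g_i^{(\ell)}(X_i)}{g_i^{(m)}(X_i)}\to I_{\ell m}$, forces $\Expect_1^{(\ell)}[\tau]\geq\frac{\log\beta}{I_{\ell m}}(1+o(1))$. Taking the worst $(\ell,m)$ yields the bound $\frac{\log\beta}{I^*}(1+o(1))$ and hence the infimum in \eqref{eq:UBiso}.

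\emph{Achievability.} For the false-alarm bound, write $\tau_{dc}=\min_{\ell}\tau_\ell$; under $\Prob_\infty$ each increment $\log\frac{g_i^{(\ell)}(X_i)}{g_i^{(0)}(X_i)}$ has negative mean, so the exponential martingale identity gives $\Prob_\infty\bigl(\sum_{i=k}^n\log\frac{g_i^{(\ell)}(X_i)}{g_i^{(0)}(X_i)}\geq A\bigr)\leq e^{-A}$ for every $k\leq n$, and a union bound over $\ell\leq M$ and $k$ together with the renewal argument of \cite{bane-icassp-2019} gives $\Expect_\infty[\tau_{dc}]\geq\beta$ once $A=\log 4M\beta$, the factor $4M$ absorbing the union bound and the usual overshoot slack. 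For the misclassification bound, note $\{\delta_{dc}\neq\ell\}\subseteq\bigcup_{m\neq\ell}\{\tau_m\leq\tau_\ell\}$ and on $\{\tau_m<\infty\}$ the $\ell$-term of the inner minimum defining $\tau_m$ has crossed $A$, i.e.\ $\sum_{i=k}^{\tau_m}\log\frac{g_i^{(\ell)}(X_i)}{g_i^{(m)}(X_i)}\leq-A$ for some $k$; a change of measure from $\Prob_1^{(\ell)}$ to $\Prob_1^{(m)}$ and Doob's maximal inequality bound this probability by $O(\Expect_1^{(\ell)}[\tau_{dc}]\,e^{-A})$, which gives \eqref{eq:faultIsolProb} with $\log a_\beta^{-1}\sim A\sim\log\beta$. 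For the delay, fix $\ell$ and $\nu$: after $\nu$ the data is i.p.i.d.\ with law $(g_1^{(\ell)},\dots,g_T^{(\ell)})$, and taking $k=\nu$ in \eqref{eq:tauell} the statistic $\min_{m\neq\ell}\sum_{i=\nu}^n\log\frac{g_i^{(\ell)}(X_i)}{g_i^{(m)}(X_i)}$ has drift at least $I^*$ per step; a one-sided law of large numbers for the periodic increments, uniform over the phase $\nu\bmod T$, together with a Lorden-type argument that uses the $T$-step regeneration of the post-change process and that $\tau_\ell$ dominates the $k=\nu$ crossing time, yields $\sup_{\nu\geq 1}\esssup\,\Expect_\nu^{(\ell)}[(\tau_{dc}-\nu+1)^+\mid X_1,\dots,X_{\nu-1}]\leq\frac{A}{I^*}(1+o(1))=\frac{\log\beta}{I^*}(1+o(1))$. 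With the lower bound this is \eqref{eq:UBiso}.

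\emph{Window-limited rule, and the main obstacle.} Since $\liminf_\beta L_\beta/\log\beta>1/I^*$, for all large $\beta$ we have $L_\beta\geq\frac{\log\beta}{I^*}(1+\epsilon')$ for some $\epsilon'>0$, so the change epoch $\nu$ stays inside the active window $[n-L_\beta,n]$ for the $\frac{\log\beta}{I^*}(1+o(1))$ steps needed to reach $A$; the delay estimate is therefore unchanged to first order, while restricting $k$ to a window of length $O(\log\beta)$ can only tighten the false-alarm and misclassification bounds. The step I expect to be hardest is the delay upper bound: turning ``drift $\geq I^*$ per step'' into a rigorous bound on the essential supremum over \emph{all} change points $\nu$ simultaneously requires a law of large numbers for the non-identically-distributed but periodic log-likelihood-ratio increments that is uniform in the starting phase and strong enough --- an $r$-quick or exponential-rate version --- to control both the overshoot at the threshold $A$ and the inner minimum over the $M-1$ competing hypotheses; this is precisely where the i.p.i.d.\ structure goes beyond the i.i.d.\ analysis of \cite{lai2000faultiso}, and where the periodic-CUSUM estimates of \cite{bane-icassp-2019} supply the needed tools. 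The remaining estimates are then routine.
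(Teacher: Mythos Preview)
Your outline is essentially correct, but it reconstructs from scratch what the paper obtains in one stroke by citing \cite{lai2000faultiso}. The paper's proof does not separately establish the lower bound, the false-alarm bound, the misclassification bound, and the delay bound; instead it verifies that the log-likelihood ratios $Z_i(\ell,m)=\log\frac{g_i^{(\ell)}(X_i)}{g_i^{(m)}(X_i)}$ satisfy the two abstract conditions (A1) and (A2) of Lai's Theorem~4, and then invokes that theorem directly to obtain \emph{all} of the conclusions at once, including the window-limited statement.

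The entire technical content of the paper's proof is therefore what you correctly flagged as the ``main obstacle'': the uniform-in-$\nu$ law of large numbers. But the paper handles this far more simply than your proposal suggests. Because the process is i.p.i.d.\ and the post-change increments are independent of the pre-change data, the essential supremum over the past and the supremum over all $\nu\geq1$ collapse to a maximum over the $T$ phases $1\leq\nu\leq T$; condition (A1) then follows from the ordinary strong law $\frac1n\sum Z_i(\ell,m)\to I_{\ell m}$ and the elementary fact that if $a_n\to a$ then $\max_{t\leq n}\frac1n\sum_{i\leq t}a_i\to a$, and condition (A2) follows by the same finite-phase reduction. No $r$-quick convergence, exponential rates, or overshoot control is needed at this stage --- those are buried inside Lai's theorem.

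So your route would work, but it is considerably longer: you are re-deriving the content of \cite{lai2000faultiso} rather than checking its hypotheses. The paper's approach buys brevity and modularity; yours would buy self-containment, at the cost of repeating a substantial general argument.
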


%
%



\begin{proof}
	For $1 \leq \ell \leq M$ and $0 \leq m \leq M, \; m \neq \ell$, define
	$$
	Z_i(\ell, m) = \log \frac{g_{i}^{(\ell)}(X_{i})}{g_{i}^{(m)}(X_{i})}
	$$	
	to be the log-likelihood ratio at time $i$.
 In the rest of the proof, to write compact equations, we use $X_1^{\nu-1}$ to denote the vector
	$$
	X_1^{\nu-1} = (X_1, X_2, \dots, X_{\nu-1}).
	$$
	For each $1 \leq \ell \leq M$ and $0 \leq m \leq M, \; m \neq \ell$, we first show that the sequence $\{Z_i(\ell, m)\}$ satisfies the following statement:
	\begin{equation}\label{eq:thm1_1}
		\begin{split}
			\sup_{\nu \geq 1} \esssup \Prob_\nu^{(\ell)} & \left(  \max_{t \leq n} \sum_{i=\nu}^{\nu+t} Z_i(\ell,m) \geq I_{\ell m}(1+\delta)n \; \bigg| \;  X_1^{\nu-1}\right) \\
			& \quad \quad \quad \to 0, \text{ as } n \to \infty, \quad \forall \delta > 0,
		\end{split}
	\end{equation}
	where $I_{\ell m}$ is as defined in \eqref{eq:KLnumberellm}.
	
	Towards proving \eqref{eq:thm1_1}, note that as $n \to \infty$
	\begin{equation}\label{eq:thm1_2}
		\begin{split}
			\frac{1}{n}\sum_{i=\nu}^{\nu+n} Z_i(\ell, m)  \to I_{\ell m}, \quad \text{a.s. } \; \Prob_\nu^{(\ell)}, \; \; \forall \nu \geq 1.
		\end{split}
	\end{equation}
	The above display is true because of the i.p.i.d. nature of the observation process. This implies that as $n \to \infty$
	\begin{equation}\label{eq:thm1_3}
		\begin{split}
			\max_{t \leq n} \frac{1}{n}\sum_{i=\nu}^{\nu+t} Z_i(\ell, m) \to I_{\ell m}, \quad \text{a.s. }  \; \Prob_\nu^{(\ell)}, \; \; \forall \nu \geq 1.
		\end{split}
	\end{equation}
	To show this, note that
	\begin{equation}\label{eq:thm1_4}
		\begin{split}
			\max_{t \leq n}& \frac{1}{n}\sum_{i=\nu}^{\nu+t} Z_i(\ell, m)  = \max \left\{ \max_{t \leq n-1} \frac{1}{n}\sum_{i=\nu}^{\nu+t} Z_i(\ell, m), \; \;  \frac{1}{n}\sum_{i=\nu}^{\nu+n} Z_i(\ell, m)\right\}.
		\end{split}
	\end{equation}
	For a fixed $\epsilon > 0$, because of \eqref{eq:thm1_2}, the LHS in \eqref{eq:thm1_3} is greater than $I_{\ell m}(1-\epsilon)$ for $n$ large enough. Also, let the maximum on the LHS be achieved at a point $k_n$,
	then
	$$
	\max_{t \leq n} \frac{1}{n}\sum_{i=\nu}^{\nu+t} Z_i(\ell, m) = \frac{1}{n}\sum_{i=\nu}^{\nu+k_n} Z_i(\ell, m)  = \frac{k_n}{n} \frac{1}{k_n}\sum_{i=\nu}^{\nu+k_n} Z_i(\ell, m).
	$$
	Now $k_n$ cannot be bounded because the left-hand side in the above equation is lower bounded by $I_{\ell m}(1-\epsilon)$, and because of the presence of $n$ in the denominator on the right-hand side of the above equation. This implies $k_n > i$, for any fixed $i$, and $k_n \to \infty$. Thus, $\frac{1}{k_n}\sum_{i=\nu}^{\nu+k_n} Z_i(\ell, m) \to I_{\ell m }$. Since $k_n/n \leq 1$, we have that the LHS in \eqref{eq:thm1_3} is less than $I_{\ell m }(1+\epsilon)$, for $n$ large enough. This proves
	\eqref{eq:thm1_3}.
	To prove \eqref{eq:thm1_1}, note that due to the i.p.i.d. nature of the processes
	\begin{equation}\label{eq:thm1_5}
		\begin{split}
			\sup_{\nu \geq 1} \esssup \Prob_\nu^{(\ell)} & \left(  \max_{t \leq n} \sum_{i=\nu}^{\nu+t} Z_i(\ell,m) \geq I_{\ell m}(1+\delta)n \; \bigg| \;  X_1^{\nu-1}\right) \\
			=\sup_{1 \leq \nu \leq T} \Prob_\nu^{(\ell)} & \left(  \max_{t \leq n} \sum_{i=\nu}^{\nu+t} Z_i(\ell,m) \geq I_{\ell m}(1+\delta)n\right).
		\end{split}
	\end{equation}
	The right-hand side goes to zero because of \eqref{eq:thm1_3} and because the maximum on the right-hand side in \eqref{eq:thm1_5} is over only finitely many terms.
	
	Next, we show that the sequence $\{Z_i(\ell, m)\}$, for each $1 \leq \ell \leq M$ and $0 \leq m \leq M, \; m \neq \ell$, satisfies the following statement:
	\begin{equation}\label{eq:thm2_1}
		\begin{split}
			\lim_{n \to \infty} \sup_{k \geq \nu \geq 1} \esssup \; \Prob_\nu^{(\ell)} & \left(  \frac{1}{n} \sum_{i=k}^{k+n} Z_i(\ell, m) \leq I_{\ell m} - \delta \; \bigg| \; X_1^{k-1}\right) \\
			& = 0, \quad \forall \delta > 0.
		\end{split}
	\end{equation}
	To prove \eqref{eq:thm2_1}, note that due to the i.p.i.d nature of the process we have
	\begin{equation}\label{eq:thm2_2}
		\begin{split}
			\sup_{k \geq \nu \geq 1} &\esssup \; \Prob_\nu^{(\ell)}  \left(  \frac{1}{n} \sum_{i=k}^{k+n} Z_i(\ell,m) \leq I_{\ell m} - \delta \; \bigg| \; X_1^{k-1}\right) \\
   &=\sup_{k \geq \nu \geq 1}  \Prob_\nu^{(\ell)}  \left(  \frac{1}{n} \sum_{i=k}^{k+n} Z_i(\ell,m) \leq I_{\ell m} - \delta \; \right) \\
			& = \sup_{\nu + T \geq k \geq \nu \geq 1} \Prob_\nu^{(\ell)}  \left(  \frac{1}{n} \sum_{i=k}^{k+n} Z_i(\ell,m) \leq I_{\ell m } - \delta \right) \\
			& = \max_{1 \leq \nu \leq T} \max_{\nu \leq k \leq \nu+T} \Prob_\nu^{(\ell)}  \left(  \frac{1}{n} \sum_{i=k}^{k+n} Z_i(\ell, m ) \leq I_{\ell m} - \delta \right).
		\end{split}
	\end{equation}
	The right-hand side of the above equation goes to zero as $n\to \infty$ for any $\delta$ because of \eqref{eq:thm1_2} and also because of the finite number of maximizations. The theorem now follows from Theorem 4 of \cite{lai2000faultiso} because Conditions A1 and A2 from \cite{lai2000faultiso} are satisfied.
\end{proof}

\begin{remark}
	The algorithm and optimality are easily extended to multistream data as well, where there are a finite number of streams of observations, and only one stream is affected after the change. The goal is to detect the change and also to identify the affected stream with low probability.  
\end{remark}

\section{Multislot Quickest Change Detection}
\label{sec:multislot}


Let $(f_1, \cdots, f_T)$ and $(g_1, \cdots, g_T)$ represent the laws of two i.p.i.d processes with $f_i \neq g_i$, $\forall i$. We assume that the second-order moments of log-likelihood ratios are finite and positive:
\begin{equation}\label{eq:finitemeancond}\vspace{-0.2cm}
	\begin{split}
		\text{(M1)} \quad &0< \Expect_1 \left(\left|\log \frac{g_i(X_i)}{f_i(X_i)}\right|\right)  < \infty, \; i=1,2, \cdots, T \\
		\text{(V1)} \quad &0< \Expect_1 \left(\log \frac{g_i(X_i)}{f_i(X_i)}\right)^2  < \infty, \; i=1,2, \cdots, T.
		\quad \end{split}
\end{equation}
Here $\Expect_1$ denotes the expectation when the change occurs at time $\nu=1$.

In the multislot change detection problem, the change occurs in only a subset of the $T$ time slots in each period. To capture this we now introduce a new notation for the post-change law emphasizing the slots where the density changes.
For a subset $S \subset \{1,2,\dots, T\}$, define a possible post-change i.p.i.d. law as
\begin{equation}\label{eq:gSslotdensity}
	\begin{split}
		g_S &= (g_{S,1}, g_{S,2}, \dots, g_{S,T}),
	\end{split}
\end{equation}
with
\begin{equation}\label{eq:gSslotdensity_1}
	g_{S,i} = \begin{cases}
		g_i, \text{ if } \; i \in S \\
		f_i, \text{ if } \; i \not \in S.
	\end{cases}
\end{equation}
Note that
$$
g_S  = (g_1, \cdots, g_T), \; \text{ if } S = \{1,\dots, T\}.
$$
Thus, the set $S$ denotes the slots in which the change occurs:
\begin{equation}\label{eq:CPmultislot}
	X_n \sim
	\begin{cases}
		f_n, &\quad \forall n < \nu \\
		g_{S,n} &\quad \forall n \geq \nu,
	\end{cases}
\end{equation}
with the understanding that $g_{S,n+T} = g_{S,n}$, $\forall n$. This set is not known to the decision maker. However, it is known that
$$
S \in \mathcal{S} \subset 2^{\{1,\dots, T\}},
$$
i.e., $S$ belongs to a family of the subsets of the power set of $\{1,\dots,T\}$. For example,
$$
\mathcal{S} = \{S: |S| \leq m\},
$$
where $|S|$ denotes the size of set $S$. The algorithms that we will propose for change detection will be especially useful when $m \ll T$.

Let $\tau$ be a stopping time for the process $\{X_n\}$, i.e., a positive integer-valued random variable such that the event $\{\tau \leq n\}$ belongs
to the $\sigma$-algebra generated by $\{X_1, \cdots, X_n\}$. We model the change point $\nu$ as a random variable with a prior $\pi$:
$$
\pi_n = \Prob(\nu =n), \quad \text{ for } n = 1, 2, \cdots.
$$
Let $\Prob^S_n$ denotes the law of the observation process $\{X_n\}$ when the change occurs in slots $S$ at time $\nu=n$ and define
$$
\Prob^S_\pi = \sum_{n=1}^\infty \pi_n \; \Prob^S_n.
$$
We use $\Expect^S_n$ and $\Expect^S_\pi$ to denote the corresponding expectations.
For each $S \in \mathcal{S}$, we seek a solution to
\begin{equation}\label{eq:ShiryProbmultislot}
	\min_{\tau \in \mathbf{C}_\alpha} \Expect^S_\pi\left[ \tau - \nu | \tau \geq \nu \right],
\end{equation}
where
\begin{equation}\label{eq:Calpha}
	\mathbf{C}_\alpha = \{\tau: \Prob^S_\pi(\tau < \nu) \leq \alpha\},
\end{equation}
and $\alpha$ is a given constraint on the probability of a false alarm. In fact, we seek an algorithm that solves the above problem uniformly over every $S$.

\subsection{Algorithm for Multislot Change Detection}
Consider a mixing distribution or a probability mass function on the set $\mathcal{S}$:
$$
p_S \geq 0, \; \forall S \in \mathcal{S}, \quad \text{ and } \quad \sum_{S \in \mathcal{S}} p_S = 1.
$$
Define the mixture statistic
\begin{equation}
	\label{eq:ShirMixstat}
	R_n = \frac{1}{\Pi_n} \sum_{k=1}^n \pi_k \sum_{S \in \mathcal{S}} p_S \prod_{i=k}^n \frac{g_{S,i}(X_i)}{f_i(X_i)},
\end{equation}
where $\Pi_n = \Prob(\nu > n)$, and the stopping rule
\begin{equation}
	\label{eq:Shirmixturestop}
	\tau_{mps} = \inf\{n \geq 1: R_n > A\}.
\end{equation}
In the following, we call this algorithm the mixture periodic Shiryaev or the MPS algorithm. Note that
$$
R_n = \sum_{S \in \mathcal{S}} p_S \frac{1}{\Pi_n} \sum_{k=1}^n \pi_k \prod_{i=k}^n \frac{g_{S,i}(X_i)}{f_i(X_i)}.
$$
Thus, the statistic $R_n$ is a mixture of $|\mathcal{S}|$ periodic Shiryaev statistics (see \cite{bane-tit-2021} and Section~\ref{sec:TITreview}), one for each $S \in \mathcal{S}$. Thus, the statistic $R_n$ can be computed recursively and using finite memory for geometric prior $\pi = \text{Geom}(\rho)$ (see Lemma 5.1 in \cite{bane-tit-2021}).

\subsection{Lower Bound on Detection Delay}
In this section, we obtain a lower bound on the average detection delay for any stopping time that satisfies the constraint on the probability of false alarm \eqref{eq:Calpha}. We make the following assumptions.

\begin{itemize}
	\item[(A1)] Let there exist $d\geq 0$ such that
	\begin{equation}\label{eq:priortail_multi}
		\lim_{n \to \infty} \frac{\log \Prob(\nu > n)}{n} = -d.
	\end{equation}
	\item[(A2)] Also, let
	\begin{equation}\label{eq:priortaillogpi}
		\sum_{n = 1}^\infty \pi_n |\log \pi_n| < \infty.
	\end{equation}
\end{itemize}

If $\pi = \text{Geom}(\rho)$, then
$$
\frac{\log \Prob(\nu > n)}{n}  = \frac{\log (1-\rho)^n}{n} = \frac{n \log (1-\rho)}{n} = \log(1-\rho).
$$
Thus, $d = |\log(1-\rho)|$. In addition,
$$
\sum_{n = 1}^\infty \pi_n |\log \pi_n| = \frac{1-\rho}{\rho} \log \frac{1}{(1-\rho)} + \log \frac{1}{\rho}< \infty.
$$
Thus, conditions (A1) and (A2) above are satisfied by the geometric prior.

We first start with a lemma whose proof is elementary. Define
\begin{equation}\label{eq:ZiLL}
	Z_i = \log \frac{g_{S,i}(X_i)}{f_i(X_i)},
\end{equation}
and
\begin{equation}\label{eq:KLnumberS}
	I_S = \frac{1}{T}\sum_{i\in S} D(g_{S,i} \; \| \; f_i).
\end{equation}

\medspace
\begin{lemma}\label{lem:AlmostsureIS}
	For $Z_i$ defined in \eqref{eq:ZiLL} and $I_S$ defined in \eqref{eq:KLnumberS}, for each $S \in \mathcal{S}$ and as $n \to \infty$, we have
	\begin{equation}
		\frac{1}{n} \sum_{i=k}^{k+n-1} Z_i  \; \; \to \; \; I_S, \quad \Prob^S_k \; \text{ a.s.}, \; \forall k \geq 1.
	\end{equation}
\end{lemma}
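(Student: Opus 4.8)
The plan is to observe that, under $\Prob^S_k$, the family $\{Z_i\}_{i\ge k}$ consists of independent random variables whose laws are periodic in $i$ with period $T$, to split the sum into the $T$ arithmetic progressions determined by the slot index (residue of $i$ modulo $T$), to apply the classical strong law of large numbers on each progression — where the summands are i.i.d. — and then to reassemble the $T$ limits into the per-period average $I_S$. Concretely: under $\Prob^S_k$ the observations $X_k, X_{k+1},\dots$ are independent with $X_i\sim g_{S,i}$, so $\{Z_i\}_{i\ge k}$ is an independent sequence; and since $(f_1,\dots,f_T)$ and $(g_{S,1},\dots,g_{S,T})$ have period $T$, the law of $Z_i$ depends on $i$ only through the slot $r\in\{1,\dots,T\}$ to which $i$ belongs. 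Fixing $r$, the variables $Z_i$ along the progression $\{i\ge k:\ i \ \text{in slot}\ r\}$ are therefore i.i.d. If $r\notin S$ then $g_{S,i}=f_i$ there and $Z_i\equiv 0$; if $r\in S$ then $Z_i=\log\frac{g_i(X_i)}{f_i(X_i)}$ with $X_i\sim g_i$, so the common mean is $\mu_r:=\Expect^S_k[Z_i]=\int g_i\log\frac{g_i}{f_i}=D(g_i\,\|\,f_i)$, which is finite because $0\le D(g_i\,\|\,f_i)\le\int g_i\bigl|\log\frac{g_i}{f_i}\bigr|=\Expect_1\bigl|\log\frac{g_i(X_i)}{f_i(X_i)}\bigr|<\infty$ by (M1). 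In either case $\mu_r=D(g_{S,r}\,\|\,f_r)$, with the understanding that $g_{S,r}=f_r$ (hence $\mu_r=0$) for $r\notin S$.

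Next I would apply Kolmogorov's SLLN along each of the $T$ progressions, obtaining that the average of its first $N$ terms tends to $\mu_r$ almost surely under $\Prob^S_k$. Since the block $\{k,\dots,k+n-1\}$ contains $n/T+O(1)$ indices from each slot, writing $\frac1n\sum_{i=k}^{k+n-1}Z_i$ as the slot-averages weighted by their limiting frequencies $1/T$ and letting $n\to\infty$ gives
\[
\frac1n\sum_{i=k}^{k+n-1}Z_i \;\longrightarrow\; \frac1T\sum_{r=1}^{T}\mu_r \;=\; \frac1T\sum_{i\in S}D(g_{S,i}\,\|\,f_i) \;=\; I_S, \qquad \Prob^S_k\text{-a.s.}
\]
Nothing in the argument depends on $k$, so the convergence holds for every $k\ge 1$ and every $S\in\mathcal{S}$.

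I do not anticipate a genuine obstacle: the statement is the i.p.i.d. counterpart of the law-of-large-numbers fact already invoked in \eqref{eq:thm1_2}, and it could equally be derived from Kolmogorov's SLLN for independent (non-identically distributed) summands, using (V1) to conclude $\sup_{i}\mathrm{Var}^S_k(Z_i)<\infty$ (the variance takes only finitely many values) so that $\sum_i \mathrm{Var}^S_k(Z_i)/i^2<\infty$ and hence $\frac1n\sum_{i=k}^{k+n-1}\bigl(Z_i-\Expect^S_k[Z_i]\bigr)\to 0$ a.s., after which the deterministic Cesàro average of the periodic sequence of means converges to $I_S$. The only points that require a little care are the bookkeeping in the slot decomposition (the $O(1)$ boundary indices in each residue class) and verifying that (M1) and (V1), which are stated under the fully changed law $\Prob_1$, indeed yield finiteness of $\Expect^S_k|Z_i|$ and $\Expect^S_k[Z_i^2]$ for each $i\ge k$ and each $S$ — which holds because $g_{S,i}\in\{f_i,g_i\}$ and $Z_i\equiv0$ in the former case.
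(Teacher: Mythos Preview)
Your proposal is correct. The paper does not actually give a proof of this lemma, stating only that ``its proof is elementary''; your slot-decomposition argument (splitting into the $T$ residue classes and applying the i.i.d.\ SLLN on each) is precisely the intended elementary argument and matches the decomposition machinery the paper uses explicitly in the proof of Lemma~\ref{lem:completeconv}.
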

The lower bound is supplied by the following theorem.

\medspace
\medspace
\begin{theorem}\label{thm:LB_multislot}
	Let the information number $I_S$ be as defined in \eqref{eq:KLnumberS}.  Also, let the prior $\pi$ satisfy the condition (A1) in \eqref{eq:priortail_multi}.
	Then, for any stopping time $\tau \in \mathbf{C}_\alpha$, we
	have
	\begin{equation}
		\Expect^S_\pi\left[ \tau - \nu | \tau \geq \nu \right] \geq \frac{|\log \alpha| }{I_S + d}(1+o(1)), \quad \text{ as } \alpha \to 0.
	\end{equation}
	Here $o(1) \to 0$ as $\alpha \to 0$.
\end{theorem}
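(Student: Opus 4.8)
The plan is to deduce the bound from the known-post-change-law result. For a fixed $S\in\mathcal S$, the model \eqref{eq:CPmultislot} is an ordinary i.p.i.d.\ change-detection problem with pre-change law $(f_1,\dots,f_T)$ and post-change law $g_S$, whose information number $\frac{1}{T}\sum_{i=1}^T D(g_{S,i}\,\|\,f_i)$ equals $I_S$ of \eqref{eq:KLnumberS} because the terms with $i\notin S$ vanish. We may assume $0<I_S<\infty$, hence $0<I_S+d<\infty$: positivity holds since $g_i\neq f_i$ for every $i$, finiteness by (M1), and the cases $I_S=\infty$ (bound $=0$) and $I_S+d=0$ (bound vacuous) are trivial. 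I would then invoke Theorem~\ref{thm:LB} with $G=g_S$ --- its asymptotic statement uses only the prior-tail condition (A1), not a geometric prior --- to get $\inf_{\tau\in\mathbf{C}_\alpha}\Expect^S_\pi[(\tau-\nu)^+]=\frac{|\log\alpha|}{I_S+d}(1+o(1))$, and finish with the elementary observation that $\Prob^S_\pi(\tau\ge\nu)=1-\Prob^S_\pi(\tau<\nu)\ge 1-\alpha$ for every $\tau\in\mathbf{C}_\alpha$, so that
\[
\Expect^S_\pi[\tau-\nu\mid\tau\ge\nu]=\frac{\Expect^S_\pi[(\tau-\nu)^+]}{\Prob^S_\pi(\tau\ge\nu)}\ \ge\ \Expect^S_\pi[(\tau-\nu)^+]\ \ge\ \frac{|\log\alpha|}{I_S+d}\,(1+o(1)).
\]

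If instead a self-contained argument is wanted, one repeats the proof of Theorem~\ref{thm:LB} with $g_S$ substituted for the post-change law. Fix $\varepsilon\in(0,1)$ and set $N_\alpha=\lceil(1-\varepsilon)|\log\alpha|/(I_S+d)\rceil$. Since $\Expect^S_\pi[\tau-\nu\mid\tau\ge\nu]\ge N_\alpha\big(1-\alpha-\Prob^S_\pi(\nu\le\tau<\nu+N_\alpha)\big)$ by Markov's inequality applied to the conditional law (together with the conversion above), it suffices to show $\Prob^S_\pi(\nu\le\tau<\nu+N_\alpha)\to0$ as $\alpha\to0$; letting $\varepsilon\downarrow0$ then closes the argument.

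To establish that, I would write $\Prob^S_\pi(\nu\le\tau<\nu+N_\alpha)=\sum_k\pi_k\Prob^S_k(k\le\tau\le k+N_\alpha-1)$ and truncate the $k$-sum at a cutoff $k_\alpha$ with $k_\alpha\to\infty$ and $k_\alpha=o(|\log\alpha|)$ (the tail is $\le\Prob(\nu>k_\alpha)\to0$). For $k\le k_\alpha$, change measure from $\Prob^S_k$ to the no-change measure $\Prob_\infty$ on $\mathcal F_{k+N_\alpha-1}$ and work with the prior-augmented log-likelihood ratio $\lambda^k_n=\sum_{i=k}^n Z_i+\log(\Pi_{k-1}/\Pi_n)$, $Z_i$ as in \eqref{eq:ZiLL}, splitting $\{k\le\tau\le k+N_\alpha-1\}$ by whether $\max_{0\le j<N_\alpha}\lambda^k_{k+j}$ lies below or above the threshold $\theta_\alpha=(1+\varepsilon/2)(I_S+d)N_\alpha$. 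On the ``below'' event the likelihood ratio $\exp(\sum_{i=k}^n Z_i)=\exp(\lambda^k_n)\,\Pi_n/\Pi_{k-1}$ is $\le e^{\theta_\alpha}\,\Pi_n/\Pi_{k-1}$, and using $\sum_{k\le k_\alpha}\pi_k/\Pi_{k-1}=\sum_{k\le k_\alpha}(1-\Pi_k/\Pi_{k-1})\le k_\alpha$ together with $\sum_n\Prob_\infty(\tau=n)\Pi_n=\Prob^S_\pi(\tau<\nu)\le\alpha$, this part is $O(k_\alpha\,\alpha^{\,c})$ with $c=1-(1+\varepsilon/2)(1-\varepsilon)=\varepsilon/2+\varepsilon^2/2>0$, hence $\to0$. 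On the ``above'' event the contribution is $\le\sup_{k\le k_\alpha}\Prob^S_k(\max_{0\le j<N_\alpha}\lambda^k_{k+j}\ge\theta_\alpha)$, which $\to0$ because the $Z$-part satisfies the maximal strong law $\frac{1}{N}\max_{0\le j<N}\sum_{i=k}^{k+j}Z_i\to I_S$ a.s.\ $\Prob^S_k$ (Lemma~\ref{lem:AlmostsureIS}, upgraded exactly as in \eqref{eq:thm1_3}--\eqref{eq:thm1_4}, and uniform in $k$ since by periodicity it depends on $k$ only through $k\bmod T$), while the prior part is $\log(\Pi_{k-1}/\Pi_n)\le-\log\Pi_n\le(d+\eta)(k_\alpha+N_\alpha)=(d+\eta)N_\alpha(1+o(1))$ for $k\le k_\alpha$ by (A1), so that typically $\max_{0\le j<N_\alpha}\lambda^k_{k+j}\le(I_S+d)N_\alpha(1+o(1))<\theta_\alpha$ once $\eta$ is small.

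The step I expect to be the main obstacle is this coordination in the second route: choosing $\theta_\alpha$ strictly between the typical value $(I_S+d)N_\alpha$ of $\max_{0\le j<N_\alpha}\lambda^k_{k+j}$ and $|\log\alpha|=(I_S+d)N_\alpha/(1-\varepsilon)$, and truncating the $k$-sum at $k_\alpha=o(N_\alpha)$. The truncation is forced because (A1) is only a Ces\`aro statement about $\frac{1}{n}\log\Pi_n$, so the prior term in $\lambda^k$ is not controlled uniformly over all base points $k$, and because the strong law for the $Z_i$'s needs $N_\alpha\to\infty$. Everything else --- the Markov reduction, the conversion between $\Expect[(\tau-\nu)^+]$ and the conditional delay, the maximal strong law, and the change of measure --- is routine and exactly as in the proof of Theorem~\ref{thm:LB} in \cite{bane-tit-2021} (cf.\ also \cite{tart-niki-bass-2014}); the only genuinely new ingredient for the multislot setting is that the slotted law $g_S$ has information number $I_S$, which is supplied by Lemma~\ref{lem:AlmostsureIS}.
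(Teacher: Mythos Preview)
Your proposal is correct, and your first route---reducing to the known-post-change-law result (Theorem~\ref{thm:LB}, equivalently Theorem~5.1 of \cite{bane-tit-2021}) with $G=g_S$ via Lemma~\ref{lem:AlmostsureIS}, then converting $\Expect^S_\pi[(\tau-\nu)^+]$ to the conditional delay---is exactly the paper's one-line proof. The second, self-contained route you sketch is a sound unpacking of that cited result but is not needed here.
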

\begin{proof}
	The result follows from Lemma~\ref{lem:AlmostsureIS} above and Theorem 5.1 in \cite{bane-tit-2021} (see also Section~\ref{sec:TITreview}).
\end{proof}

\medspace
\medspace

\subsection{Optimality of the MPS Algorithm}
We now show that the MPS algorithm \eqref{eq:Shirmixturestop} is asymptotically optimal for problem \eqref{eq:ShiryProbmultislot} for each post-change slots $S \in \mathcal{S}$, as the false
alarm constraint $\alpha \to 0$. We first prove an important lemma.

Define
\begin{equation}\label{eq:gammakcomplete}
	\gamma_k(\epsilon) = \sum_{n=1}^{\infty} \Prob^S_k \left(\bigg| \frac{1}{n}  \sum_{i=k}^{k+n-1} Z_i - I_S \bigg| > \epsilon\right),
\end{equation}
where $Z_i$ is as defined in \eqref{eq:ZiLL}.

\medspace
\begin{lemma}\label{lem:completeconv}
	For every $\epsilon > 0$,
	\begin{equation}
		\sum_{k=1}^\infty \pi_k \gamma_k(\epsilon) < \infty,
	\end{equation}
	where $\gamma_k(\epsilon) $ is defined in \eqref{eq:gammakcomplete}.
\end{lemma}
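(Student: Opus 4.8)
The plan is to reduce the claim to a complete-convergence statement for partial sums of the i.p.i.d.\ log-likelihood ratio sequence. The inner quantity $\gamma_k(\epsilon)$ is, up to a shift of the starting index, the same object for every $k$ in a fixed residue class modulo $T$: under $\Prob^S_k$, the block $(Z_k, Z_{k+1}, \dots)$ has a distribution that depends on $k$ only through $k \bmod T$. Hence there is a finite collection of functions $\gamma^{(r)}(\epsilon)$, $r = 0, 1, \dots, T-1$, with $\gamma_k(\epsilon) = \gamma^{(k \bmod T)}(\epsilon)$, and it suffices to show each $\gamma^{(r)}(\epsilon) < \infty$; then $\sum_k \pi_k \gamma_k(\epsilon) \le (\max_r \gamma^{(r)}(\epsilon)) \sum_k \pi_k = \max_r \gamma^{(r)}(\epsilon) < \infty$, using that $\pi$ is a probability distribution. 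So condition (A2) is not even needed for this lemma; only the finiteness of each $\gamma^{(r)}$ matters.

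To prove $\gamma^{(r)}(\epsilon) = \sum_{n\ge 1}\Prob^S_r\big(|\frac1n\sum_{i=r}^{r+n-1}Z_i - I_S| > \epsilon\big) < \infty$, I would invoke the classical Hsu--Robbins--Erd\H{o}s complete-convergence theorem, or more precisely its extension to sums of independent (but not identically distributed) random variables with uniformly bounded second moments (Baum--Katz type results). The summands $Z_i$ are independent under $\Prob^S_r$; by assumption (V1) their second moments are finite, and since the distribution of $Z_i$ cycles through only $T$ possibilities, $\sup_i \Expect^S_r[Z_i^2] < \infty$. By Lemma~\ref{lem:AlmostsureIS}, the Ces\`aro means $\frac1n\sum_{i=r}^{r+n-1}Z_i$ converge to $I_S$ a.s., so $I_S$ is the correct centering constant; equivalently, $\frac1n\sum_{i=r}^{r+n-1}(Z_i - \Expect^S_r Z_i) \to 0$ and the deterministic means average to $I_S$. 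The complete-convergence estimate then gives $\sum_n \Prob^S_r(|\frac1n\sum(Z_i - \Expect Z_i)| > \epsilon/2) < \infty$, and the deterministic part $|\frac1n\sum \Expect^S_r Z_i - I_S|$ is eventually below $\epsilon/2$, so only finitely many additional terms are contributed. Concretely one can bound $\Prob^S_r(|S_n - \Expect S_n| > n\epsilon/2)$ by a fourth-moment Markov inequality (Chebyshev at the fourth moment, using that fourth moments are also uniformly bounded over the $T$ classes — or truncate and use a Fuk--Nagaev/Bernstein bound under only the second-moment hypothesis) to obtain a summable-in-$n$ bound of order $n^{-2}$.

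The main obstacle is ensuring that the moment hypotheses actually in force — (M1) and (V1), which only give finite first and second moments of $Z_i$ — suffice to push the tail probabilities $\Prob(|S_n - \Expect S_n| > n\epsilon)$ below a summable rate. With only second moments, Chebyshev gives $O(1/n)$, which is \emph{not} summable, so a naive argument fails; one must either strengthen the integrability assumption to a $(1+\delta)$-th power of the second moment (i.e.\ $\Expect|Z_i|^{2+\delta}<\infty$), as in the standard Baum--Katz theorem, or use a truncation argument: split $Z_i = Z_i\mathbf{1}\{|Z_i|\le n\} + Z_i\mathbf{1}\{|Z_i|>n\}$, control the truncated part by Bernstein's inequality (giving exponential-in-$n$ decay once the variance is bounded), and control the untruncated part via $\sum_n \Prob^S_r(|Z_i| > n \text{ for some } i \le C n) \le \sum_n Cn\,\Prob(|Z_1|>n) \lesssim \Expect[Z_1^2] < \infty$. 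This truncation route uses exactly (V1) and closes the gap; I would present it in that form, noting that it is precisely the argument underlying Theorem~5.1 of \cite{bane-tit-2021} and the complete-convergence inputs to the optimality proofs there.
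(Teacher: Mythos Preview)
Your periodicity reduction---that $\gamma_k(\epsilon)$ depends on $k$ only through $k\bmod T$, so $\sum_k\pi_k\gamma_k(\epsilon)\le\max_r\gamma^{(r)}(\epsilon)$ and it suffices to bound finitely many $\gamma^{(r)}$---is exactly what the paper does. Your observation that (A2) is not used here is also correct.

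Where you diverge is in bounding a single $\gamma^{(r)}(\epsilon)$. You attack the full i.p.i.d.\ sum $\sum_{i=r}^{r+n-1}Z_i$ directly and propose a truncation/Bernstein (Fuk--Nagaev) argument to get summable tail probabilities under the second-moment hypothesis (V1). That works, but it amounts to reproving Hsu--Robbins for cyclically independent sequences and would need some care in the write-up (recentering the truncated pieces, handling the $T$ distinct laws uniformly). The paper instead inserts an additional decomposition by slot: write $Z_i=\sum_{\ell\in S}Z_i^{(\ell)}$ with $Z_i^{(\ell)}=Z_i\mathbf{1}\{i\equiv\ell\bmod T\}$, union-bound, and observe that $\sum_{i=1}^n\tilde Z_i^{(\ell)}$ is updated only once every $T$ steps, so $\sum_n\Prob\bigl(|\frac1n\sum_{i=1}^n\tilde Z_i^{(\ell)}|>\epsilon'\bigr)\le T+T\sum_{j\ge1}\Prob\bigl(|\frac1j\sum_{m=1}^{j}\tilde Z_{mT+\ell}^{(\ell)}|>\epsilon'\bigr)$, where the inner sum is now over genuinely i.i.d.\ centered variables with finite variance. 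This reduces the problem to the \emph{classical} Hsu--Robbins--Erd\H{o}s theorem, which can simply be cited. So your route is sound but heavier; the paper's slot decomposition buys a direct reduction to the i.i.d.\ case and a one-line citation in place of your truncation machinery.
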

\begin{proof}
	The sequence
	$$
	\left\{\gamma_k(\epsilon) \right\}_{k=1}^\infty
	$$
	in \eqref{eq:gammakcomplete}
	is periodic with period $T$, and as a result, there are at most $T$ distinct values in the above sequence:
	$$
	\gamma_1(\epsilon), \; \cdots, \gamma_T(\epsilon).
	$$
	Thus, if we show that these $T$ numbers are finite for any $\epsilon>0$, then we automatically have
	$$
	\sum_{k=1}^\infty \pi_k \gamma_k(\epsilon) \leq \max_{1 \leq k \leq T} \gamma_k(\epsilon) < \infty, \quad \forall \epsilon > 0.
	$$
	Furthermore, since we do not make any explicit assumptions on the actual values taken by the densities $(f_1, \cdots, f_T)$
	and $(g_1, \cdots, g_T)$, we can exploit the i.p.i.d. nature of the processes to just show that
	$$
	\gamma_1(\epsilon) < \infty, \quad \forall \epsilon > 0.
	$$
	
	Recall the definition of $\gamma_1(\epsilon) $:
	\begin{equation*}
		\gamma_1(\epsilon) = \sum_{n=1}^{\infty} \Prob^S_1 \left(\bigg| \frac{1}{n}  \sum_{i=1}^{n} Z_i - I_S \bigg| > \epsilon\right).
	\end{equation*}
	For $\ell =1, 2, \cdots, T$, define
	$$
	Z_i^{(\ell)} = \begin{cases}
		Z_i \quad \text{ if } i=mT+\ell \; \text{ for } m=0,1,2, \cdots\\
		0 \quad \text{ otherwise }.
	\end{cases}
	$$
	Note that
	$$
	Z_i^{(\ell)} = 0, \; \text{ if } \ell \not \in S.
	$$
	Also recall that
	$$
	I_S = \frac{1}{T}\sum_{\ell \in S} I_\ell,
	$$
	where
	$$
	I_\ell = D(g_\ell \; \| \; f_\ell).
	$$
	
	Using these definitions, we can write
	\begin{equation*}
		\begin{split}
			\frac{1}{n} \sum_{i=1}^n Z_i - I_S &= \frac{1}{n} \sum_{i=1}^n \sum_{\ell \in S } Z^{(\ell)}_i - \frac{1}{T}\sum_{\ell \in S} I_\ell 
			=\sum_{\ell \in S} \left(\frac{1}{n} \sum_{i=1}^n Z_i^{(\ell)} - \frac{I_\ell}{T}\right).
		\end{split}
	\end{equation*}
	This implies
	\begin{equation*}
		\begin{split}
			\Prob^S_1 & \left(\bigg| \frac{1}{n}  \sum_{i=1}^n Z_i - I_S\bigg| > \epsilon \right) 
			\leq \sum_{\ell \in S}  \Prob^S_1 \left( \bigg| \frac{1}{n} \sum_{i=1}^n Z_i^{(\ell)} - \frac{I_\ell}{T}\bigg| > \frac{\epsilon}{|S|} \right).
		\end{split}
	\end{equation*}
	Thus, to show the summability of the LHS in the above equation, we need to show the summability of each of the $|S|$ terms on the RHS. Again, due
	to the i.p.i.d. nature of the processes, and because we have made no explicit assumptions about the densities $(f_1, \cdots, f_T)$
	and $(g_1, \cdots, g_T)$, it is enough to establish the summability of any one of the terms on the right.
	That is, for $\ell \in S$, we want to show that
	$$
	\sum_{n=1}^\infty \Prob^S_1 \left( \bigg| \frac{1}{n} \sum_{i=1}^n Z_i^{(\ell)} - \frac{I_\ell}{T}\bigg| > \frac{\epsilon}{|S|} \right) < \infty.
	$$
	
	Define for $\ell =1,2, \cdots, T$,
	$$
	I_i^{(\ell)} = \begin{cases}
		I_\ell \quad \text{ if } i=mT+\ell \; \text{ for } m=0,1,2, \cdots, \ell \in S\\
		0 \quad \text{ otherwise.}
	\end{cases}.
	$$
	Using this definition we write for $\ell \in S$,
	$$
	\frac{1}{n} \sum_{i=1}^n Z_i^{(\ell)} - \frac{I_\ell}{T} =  \frac{1}{n} \sum_{i=1}^n (Z_i^{(\ell)} - I_i^{(\ell)} ) + \frac{1}{n} \sum_{i=1}^n I_i^{(\ell)} - \frac{I_\ell}{T}.
	$$
	
	Thus, with $\tilde{Z}_i^{(\ell)}  = Z_i^{(\ell)} - I_i^{(\ell)}$, we have
	\begin{equation}\label{eq:thm2_22}
		\begin{split}
			\Prob^S_1  \left( \bigg| \frac{1}{n} \sum_{i=1}^n Z_i^{(\ell)} - \frac{I_\ell}{T}\bigg| > \frac{\epsilon}{|S|} \right) 
			&\leq \Prob^S_1 \left(\bigg| \frac{1}{n} \sum_{i=1}^n \tilde{Z}_i^{(\ell)}\bigg| > \frac{\epsilon}{2|S|} \right) 
			\\
   &+ \Prob^S_1  \left(\bigg| \frac{1}{n} \sum_{i=1}^n I_i^{(\ell)} - \frac{I_\ell}{T}\bigg| > \frac{\epsilon}{2|S|} \right).
		\end{split}
	\end{equation}
	Now, $
	\frac{1}{n} \sum_{i=1}^n I_i^{(\ell)} = \frac{1}{n} I_\ell \lfloor \frac{n}{T} \rfloor \to \frac{I_\ell}{T}, \quad \text{ as } k \to \infty$.
	Thus, for $n$ large enough, the second term on the right in \eqref{eq:thm2_22} is identically zero. Thus, we only need to show that
	$$
	\sum_{n = 1}^\infty \Prob^S_1 \left(\bigg| \frac{1}{n} \sum_{i=1}^n \tilde{Z}_i^{(\ell)}\bigg| > \frac{\epsilon}{2|S|} \right) < \infty, \quad \ell \in S.
	$$

	Towards this end, note that in the term $\Prob^S_1 \left(\big| \frac{1}{n} \sum_{i=1}^n \tilde{Z}_i^{(\ell)}\big| > \frac{\epsilon}{2|S|} \right)$, the sum
 $\sum_{i=1}^n \tilde{Z}_i^{(\ell)}$ is updated only once in $T$ time steps. As a result, as a function of $n$, the probability decreases monotonically between $kT+\ell$ and $(k+1)T+\ell-1$, for every $k=0,1,2,\dots$.  Using this fact, we can write
	\begin{equation*}
		\begin{split}
			\sum_{n=1}^\infty \Prob^S_1 \left(\bigg| \frac{1}{n} \sum_{i=1}^n \tilde{Z}_i^{(\ell)}\bigg| > \frac{\epsilon}{2|S|} \right) 
			&\leq T + T \sum_{j=1}^\infty \Prob^S_1 \left(\bigg| \frac{1}{jT+\ell} \sum_{i=1}^{jT+\ell} \tilde{Z}_i^{(\ell)}\bigg| > \frac{\epsilon}{2|S|} \right) \\
			&\leq T + T \sum_{j=1}^\infty \Prob^S_1 \left(\bigg| \frac{1}{j} \sum_{i=1}^{jT+\ell} \tilde{Z}_i^{(\ell)}\bigg| > \frac{\epsilon}{2|S|} \right).
		\end{split}
	\end{equation*}
	The rightmost summation is finite because the sum inside is a sum of $j$ i.i.d. random variables
	with the distribution of $Z_\ell$ under $\Prob_1$ \cite{tart-niki-bass-2014}. The summation is finite for i.i.d. random variables with finite variance. See also \cite{tartakovsky2019sequential}.
\end{proof}

\medspace
\medspace

In words, the above lemma states that i.p.i.d. processes satisfy the complete convergence condition \cite{tart-niki-bass-2014}, \cite{tartakovsky2019sequential}.


\medskip
\begin{theorem}\label{thm:UB}
	Let the prior satisfy the conditions (A1) and (A2).
	With $A=\frac{1-\alpha}{\alpha}$ in \eqref{eq:Shirmixturestop}, we have for each $S \in \mathcal{S}$,
	$$
	\Prob^S_\pi(\tau_{mps} < \nu) \leq \alpha
	$$
	and
	\begin{equation}\label{eq:ShirPerf}
		\Expect^S_\pi\left[ \tau_{mps} - \nu | \tau_{mps} \geq \nu \right] \leq \frac{|\log \alpha| }{I_S + d}(1+o(1)), \quad \text{ as } \alpha \to 0.
	\end{equation}
\end{theorem}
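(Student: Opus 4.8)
The plan is to prove the false-alarm bound and the delay bound separately, handling the mixture over $\mathcal{S}$ by two different devices. For the false-alarm part, I would first recognize $R_n$ in \eqref{eq:ShirMixstat} as a genuine posterior odds ratio. Consider the augmented Bayesian model in which $\nu$ has prior $\pi$ and, independently, the post-change slot set is drawn from $\mathcal{S}$ according to the mixing distribution $\{p_S\}$; call the resulting law $\Prob^{\mathrm{mix}}=\sum_{S\in\mathcal{S}}p_S\,\Prob^S_\pi$ and set $\mathcal{F}_n=\sigma(X_1,\dots,X_n)$. Writing out the joint density of $(X_1,\dots,X_n,\nu,S)$ shows that $R_n=\Prob^{\mathrm{mix}}(\nu\le n\mid\mathcal{F}_n)\big/\Prob^{\mathrm{mix}}(\nu>n\mid\mathcal{F}_n)$, so $\Prob^{\mathrm{mix}}(\nu>n\mid\mathcal{F}_n)=1/(1+R_n)$. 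Since $A=(1-\alpha)/\alpha$ gives $1/(1+A)=\alpha$, we have $\tau_{mps}=\inf\{n:\Prob^{\mathrm{mix}}(\nu>n\mid\mathcal{F}_n)<\alpha\}$, and the usual conditioning argument yields $\Prob^{\mathrm{mix}}(\tau_{mps}<\nu)=\Expect^{\mathrm{mix}}\big[\mathbf{1}\{\tau_{mps}<\infty\}\,\Prob^{\mathrm{mix}}(\nu>\tau_{mps}\mid\mathcal{F}_{\tau_{mps}})\big]\le\alpha$. Finally, since on $\{\nu>m\}$ the law of $(X_1,\dots,X_m)$ is $\prod_{i\le m}f_i$ irrespective of $S$, the quantity $\Prob^S_k(\tau_{mps}<k)$ does not depend on $S$; hence $\Prob^S_\pi(\tau_{mps}<\nu)$ is the same for every $S\in\mathcal{S}$, and equals $\sum_{S}p_S\,\Prob^S_\pi(\tau_{mps}<\nu)=\Prob^{\mathrm{mix}}(\tau_{mps}<\nu)\le\alpha$, which is the first claim.

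For the delay bound I would reduce the mixture to a single term. For the true slot set $S$ and any $k\le n$, dropping all but the $(k,S)$ summand in \eqref{eq:ShirMixstat} gives $R_n\ge(p_S\pi_k/\Pi_n)\exp\!\big(\sum_{i=k}^{n}Z_i\big)$, i.e. $\log R_n\ge\log p_S+\log\pi_k-\log\Pi_n+\sum_{i=k}^{n}Z_i$ with $Z_i$ as in \eqref{eq:ZiLL}. Fix $\epsilon>0$. On $\{\nu=k\}$ the event $\{\tau_{mps}>k+j\}$ is contained in $\{\log R_{k+j}\le\log A\}$, hence, using the prior-tail rate (A1) in the form $-\log\Pi_{k+j}\ge(d-\epsilon)(k+j)$ for large indices (and $-\log\Pi_{k+j}\ge 0$ when $d=0$), it is contained in $\big\{\tfrac{1}{j+1}\sum_{i=k}^{k+j}Z_i\le I_S-\epsilon\big\}$ once $j$ exceeds $N_{k,\alpha}:=\big\lceil(\log A-\log p_S-\log\pi_k)/(I_S+d-2\epsilon)\big\rceil$. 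Summing $\Prob^S_k(\tau_{mps}>k+j)$ over $j$ and comparing with the complete-convergence sum \eqref{eq:gammakcomplete},
\begin{equation*}
\Expect^S_k[(\tau_{mps}-k)^+]\;\le\;N_{k,\alpha}+C(\epsilon)+\gamma_k(\epsilon),
\end{equation*}
where $C(\epsilon)$ absorbs the finitely many small-index terms. Averaging against the prior, using $-\log\pi_k=|\log\pi_k|$ together with (A2) to bound $\sum_k\pi_k N_{k,\alpha}$ by $\log A/(I_S+d-2\epsilon)+O(1)$, and using Lemma~\ref{lem:completeconv} to bound $\sum_k\pi_k\gamma_k(\epsilon)$ (whose terms are finite by Lemma~\ref{lem:AlmostsureIS}), we obtain $\Expect^S_\pi[(\tau_{mps}-\nu)^+]\le\log A/(I_S+d-2\epsilon)+O(1)$. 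Dividing by $\Prob^S_\pi(\tau_{mps}\ge\nu)\ge 1-\alpha$ to pass to the conditional delay, letting $\epsilon\downarrow 0$, and using $\log A=|\log\alpha|(1+o(1))$ gives \eqref{eq:ShirPerf}. The single-$S$ portion of this computation is exactly the upper-bound argument already carried out in \cite{bane-tit-2021}, now applied to the dominating term of the mixture statistic.

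I expect the delay bound to be the main obstacle, for two reasons. First, turning the almost-sure law of large numbers $\tfrac1n\sum_{i=k}^{k+n-1}Z_i\to I_S$ (Lemma~\ref{lem:AlmostsureIS}) into a bound on an \emph{expectation} of a stopping time requires the stronger complete-convergence property, which is precisely the content of Lemma~\ref{lem:completeconv}; without it one cannot control the tail sum $\sum_{j\ge N_{k,\alpha}}\Prob^S_k(\tau_{mps}>k+j)$ uniformly enough. Second, keeping the $+d$ in the denominator forces one to use the exponential decay rate (A1) of the prior rather than the crude bound $\Pi_n\le 1$, and the resulting $k$-dependent shift $-\log\pi_k$ must then be neutralized by averaging over the prior, which is exactly where condition (A2) enters. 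By contrast, once $R_n$ is identified as a posterior odds in the augmented model the false-alarm bound is essentially immediate, and the observation that the false-alarm probability is independent of $S$ is what lets the single mixture test serve uniformly over all $S\in\mathcal{S}$.
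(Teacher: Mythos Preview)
Your proposal is correct and follows essentially the same route as the paper: for the delay you drop all but the $(k,S)$ summand to get $\log R_{k+n}\ge |\log\Pi_{k+n}|+\log\pi_k+\log p_S+\sum Z_i$, choose $N_\alpha\approx \log(A/\pi_k)/(I_S+d-\epsilon)$, bound the tail sum by $\gamma_k(\epsilon)$ via Lemma~\ref{lem:completeconv}, and then average over $k$ using (A2); for the false alarm the paper simply cites \cite{tartakovsky2019sequential}, while you spell out the posterior-odds identification and the $S$-independence of the pre-change law, which is exactly the argument behind that citation. One small slip: the parenthetical ``whose terms are finite by Lemma~\ref{lem:AlmostsureIS}'' is misattributed, since almost-sure convergence alone does not give finiteness of $\gamma_k(\epsilon)$; that finiteness is part of what Lemma~\ref{lem:completeconv} establishes.
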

\begin{proof}
	The results follow directly from Lemma~\ref{lem:completeconv} and arguments provided in \cite{tartakovsky2019sequential}. But, we provide the proof in detail for completeness.

Recall that the mixture statistic is defined as
\begin{equation}
	\label{eq:ShirMixstat_1}
	R_n = \frac{1}{\Pi_n} \sum_{k=1}^n \pi_k \sum_{S \in \mathcal{S}} p_S \prod_{i=k}^n \frac{g_{S,i}(X_i)}{f_i(X_i)},
\end{equation}
where $\Pi_n = \Prob(\nu > n)$, and the stopping rule is defined as 
\begin{equation}
	\label{eq:Shirmixturestop_1}
	\tau_{mps} = \inf\{n \geq 1: R_n > A\}.
\end{equation}

	We first note that for any discrete integer-valued random variable such as a stopping time $\tau$,
	\begin{equation}
	    \label{eq:Etaodecomp}
     \Expect[\tau] = \sum_{n=0}^\infty \Prob(\tau > n) \leq N + \sum_{n=N}^\infty \Prob(\tau > n),
	\end{equation}
	where $N$ is any positive integer. The theorem follows by carefully choosing the value $N$ above and obtaining an upper bound on $\Prob(\tau > n)$.
	
	For  $0 < \epsilon < I_S + d$, set
	$$
	N = N_\alpha =  1 + \Bigl \lfloor\frac{\log (A_\alpha/\pi_k) }{I_S + d - \epsilon}\Bigr \rfloor,
	$$
	where
	$$
	A_\alpha = \frac{1-\alpha}{\alpha}.
	$$
	Using $N=N_\alpha$ and $\tau=(\tau_{mps}-k)^+$ in \eqref{eq:Etaodecomp} we get
	\begin{equation}\label{thm2_1}
		\begin{split}
			\Expect^S_k[(\tau_{mps}-k)^+] &\leq N_\alpha + \sum_{n\geq N_\alpha} \Prob^S_k(\tau_{mps}> k+n) 
			\leq N_\alpha + \sum_{n\geq N_\alpha} \Prob^S_k(R_{k+n} < A_\alpha) \\
			&= N_\alpha + \sum_{n\geq N_\alpha} \Prob^S_k(\log R_{k+n} < \log A_\alpha).
		\end{split}
	\end{equation}
	Now, 
	$$
	R_{k+n} = \frac{1}{\Pi_{k+n}} \sum_{t=1}^{k+n} \pi_t \sum_{S \in \mathcal{S}} p_S \prod_{i=t}^{k+n} \frac{g_{S,i}(X_i)}{f_i(X_i)}.
	$$
	The MPS statistic is lower bounded by
	$$
	R_{k+n}  \geq  \frac{1}{\Pi_{k+n}}  \pi_k \; p_S \prod_{i=k}^{k+n} \frac{g_{S,i}(X_i)}{f_i(X_i)}.
	$$
 Here $S$ on the right is the true post-change multiplot set. 
	Taking logarithms on both sides we get
	\begin{equation}\label{thm2_2}
		\begin{split}
			\log(R_{k+n}) &\geq  |\log\Pi_{k+n}| + \log(\pi_k) +  \log(p_S) + \sum_{i=k}^{k+n} Z_i.
		\end{split}
	\end{equation}
	Using \eqref{thm2_2} we can bound the probability in \eqref{thm2_1} for $n \geq N_\alpha$,
	
	\begin{equation}\label{thm2_3}
		\begin{split}
			\Prob^S_k (\log R_{k+n} < \log A_\alpha) \leq &\; \Prob^S_k  \left(  |\log\Pi_{k+n}| + \log(p_S)  + \sum_{i=k}^{k+n} Z_i < \log (A_\alpha/\pi_k) \right) \\
			= &\; \Prob^S_k  \left( \frac{1}{n}\sum_{i=k}^{k+n} Z_i  + \frac{|\log\Pi_{k+n}|}{n} + \frac{\log( p_S)}{n}  < \frac{\log (A_\alpha/\pi_k)}{n} \right) \\
			\leq &\; \Prob^S_k  \left( \frac{1}{n}\sum_{i=k}^{k+n} Z_i  + \frac{|\log\Pi_{k+n}|}{n} + \frac{\log( p_S)}{n}  < I_S + d -\epsilon \right) \\
			= &\; \Prob^S_k  \left( \frac{1}{n}\sum_{i=k}^{k+n} Z_i   < I_S + d - \frac{|\log\Pi_{k+n}|}{n} - \frac{\log(p_S)}{n} -\epsilon \right).
		\end{split}
	\end{equation}
	Now select $\alpha$ small enough so that for every $n \geq N_\alpha$
	\begin{equation*}
		\begin{split}
			\Big| d - \frac{|\log\Pi_{k+n}|}{n} \Big|&< \frac{\epsilon}{4}, \\
			\Big|\frac{\log(p_S)}{n} \Big| &< \frac{\epsilon}{4}.
		\end{split}
	\end{equation*}
	Specifically, select $\alpha$ small enough such that the statements in the above display are true for all $n$ satisfying
	$$
	n \geq 1 + \Bigl \lfloor\frac{\log (A_\alpha) }{I_S + d - \epsilon}\Bigr \rfloor.
	$$
	This ensures that the chosen small $\alpha$ is not a function of the index $k$.
	This gives us
	\begin{equation}\label{thm2_4}
		\begin{split}
			\Prob^S_k &(\log R_{k+n} < \log A_\alpha) \\
			&\leq \; \Prob^S_k  \left( \frac{1}{n}\sum_{i=k}^{k+n} Z_i   < I_S + d - \frac{|\log\Pi_{k+n}|}{n} - \frac{\log(p_S)}{n} -\epsilon \right)\\
			&\leq \; \Prob^S_k  \left( \frac{1}{n}\sum_{i=k}^{k+n} Z_i   < I_S - \frac{\epsilon}{2} \right).
		\end{split}
	\end{equation}
	Substituting this in \eqref{thm2_1} we get for $\alpha$ small enough, uniformly over $k$,
	\begin{equation}\label{thm2_5}
		\begin{split}
			\Expect^S_k[(\tau_{mps}-k)^+]
			&\leq N_\alpha + \sum_{n\geq N_\alpha} \Prob^S_k(\log R_{k+n} < \log A_\alpha) \\
			&\leq N_\alpha + \sum_{n\geq N_\alpha} \Prob^S_k  \left( \frac{1}{n}\sum_{i=k}^{k+n} Z_i   < I_S - \frac{\epsilon}{2} \right) \\
			&\leq N_\alpha + \sum_{n=1}^\infty \Prob^S_k  \left( \Big| \frac{1}{n}\sum_{i=k}^{k+n} Z_i   - I_S \Big| > \frac{\epsilon}{2} \right) \\
			&=N_\alpha + \gamma_k(\epsilon/2).
		\end{split}
	\end{equation}
	This gives us
	\begin{equation}\label{thm2_6}
		\begin{split}
			\Expect^S_\pi  [(\tau_{mps}-\nu)^+] &=  \sum_{k=1}^\infty \pi_k \Expect^S_k[(\tau_{mps}-k)^+] \\
			&\leq \sum_{k=1}^\infty \pi_k N_\alpha + \sum_{k=1}^\infty \pi_k\gamma_k(\epsilon/2)\\
			&\leq \sum_{k=1}^\infty \pi_k \left(1 +  \frac{\log (A_\alpha/\pi_k) }{I_S + d - \epsilon}\right) + \sum_{k=1}^\infty \pi_k\gamma_k(\epsilon/2)\\
			&=\frac{\log (A_\alpha) }{I_S + d - \epsilon} + \text{ constant}.
		\end{split}
	\end{equation}
	The remaining term is a constant (not a function of $\alpha$) because of the assumptions made in the theorem statement and due to Lemma~\ref{lem:completeconv}.
	Finally,
	\begin{equation}\label{thm2_7}
		\begin{split}
			\Expect^S_\pi [\tau_{mps}-\nu | \tau_{mps} \geq \nu] &= \frac{\Expect^S_\pi[(\tau_{mps}-\nu)^+]}{\Prob^S_\pi(\tau_{mps} \geq \nu)} \\
			&\leq \frac{\frac{\log (A_\alpha) }{I_S + d - \epsilon} + \text{ constant}}{1-\alpha} \\
			&=\frac{|\log \alpha |}{I_S + d - \epsilon} (1+ o(1)), \quad \text{ as } \alpha \to 0.
		\end{split}
	\end{equation}
	The result now follows because $\epsilon$ can be made arbitrarily small.
	
	The fact that setting $A=A_\alpha$ guarantees that the false alarm constraint is satisfied follows from \cite{tartakovsky2019sequential}.
	
\end{proof}
Thus, the MPS algorithm achieves the asymptotic lower bound given in Theorem~\ref{thm:LB_multislot} and is asymptotically optimal uniformly over $S$.

\section{Numerical Results}
	\subsection{Applying the Periodic-CUSUM Algorithm to Los Angeles Traffic Data}
 \label{sec:traffic}
In this section, we demonstrate how to train and apply the periodic-CUSUM algorithm \eqref{eq:PeriodicCUSUM} on traffic flow indicator data of the Los Angeles' (LA) highway. For ease of reference, we reproduce the algorithm here. 
\begin{equation}\label{eq:repeatedPeriodicCUSUM}
	W_{n+1} = W_n^{+} + \log \frac{g_{n+1}^{(1)}(X_{n+1})}{g_{n+1}^{(0)}(X_{n+1})}; \; \; W_n=0.
\end{equation}
\begin{equation}\label{eq:repeatedPeriodicCUSUMstop}
	\tau_c = \inf \{n \geq 1: W_n \geq A\}.
\end{equation}
We train the pre-change model using weekday traffic data and the post-change model using weekend or holiday data. We then apply the algorithm to the traffic data to detect weekend or holiday traffic. We now discuss the application in detail.

We applied the algorithm at selected stations along the Westbound I-10 highway in LA  County (see Fig. \ref{fig:figI10}) by incorporating multiple historical traffic flow datasets that were freely accessible via Performance Measurement System (PeMS website https://pems.dot.ca.gov/) for a time span of August 2020 - September 2021. The traffic counts reported by PeMS act as a proxy for the traffic flow of each ten stations spaced about 0.33 miles apart, as seen in Fig. \ref{fig:figI10}. PeMS data in archive format is released daily.

\par
Downloaded data for one station in LA has multiple fields in a comma-delimited text format. Each field contains useful traffic attributes such as timestamps, station IDs, number of vehicles/5-minute bins, and so on (see Fig. \ref{fig:headerPEMS} for a sample of this data). In addition, we were able to locate each station precisely on the map of Los Angeles by using station$\_$id (second column in Fig. \ref{fig:headerPEMS}) as a key from PeMS and merging those with an additional table that contains geographical coordinates [lat, lon] for selected traffic stations as illustrated in Fig. \ref{fig:figI10}.

\begin{figure}[htbp]
	\centering
	\includegraphics[width=0.7\linewidth,trim={1cm 1cm 1cm 1cm},clip]{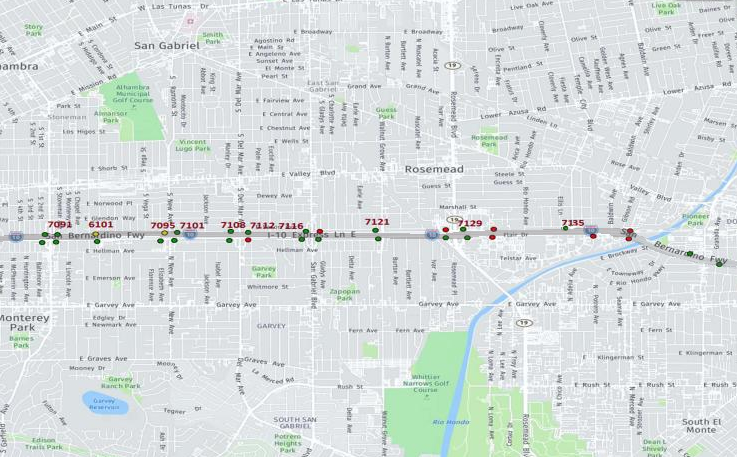}
	\caption[Geographical Location of Selected Station.]{Selected stations map. The station ID is denoted by the numbers for westbound traffic: The main features are depicted, while a light-grey hue highlights a segment of the I-10 highway (Christopher Columbus Transcontinental Highway), LA, CA.}
	\label{fig:figI10}
\end{figure}

\begin{figure}[htbp]
	\centering
	\includegraphics[width=0.7\linewidth]{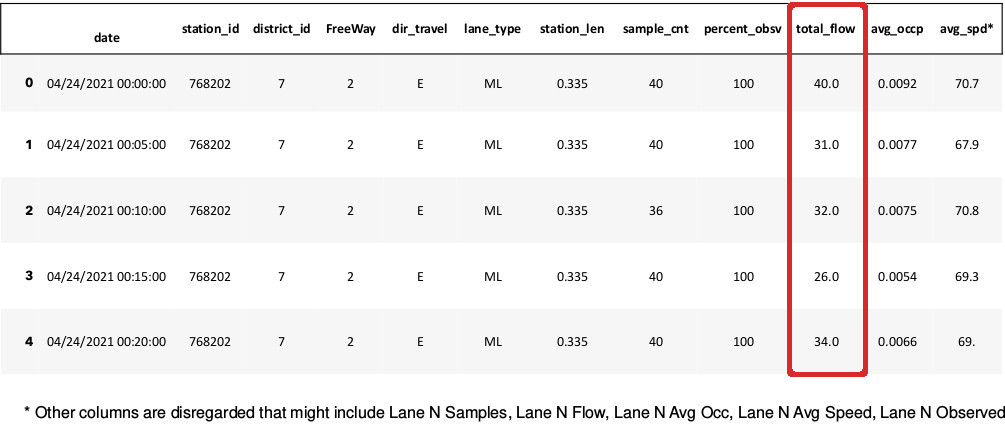}
	\caption{Sample raw data of an eastbound vehicle detector station (VDS) with ``station$\_$id" (column two) ending with 8202. Note that spotted column for ``total$\_$flow" (column ten) is defined as the number of vehicles/five-minute.}
	\label{fig:headerPEMS}
\end{figure}

\par

\par
We observed similar patterns on different days of the week as well as across multiple stations within the same segment of interest (Fig.~\ref{fig:fig7095}). 
We chose the sensor traffic with an ID ending with 7095 on a random weekday 
of the month of August 2021 for training purposes, and we left out the last month of September 2021 for the test set. The comparison of sample paths of August's Mondays with Labor Day Holidays of 2020 (label 249) and 2021 from the test using the smoothing technique is depicted in Fig.~\ref{fig:data249}. Because the readings were noisy, we used the station's median moving average (MMA) of the previous hour's samples (we dropped the duplicates while keeping the last in our station dataset for future analysis).

\begin{figure}[htbp]
    \centering
    \subfloat[\centering\label{sub:fig7095a}]{{\includegraphics[width=.4\linewidth]{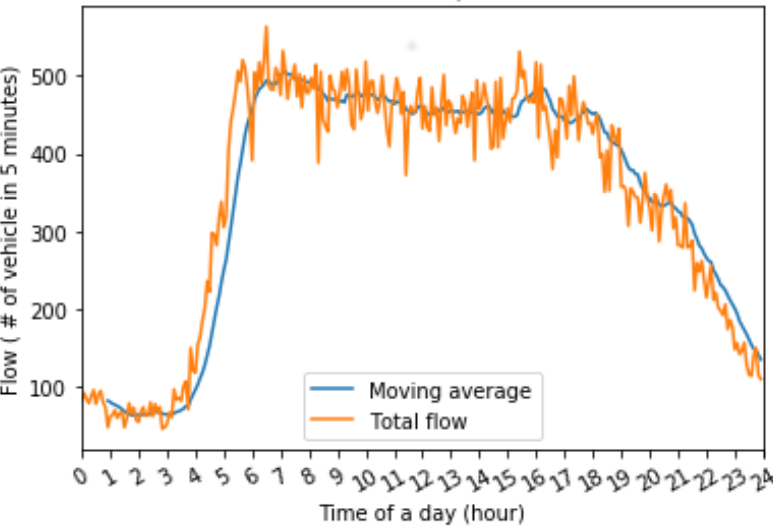}}}
    \qquad
    \subfloat[\centering \label{sub:fig7095b}]{{\includegraphics[width=.4\linewidth]{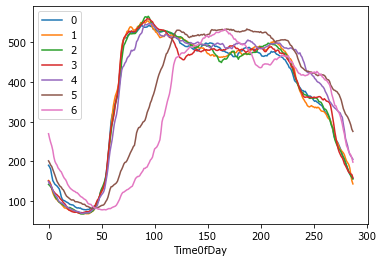}}}
    \caption{(a): Illustration of sample path from the station with index three (7095) over a day (288 bins). (b): Average traffic counts on different days of Aug (Mon=0), 2021.}
    \label{fig:fig7095}
\end{figure}

\begin{figure}[htbp]
    \centering
    \subfloat[\centering \label{sub:anomalyDay}]{{\includegraphics[width=0.45\linewidth]{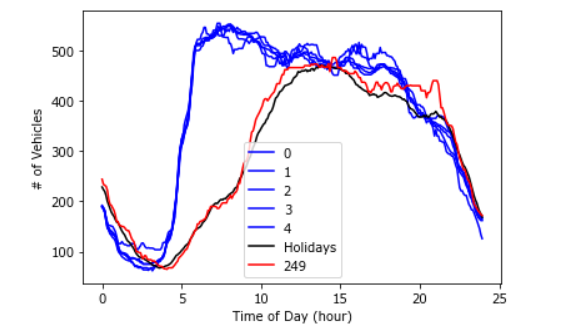}}}
     \qquad
     \subfloat[\centering \label{sub:figCUSUMLike}]{{ \includegraphics[width=0.45\linewidth]{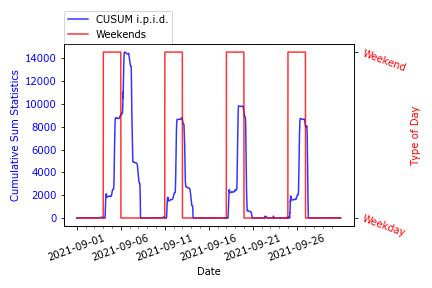}}}
\caption{(a) Comparisons between some of the training sample paths in normal (August's Mondays) vs Labor Day of 2021 and Labor day of 2020 (Label 249). (b): Labor week periodic CUSUM test statistics and event labels for 09/06/2021.}
\label{fig:data249}
\end{figure}
\par
For applying the periodic CUSUM algorithm, we assumed that $T=288 \; (12\times 24)$ (number of bins), and modeled 
\begin{equation}
    \begin{split}
        g_{i}^{(0)} &= \text{Poisson}(\lambda_i^{(0)}), \quad i=1,2, \dots, T\\
        g_{i}^{(1)} &= \text{Poisson}(\lambda_i^{(1)}), \quad i=1,2, \dots, T.
    \end{split}
\end{equation}
We then learned the Poisson parameters from the training data. In Fig.~\ref{sub:figCUSUMLike}, we have plotted the periodic CUSUM statistic for the test data. The red rectangular blocks indicate the location of weekends and the blue curve is the test statistic. As seen in the figure, the test statistic rose sharply around the weekends to indicate that a change in the traffic flow has been detected. 


\subsection{Numerical result for Multislot Quickest Change Detection}\label{sec:secMultit}

In this section, we apply the MPS algorithm \eqref{eq:ShirMixstat} on simulated noisy sinusoidal data. For ease of reference, we reproduce the algorithm here. 
\begin{equation}
	\label{eq:rePeatedShirMixstat}
	R_n = \frac{1}{\Pi_n} \sum_{k=1}^n \pi_k \sum_{S \in \mathcal{S}} p_S \prod_{i=k}^n \frac{g_{S,i}(X_i)}{f_i(X_i)},
\end{equation}
where $\Pi_n = \Prob(\nu > n)$, and the stopping rule
\begin{equation}
	\label{eq:Shirmixturestoprepeat}
	\tau_{mps} = \inf\{n \geq 1: R_n > A\}.
\end{equation}
Specifically, we assume that we observe a noisy version of a sequence of sinusoidal waveforms as shown in Fig.~\ref{sub:absSinRealXaxis}. At the change point, the shape of the sinusoidal signal is distorted as shown in Fig.~\ref{sub:slotchange}. The goal is to detect this distortion in real-time. We assume that we know the type of distortion but don't know the precise location of the distortion. Thus, the actual distortion can be any one of the five shown in Fig.~\ref{sub:allPostChangeParMean4laws}. We will assume that each of the distortions is equally likely for the design of the MPS algorithm \eqref{eq:rePeatedShirMixstat}. 

\begin{figure}
\centering
    \subfloat[\centering \label{sub:absSinRealXaxis}]{{\includegraphics[width=0.45\linewidth]{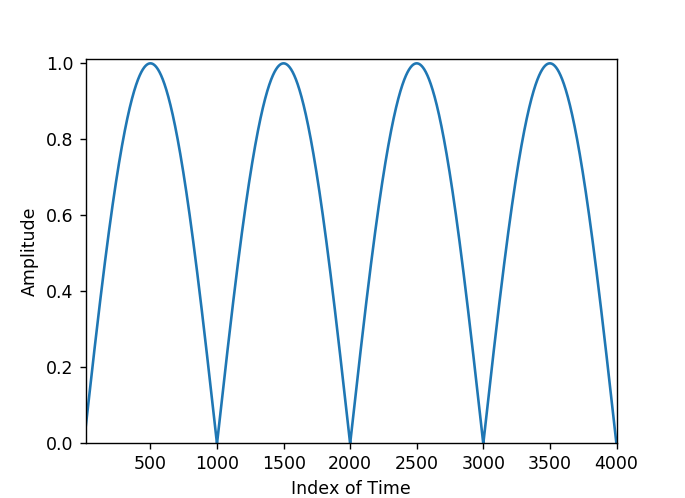}}}%
    \qquad
    \subfloat[\centering\label{sub:slotchange}]{{\includegraphics[width=0.45\linewidth]{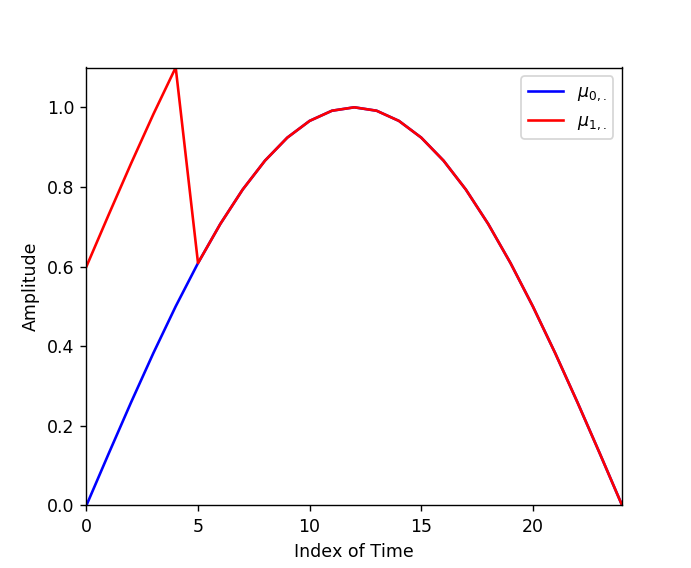}}}%
\caption{(a): Depiction of four positive half sinusoidal waves sampled at a frequency of 1k Hz (Samples/Cycle). (b): Illustration of parameters that governed the PDFs of pre-change and actual post-change that shifted up at the first time slot ([0, 4]).}
\label{fig:preChangeActualPostSlot}
\end{figure}

Let $h(t)$ be the blue sinusoidal signal shown in Fig.~\ref{sub:slotchange}. 
Then, we assume that $T=25$ and create the simulated data using
$$
f_i = \mathcal{N}\left(\mu_{0,i}, 0.01 \right), \quad i=1,2, \dots, 25,
$$
where $\mu_0$ is a $25$-length vector given by 
$$
\mu_{0,i} = h(i), \quad i=1,2, \dots, 25. 
$$
The post-change data is generated from
$$
g_i = \mathcal{N}\left(\mu_{1,i}, 0.01 \right), \quad i=1,2, \dots, 25,
$$
where
\[\mu_{1,i}=\begin{cases}
	\mu_{0,i}, & \text{for i}\notin [0,4]:=[0,1,2,3,4]\\
	\mu_{0,i}+0.6, & \text{for i} \in [0,4].
\end{cases}
\]
Thus, the post-change data is generated by assuming that the true post-change slots are 
$$
S= [0, 4].
$$
But, we assume that the multislot family $\mathcal{S}$ is
$$
\mathcal{S} = \{[0,4], [5, 9], [10, 14], [15, 19], [20, 24]\}
$$
with
$$
p_S = \frac{1}{5}, \quad \text{for all } S \in \mathcal{S}. 
$$
\begin{figure}
	\centering
	\subfloat[\centering \label{sub:allPostChangeParMean4laws}]{{\includegraphics[width=0.5\linewidth]{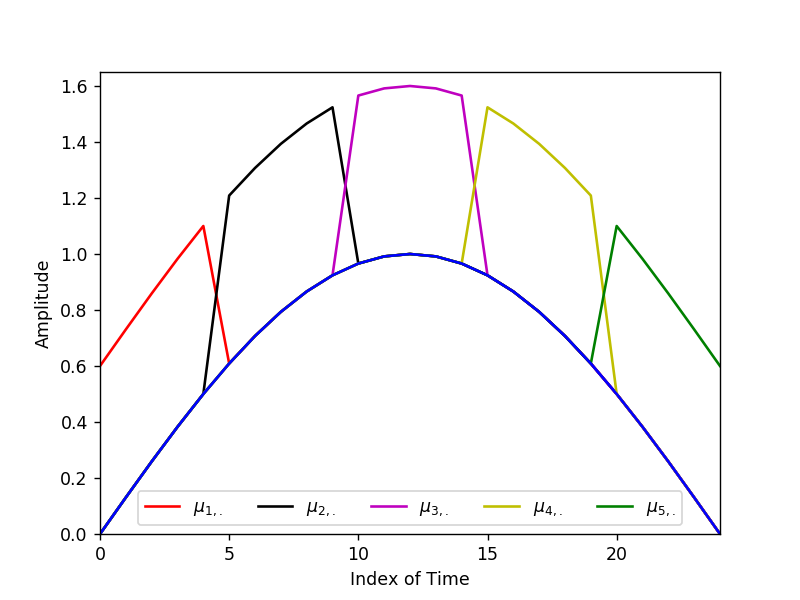}}}
		\subfloat[\centering \label{sub:after4cyclesPostChange125}]{{\includegraphics[width=0.55\linewidth]{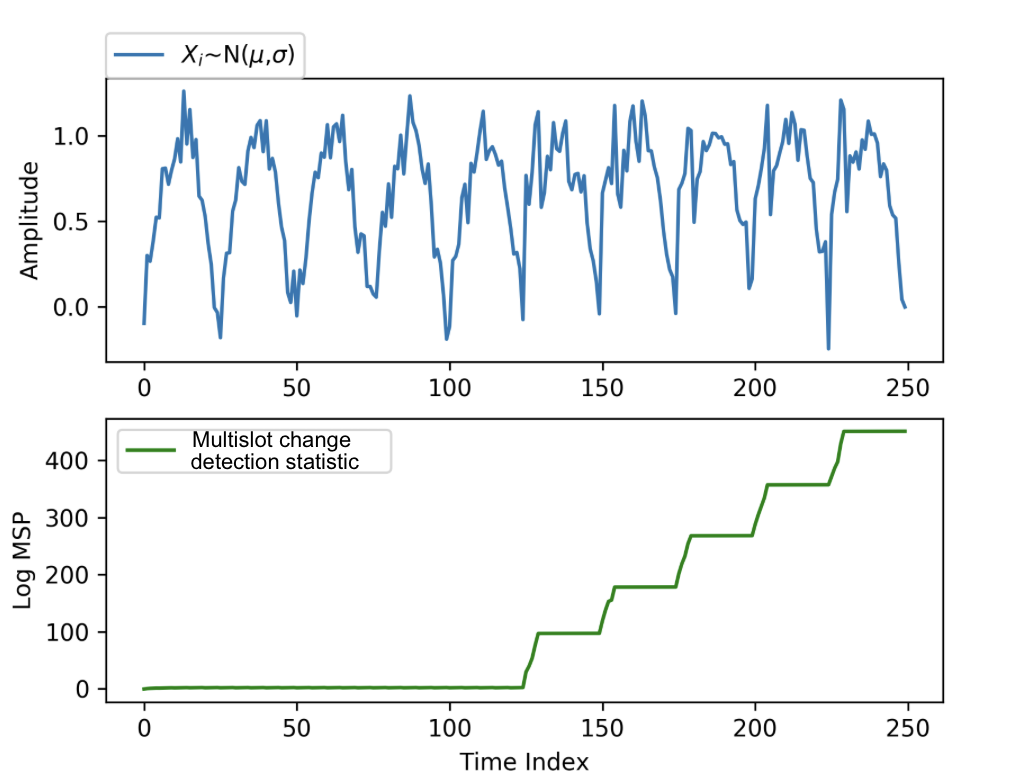}}}
\caption{(a): Depiction of all possible post-change waveforms. (b): Test statistics and sample path for pre/post-change distributions of Gaussian with change-point at time index 125 (at the end of the fifth cycle).}
\end{figure}
We also assumed a geometric prior on the change point with parameter $\rho=0.01$, i.e., $\pi_k = (1-\rho)^{k-1}\rho$. We plot the generated data and the MPS algorithm statistic in Fig.~\ref{sub:after4cyclesPostChange125}. As can be seen from the figure, the algorithm detects the change quite effectively. We repeated the simulation with different change slots. Regardless of the slot index, the MPS algorithm was able to detect the changes with no false alarms.
\par

\subsection{Numerical Results for Robust Quickest Change Detection Algorithm}\label{sec:secRobust}
\begin{figure}
    \centering
    \subfloat[\centering \label{sub:squareSignalPeriodOneIdx3}]{{
    \includegraphics[width=0.5\linewidth]{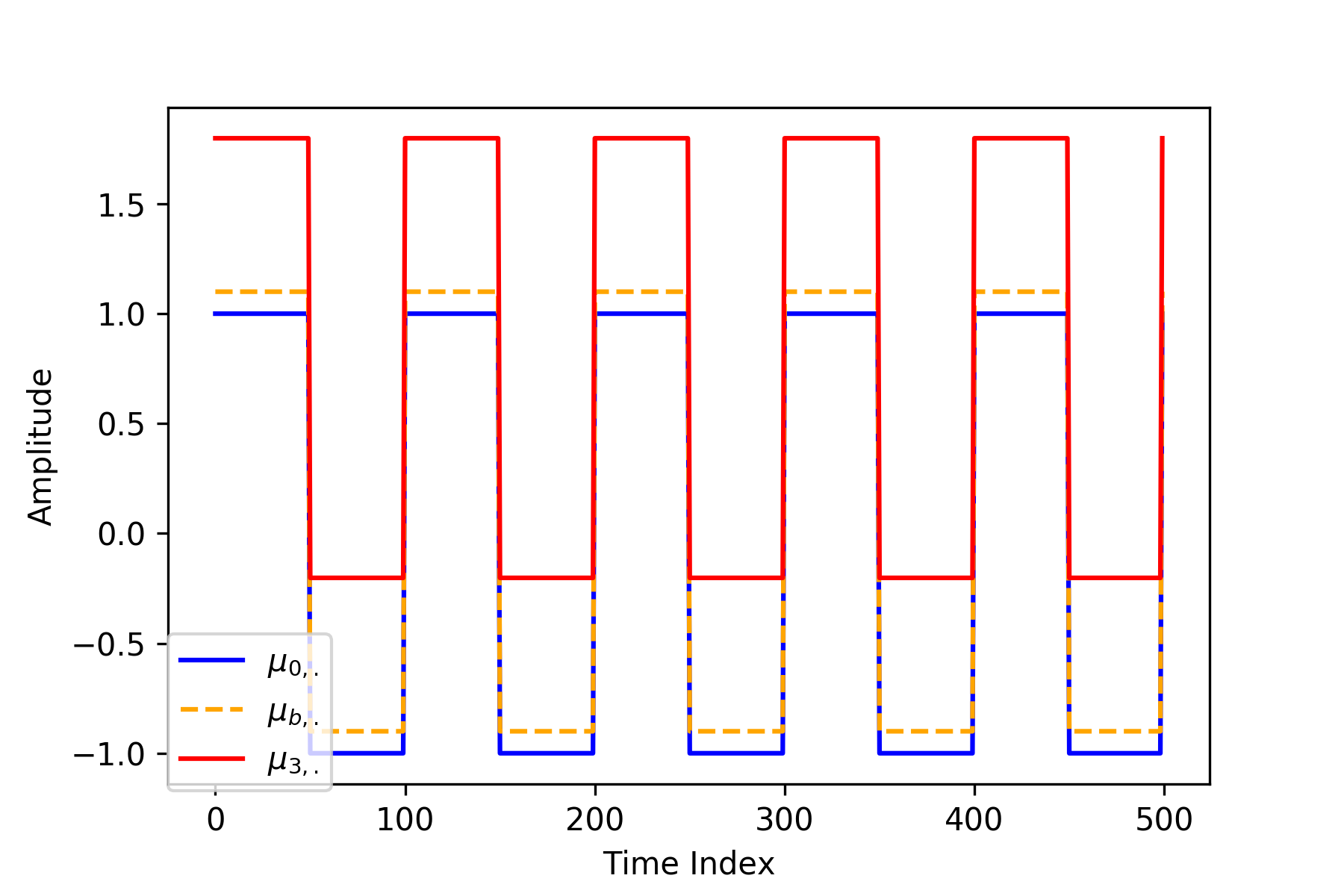}}}%
    \subfloat[\centering \label{sub:LRST2Seq5SquareSignal}]{{
    \includegraphics[width=0.55\linewidth]{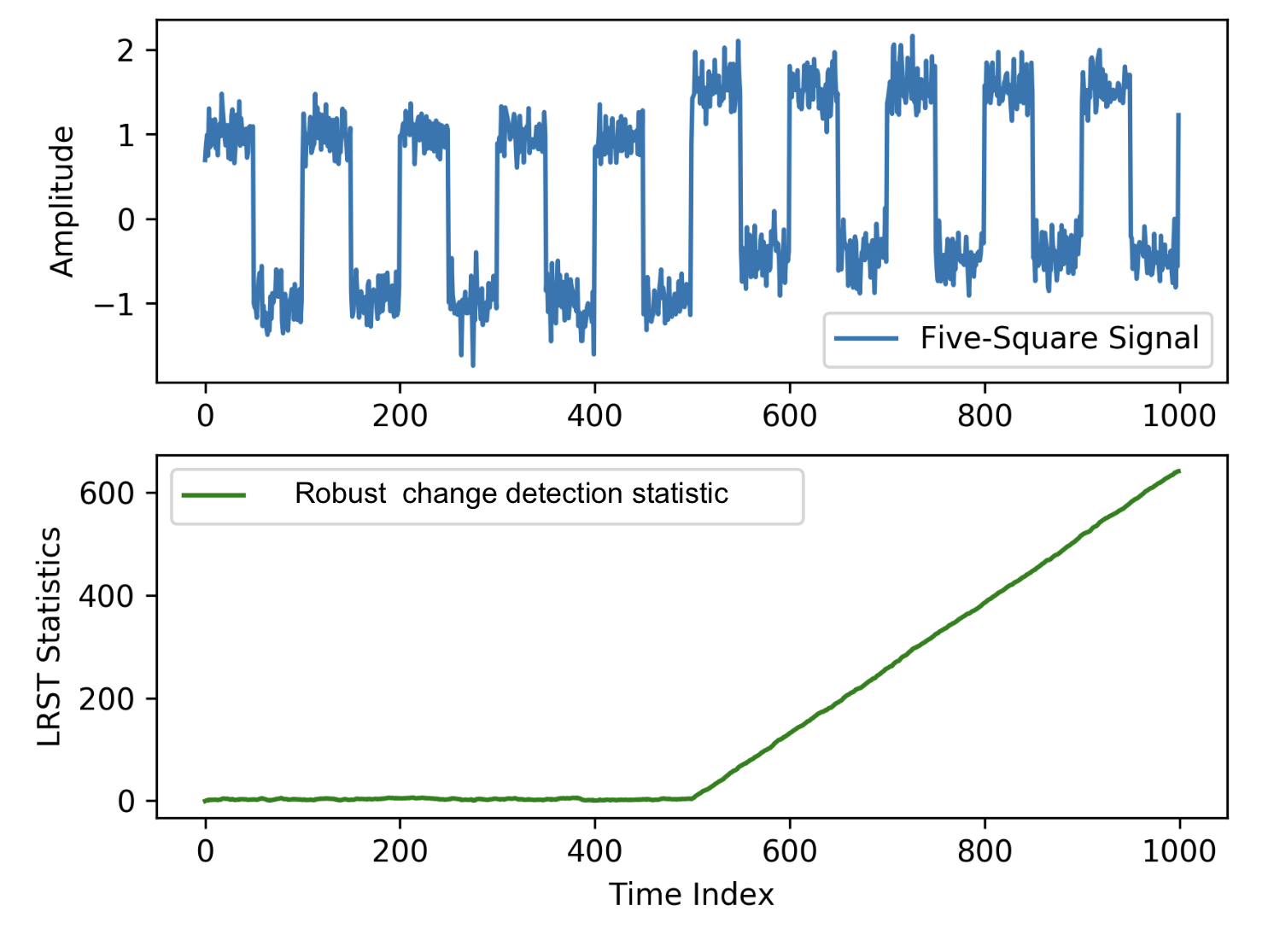}}}%
    \caption{(a): Illustration of different rectangular waveforms involved in robust detection of change in distribution. (b): depictions of a sample path of a five-square signal with shift-up of 0.8 in mean at change-point $\nu$=500 and calculated robust test statistic.}
\end{figure}
In this section, we apply the robust algorithm defined in \eqref{eq:LFLshir} to simulated data. For ease of reference, we reproduce the algorithm here. 
\begin{equation}\label{eq:repeatedLFLshir}
	\bar{\tau}^* = \inf \{n \geq 1: \bar{p}_n \geq A_{(n \bmod T)}\},
\end{equation}
where $\bar{p}_0=0$, and
\begin{equation}\label{eq:repeatedLFPbelief}
	\bar{p}_n = \frac{\tilde{p}_{n-1} \; \bar{g}_n(X_n)}{\tilde{p}_{n-1} \; \bar{g}_n(X_n) + (1-\tilde{p}_{n-1}) f_n(X_n)},
\end{equation}
with
$$
\tilde{p}_{n-1} = \bar{p}_{n-1} + (1-\bar{p}_{n-1}) \rho.
$$
Recall that here $(\bar{g}_1, \dots, \bar{g}_T)$ is the least favorable i.p.i.d. law and $(f_1, \dots, f_T)$ is the pre-change i.p.i.d. law. 

In this numerical experiment, we assume that we observe a noisy version of a rectangular waveform; see Fig.~\ref{sub:squareSignalPeriodOneIdx3}. Before a change point of $500$, the rectangular waveform alternates between $+1$ and $-1$ for $50$ time slots each (blue curve in Fig.~\ref{sub:squareSignalPeriodOneIdx3}). After the change point, the waveform switches between $+1.8$ and $-0.2$ (red waveform in Fig.~\ref{sub:squareSignalPeriodOneIdx3}). We assume that the decision maker is unaware of the exact post-change waveform. But, he/she knows that the deviation will be at least by $0.1$ (dashed orange waveform in Fig.~\ref{sub:squareSignalPeriodOneIdx3}). We assume that we observe the waveform after Gaussian zero-mean random variables with variance $0.01$ have been added. In this setup, it can be shown that the Gaussian i.p.i.d. process with an orange mean level is the least favorable. The generated observation sequence and the robust change detection statistic \eqref{eq:repeatedLFPbelief} are shown in Fig.~\ref{sub:LRST2Seq5SquareSignal}. We used $\rho=0.01$ to generate the statistic. As can be seen from the figure, the robust change detection algorithm effectively detects the change in the waveform pattern. 


\par
\par

\subsection{Numerical Results for ECG Arrhythmia Detection and Fault Isolation}
\label{sec:ECG}
The majority of ECG data are periodic in nature due to the electrical activity within one's heart muscle cells (internal dynamics) over the course of one heartbeat (a PQRS cycle as illustrated in Fig. \ref{fig:ecgSinWaveAnnotated}). 
Due to its diagnostics application, there has recently been a large body of papers on developing algorithms to automate the detection and classification of heart arrhythmia from ECG data using machine learning and statistical pattern recognition perspectives \cite{Can2012IEEETransBioEng,Dechazal2004IE3BioEng,Hannun2019Nature}. In this section, we apply the quickest change detection and fault isolation algorithm for i.p.i.d. processes developed in Section~\ref{sec:QCDFaultIsolation} to real ECG data and simulated wavelet data. Again, for ease of reference, we reproduce the algorithm here. 

For $\ell = 1, \dots, M$, define the stopping times
\begin{equation}\label{eq:repeattauell}
	\begin{split}
		\tau_\ell &= \inf\left\{n \geq 1: \max_{1 \leq k \leq n} \; \min_{0 \leq m \leq M, m \neq \ell} \;  \sum_{i=k}^n \log \frac{g_{i}^{(\ell)}(X_{i})}{g_{i}^{(m)}(X_{i})} \geq A \right\}.
	\end{split}
\end{equation}
The stopping time and decision rule for our detection-classification problem is defined as follows:
\begin{equation}
\label{eq:repeattaudc}
	\begin{split}
		\tau_{dc} &= \min_{1 \leq \ell \leq M} \; \tau_\ell, \\
		\delta_{dc} &= \arg \min_{1 \leq \ell \leq M} \tau_\ell.
	\end{split}
\end{equation}
A window-limited version of the above algorithm is obtained by replacing each $\tau_\ell$ in \eqref{eq:tauell} by
\begin{equation}\label{eq:repeatwindowlimited}
	\begin{split}
		\tilde{\tau}_\ell &= \inf\left\{n: \max_{n-L_\beta \leq k \leq n} \; \min_{0 \leq m \leq M, m \neq \ell} \; \sum_{i=k}^n \log \frac{g_{i}^{(\ell)}(X_{i})}{g_{i}^{(m)}(X_{i})} \geq A \right\}
	\end{split}
\end{equation}
for an appropriate choice of window $L_\beta$. Recall that here 
$(g_{1}^{(0)}, \dots, g_{T}^{(0)}) $ is the normal i.p.i.d. law and 
$(g_{1}^{(\ell)}, \dots, g_{T}^{(\ell)}) $, for $\ell \neq 0$, is the post-change i.p.i.d. law representing anomaly or change of type $\ell$.



\begin{figure}[htbp]
	\centering
    \vspace{1cm}
	\includegraphics[width=1.5cm,height=3cm,trim={5cm 3cm 5cm 3cm}]{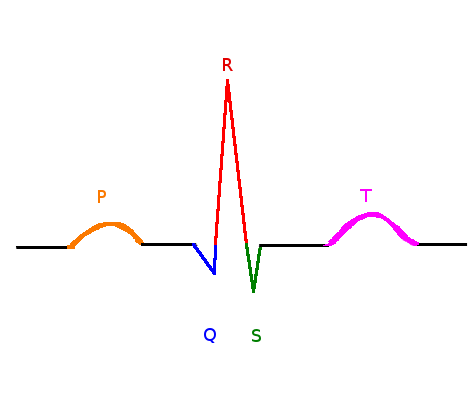}
	\caption{Depiction of morphological features of a normal heartbeat: peaks have been identified.}
	\label{fig:ecgSinWaveAnnotated}
\end{figure}

\subsubsection{MIT-BIH Dataset}
This paper uses the MIT-Boston's Beth Israel University-Hospital (MIT-BIH) dataset downloaded from the Research Resource for Complex Physiologic Signals (PhysioNet) website. The acquired dataset contained 48 recordings from 47 human subjects in which each human subject's data were recorded for about half an hour  \cite{Moody2005}. The data contained information in the form of a 2D array for two-channel signals, a 1D array of expert annotations for the type of arrhythmia, and a 1D array for the location of R-peaks to provide sufficient information for the interpretation of each ECG data. The 2D array signal consists of two-channel sinusoidal waves with an 11-bit resolution over ten milli-volts (mV) range sourced from a 12-lead standard ECG device with a constant sampling rate of 360 samples/second (Hz) for all ECGs in MIT-BIH database \cite{Moody2001}.\\

Since the main-lead II (mlII) channel was the common ECG recording for all patients, the  annotations are only provided for this lead from 12 leads. Thus, we analyzed this array similarly to \cite{Can2012IEEETransBioEng}.\\

\begin{table}
	\centering
	\caption{The equivalent classes of AAMI for normal and abnormal heartbeats in the MIT-BIH dataset.}
	\vspace{.5cm}
	\label{tab:tabMIT2AAMI}
	\begin{tabular}{cccc}
		\hline
		{id}& {Class Rep.}&{Symbol}\\
		\hline
		0& N& `N',`e',`j',`L',`R'\\
		1& V&`V',`E'\\
		2& S& `S', `A',`a',`J'\\
		3& F & `F'\\
		\hline
	\end{tabular}
	\hfill
	\vfill
\end{table}

We used a four-class representation from  the Association for the Advancement of Medical Instrumentation (AAMI) standard to re-cluster different annotations of MIT-BIH into smaller clusters. The standard, which has four larger classes, namely `N' (i.e., any `N,' `e,' `j,' `L,' or `R' from MIT-BIH for normal heartbeat), `S' (supraventricular ectopic beat), `V' (ventricular ectopic beat), and `F' (fusion beat) has been used to re-cluster 12 classes of observed annotations in MIT-BIH into four verified classes in Tab. \ref{tab:tabMIT2AAMI}. As a result of the existence of this table, we grouped each label into a  representative class of AAMI.
\par
We chose the subject patient with the identification (ID) number 208 for a patient-specific analysis. As seen in Tab. \ref{tab:tabStatMITBIH}, the corresponding size of the annotation array for this subject is around 3,000 out of all 112,000 in the MIT-BIH dataset (including the patient with ID=208). As Tab. \ref{tab:tabStatMITBIH} suggests, we removed the supraventricular arrhythmia (cluster of 'S') from the ECG wave. \\


\subsubsection{Data Centering and Standardization}
As illustrated in Fig. \ref{fig:ecgSinWaveAnnotated}, one way of segmentation of data of mlII of ECGs is to obtain an index of mid R-R from heartbeats. Applying partitioning above for patient with ID=208 resulted in 2,951 heartbeats data containing R annotations for main-lead II data with an equal number of annotations excluding `S' and `Q' waves, which ruled 88 annotated heartbeats out of all human annotated heartbeats in the result. Because of the sampling rate of the ECG device, the heartbeats were bounded above by a length of 360.\
All obtained heartbeats had different time lengths. The re-sampling function based on Fast Fourier Transformation (FFT), applied to the length of heartbeats were less than 360 (see an example of this implementation in the lower plot of Fig. \ref{fig:Firtst10HeartBeatsResample}).

\subsubsection{Training and Test Splits}
We randomly sampled 50\% of all heartbeats in three clusters of `N', `V', and `F' of AAMI standard, which accounted for about 48\% of heartbeats for training purposes (the $\#$ of each heartbeat is calculated as Tab. \ref{tab:tabStatTrainingTest}). 
\begin{table}
	\centering
	\caption{Number of re-clustered annotations for the patient with ID=208 vs all labels in MIT-BIH database.}
	\label{tab:tabStatMITBIH}
	\vspace{.5cm}
	\begin{tabular}{ccccccc}
		\hline
		& {`N'} & {`V'}& {`S'}& {`F'} & {`Q'$^*$}& {Total}\\
		\hline
		Patient with ID=208& 1,585 & 992&  2& 373&  86&	3,039\\
		MIT-BIH&  90,631 &  7,236&  2,781&  803&  11,196&	112,647\\
		\hline
	\end{tabular}
	\hfill
	\vfill
	\footnotesize{$^*$ Represents the cluster for all annotations that are not included in the first four clusters.}
\end{table}
For showing the effectiveness of our algorithm, we used a sequence made with ten heartbeats. Because the `S' and `Q' heartbeats might present in any order for the testing real-time situation we only provide testing from the original ECG that all annotations were contained in one of three studied classes and typically will be shown in batches of ten heartbeats for visualization similar to Fig. \ref{fig:Firtst10HeartBeatsResample}.

\begin{table}
	\centering
	\caption{Number of different heartbeats for each cluster in training and test sets chosen from the patient with ID=208.}
	\label{tab:tabStatTrainingTest}
	\vspace{.5cm}
	\begin{tabular}{cccccc}
		\hline
		& {`N'} & {`V'}& {`F'}  & {Total}\\
		\hline
		Training& 817 & 488&  171& 	1,476\\
		Test&  789 &  477&  177& 1,443\\
		\hline
	\end{tabular}
	\hfill
	\vfill
\end{table}

\begin{figure}
	
	\centering
	
	\includegraphics[width=0.7\linewidth]{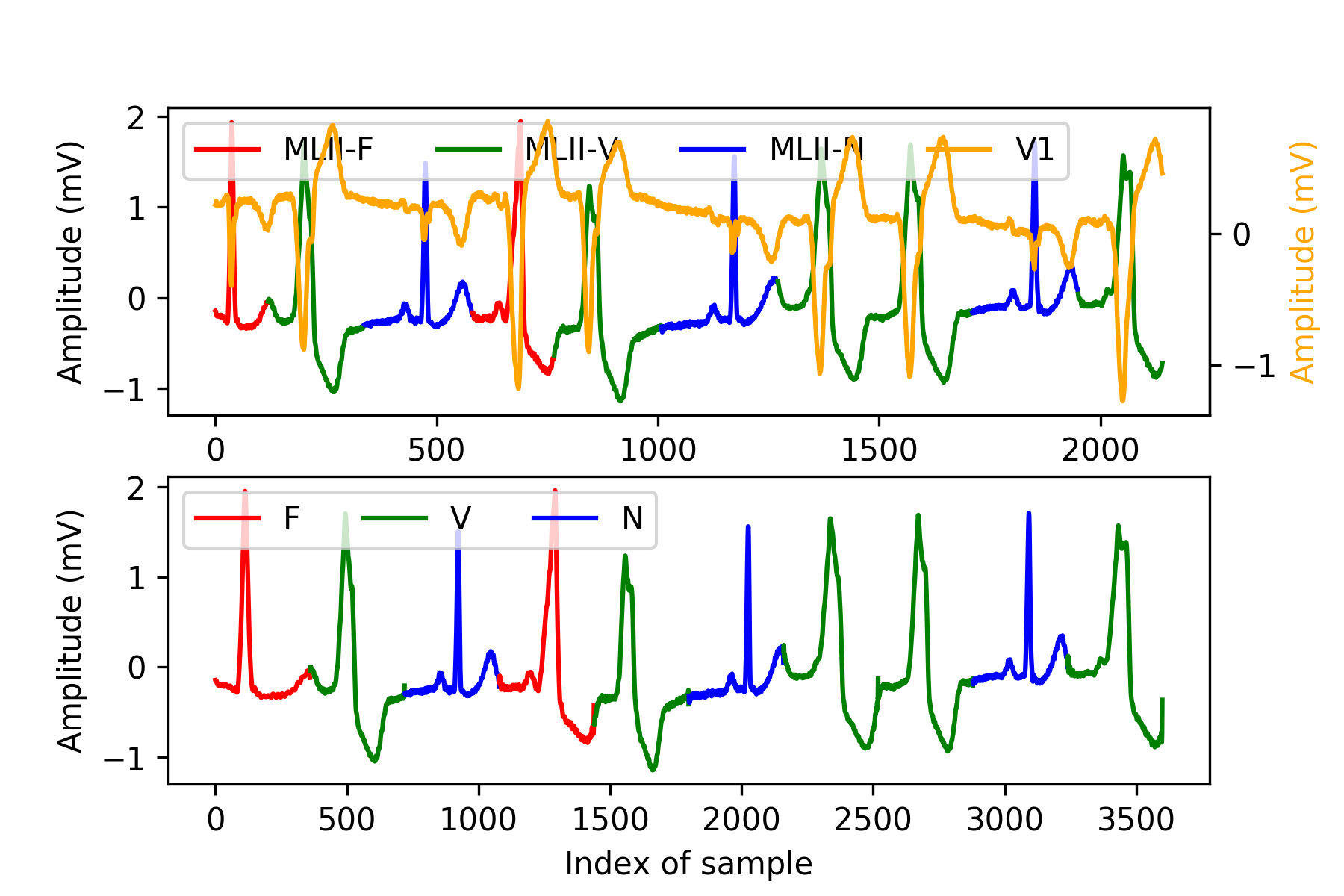}
	\caption{Illustration of the first ten raw heartbeats vs resampled partitioned of MLII data from a two-channel ECG of the patient with ID=208.}
	\label{fig:Firtst10HeartBeatsResample}
	
\end{figure}
To train the i.p.i.d. models for each class, we assumed that the ECG waveforms are deterministic waveforms corrupted by Gaussian noise. We used the training data to learn the means and variances for the Gaussian i.p.i.d. processes. We used $T=360$, the time obtained after resampling. 
The learned mean and variance parameters are shown in Fig. \ref{fig:parPrePostChangeBeatsResample}. The bold lines are the expected values and the dashed lines are one standard deviation away from the mean line. As shown in Fig. \ref{fig:parPrePostChangeBeatsResample}, there are only a few time slots in which two distributions can be separated. Thus, we only focused on discrete time intervals of [130, 155] or [200, 220] to improve the accuracy of predictions. 
\begin{figure}
	\centering
	\includegraphics[width=0.7\linewidth]{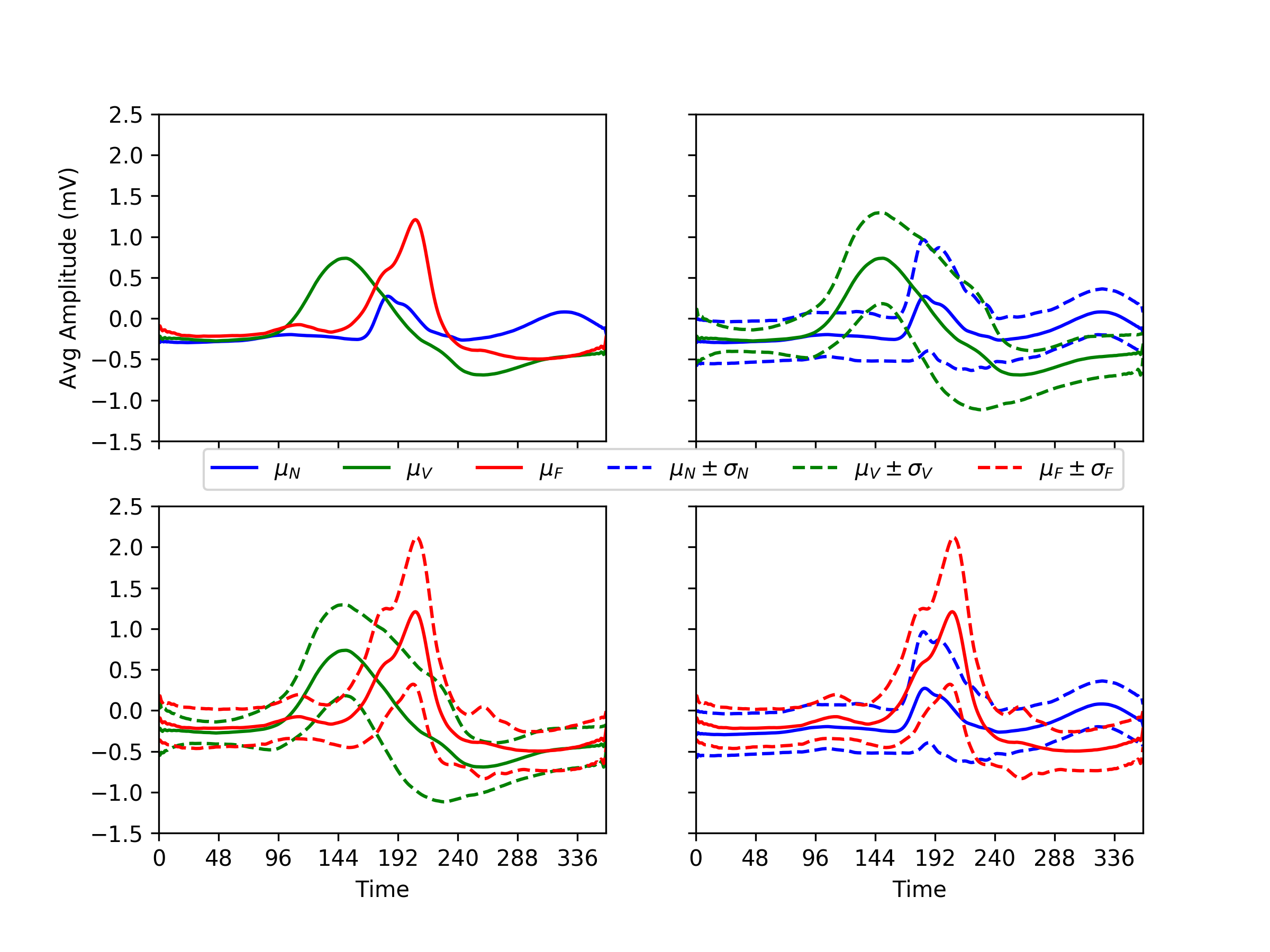}
	\caption{Depiction of pre/post-change parameters for different types of heartbeats for the patient with ID=208.}
	\label{fig:parPrePostChangeBeatsResample}
\end{figure}

\subsubsection{Results of Applying I.P.I.D. Quickest Detection and Isolation Algorithm to ECG Data}
In Fig.~\ref{fig:CusumLikeStartingIndex80} we have plotted the test statistics 
obtained from ECG data with ten heartbeats that began with the heartbeat of the 80th of the test set. Specifically, we plot the statistic in \eqref{eq:repeatwindowlimited} for every class. The red statistic is for arrhythmia of type F and the green statistic is for arrhythmia of type V. A spike in the values of these statistics indicates that an arrhythmia of the corresponding type has been detected. As seen in the figure, the algorithm is quite accurate in detecting arrhythmias. We remark that we reset the test statistic to zero each time the statistic crosses a threshold. 


Next, we consider the segment with starting heartbeat index of 1373 that is represented in Fig. \ref{fig:CusumLikeStartingIndex1373}. As can be seen from the figure, there are both false alarms and incorrect fault isolations. 
Finally, we apply the algorithm to another segment shown in Fig. \ref{fig:CusumLikeStartingIndex235} that begins with a type `V' heartbeat in index 235 and ended with a type `V' in index 244. It only had one miss-classification error, which resulted in isolating the type `F' instead of the type `V'. 
\begin{figure}
	\centering
	\includegraphics[width=0.7\linewidth]{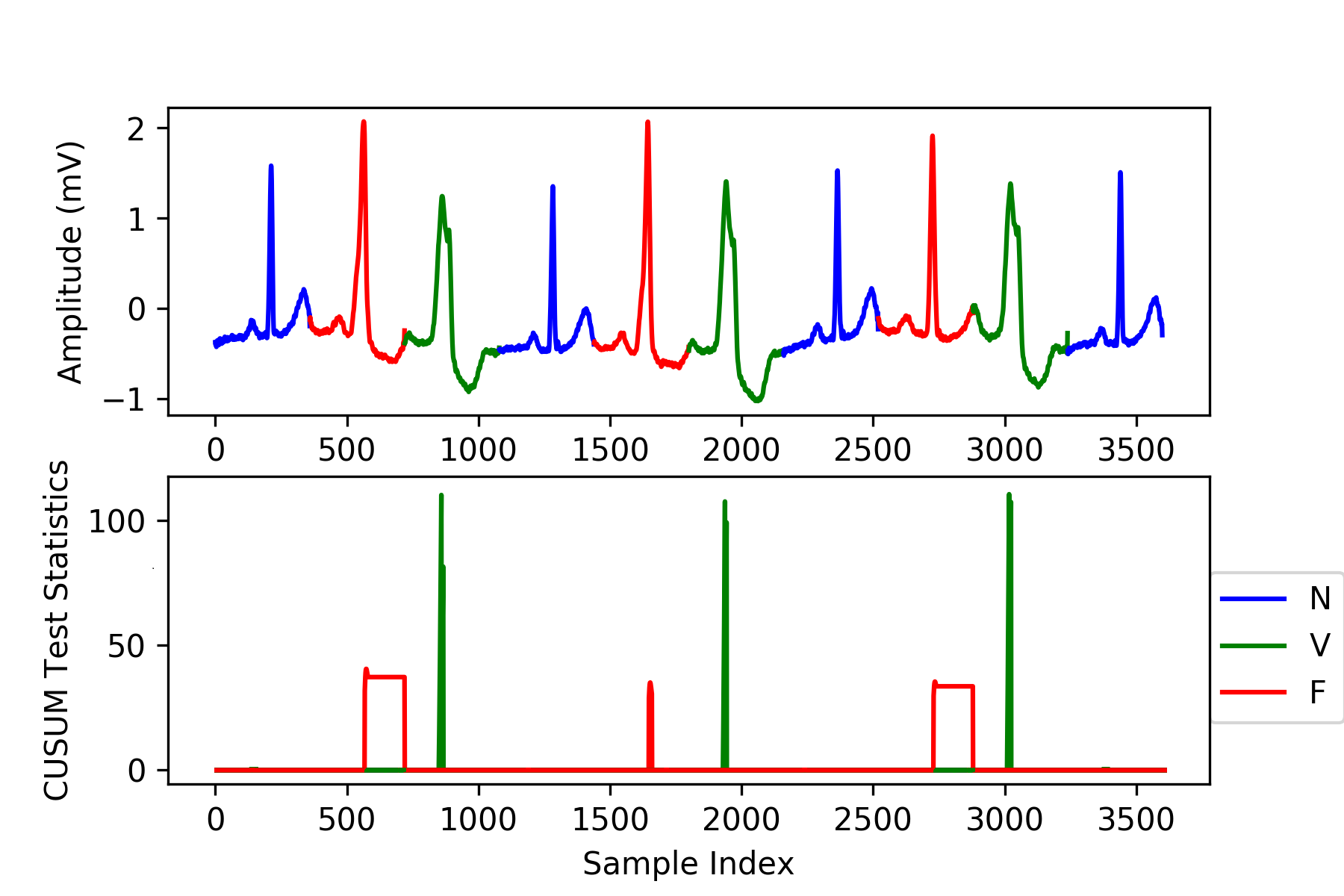}
	\caption{Illustrations of a sample path of ECG with ten heartbeats had started with normal heartbeat with index 80 and calculated i.p.i.d. fault isolation test statistic.}
	\label{fig:CusumLikeStartingIndex80}
\end{figure}

\begin{figure}
	\centering
	\includegraphics[width=0.7\linewidth]{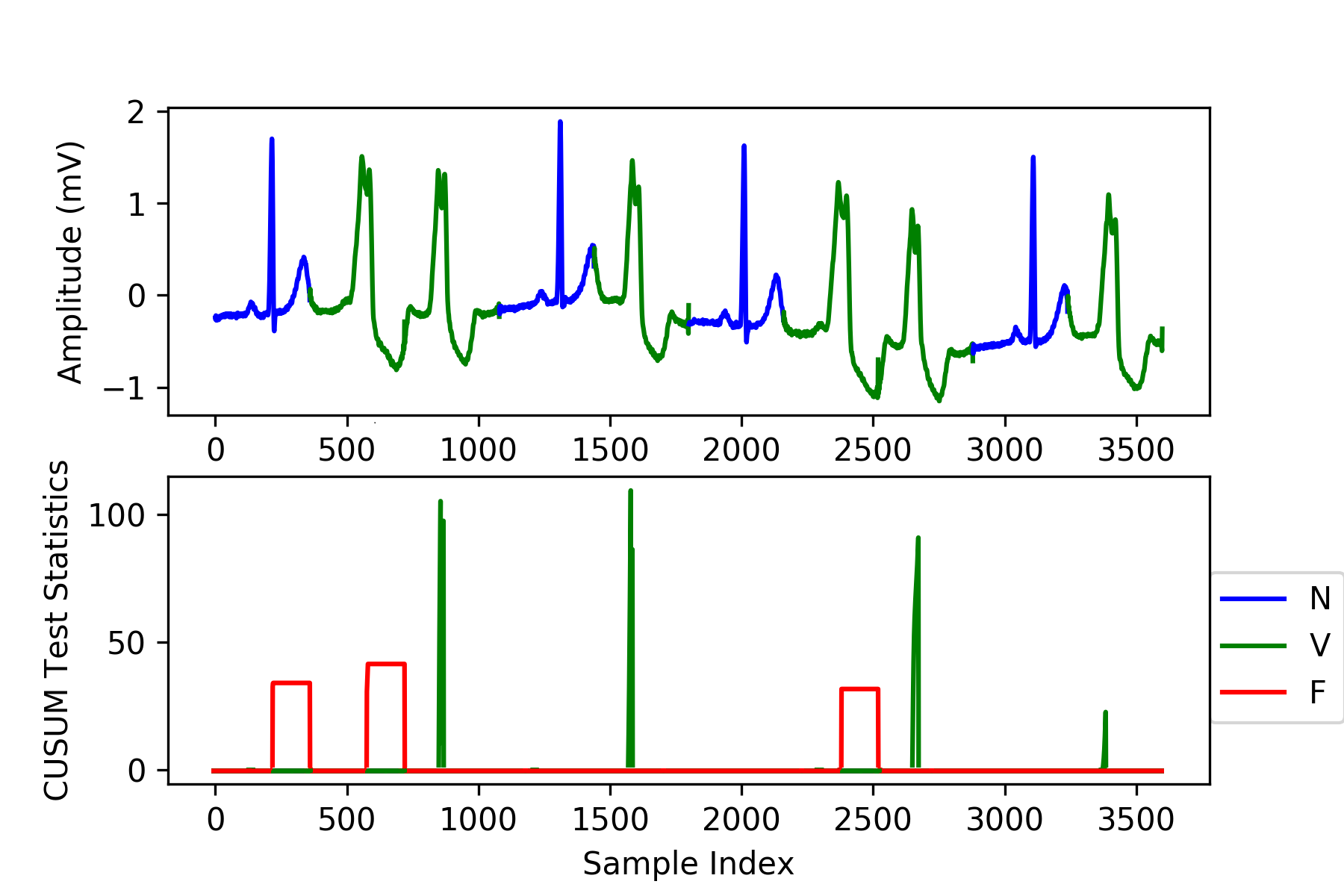}
	\caption{The evolution of i.p.i.d. fault isolation test statistic with one type of arrhythmia, happened at index 1374 and continued with the presence of four arrhythmias identical to type `V' arrhythmia and ended in a type `V' arrhythmia.}
	\label{fig:CusumLikeStartingIndex1373}
\end{figure}

\begin{figure}
	\centering
	\includegraphics[width=0.7\linewidth]{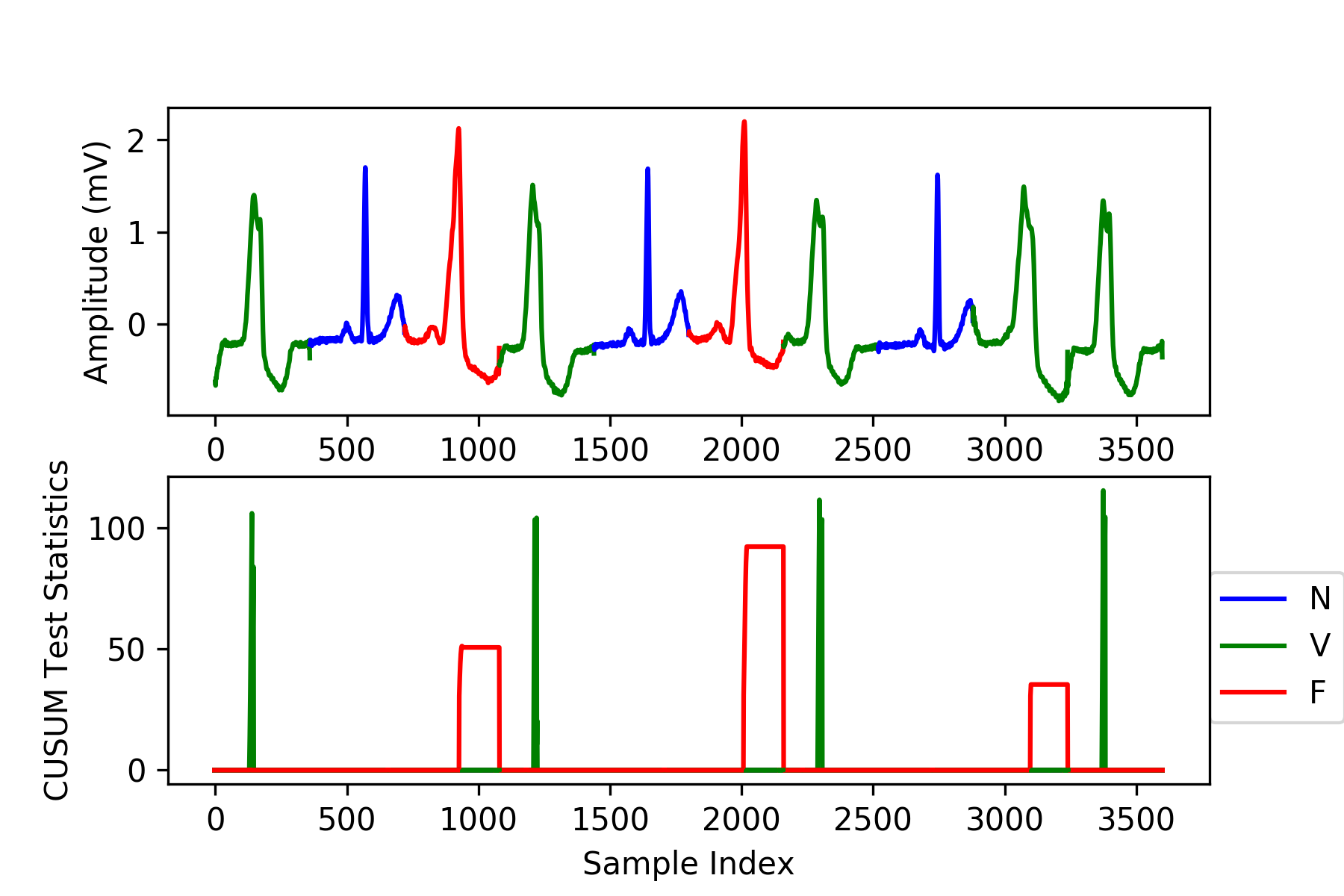}
	\caption{Depiction for ECG segment included ten heartbeats started with arrhythmia at index 235 in the test set.}
	\label{fig:CusumLikeStartingIndex235}
\end{figure}

\subsubsection{Results of Applying I.P.I.D. Quickest Detection and Isolation Algorithm to Wavelet Data}
\label{sec:wavelet}
Due to a limited amount of multi-class data in the MIT-BIH dataset, we use simulated data using wavelets to show the effectiveness of our algorithm for three-class detection and classification. One of the noise-resistant wavelet transformations on ECG was Ricker wavelet or Marr wavelet which is known as Mexican hat or Marr's wavelet in the Americas \cite{GnecchiBioSigProsCtrlJ2017}. For simulation purposes, we used the Mexican-Hat wavelet, which resembles morphological features of ECG heartbeats with known pre and post-change distributions' parameters. In mathematical terms, Marr's wavelet has been formulated as follows.
\[\psi(t)= \frac{2}{\sqrt[4]{9\pi}} \left(1-t^{2}\right) e^{-t^{2} / 2}\]\\
For discrete-time simulations, we re-sampled a 100-long wave centered at zero from a Scipy's Ricker wavelet generating function. Different functional variations of a Mexican hat wavelet, such as shift up, scaling, time delay, and or integration of two perturbations produced three types of anomalies. In total, we had four classes as illustrated in Fig. \ref{fig:MexicanHatFourClassIsolationAnomalies}. The actual data was generated by adding zero-mean Gaussian noise with variance $0.01$ to the wavelets and then cascading the noisy waveforms together to make an ECG-like waveform pattern. The results are plotted in Fig. \ref{fig:CusumLikeSampleMexican} and Fig. \ref{fig:CusumLikeRndSampleMexican}. As seen in the figures, our algorithm can detect and identify faults quite accurately in real time. 
\\
\begin{figure}
	\centering
	\includegraphics[width=0.4\linewidth]{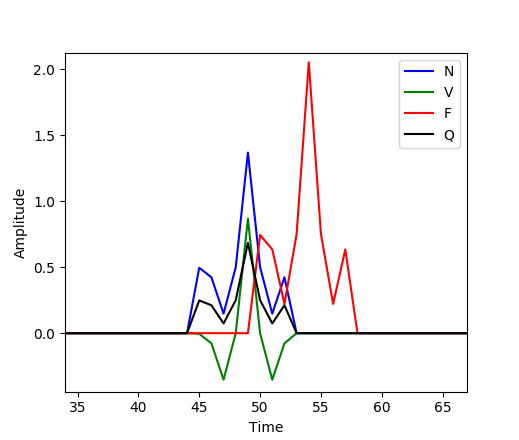}
	\caption{Illustration of a zoomed-into specific time index for four different wavelets by applying a set of transformations on a wavelet.}
	\label{fig:MexicanHatFourClassIsolationAnomalies}
\end{figure}
\\

\begin{figure}
	\centering
	\includegraphics[width=0.7\linewidth]{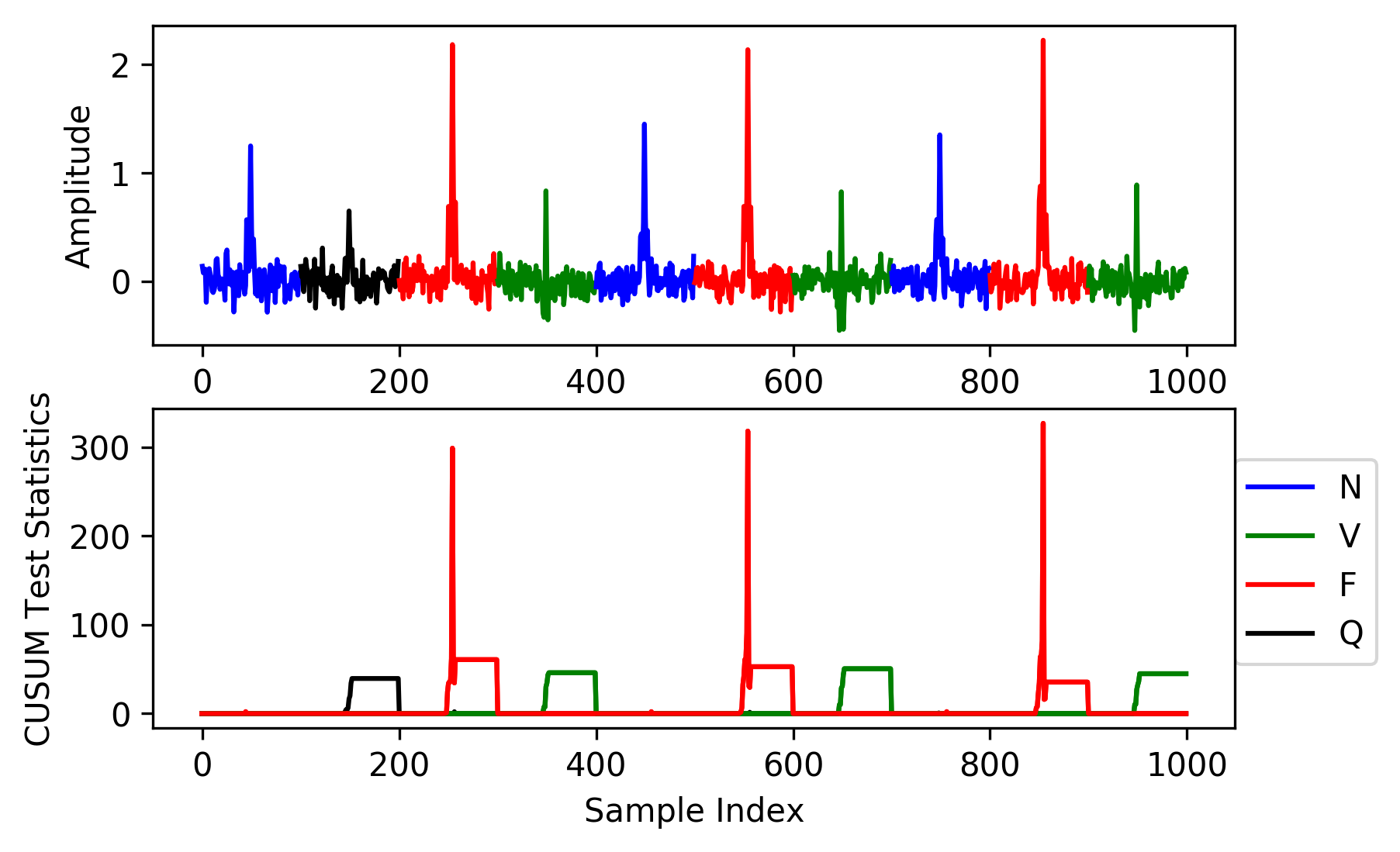}
	\caption{Illustration for a sample path and corresponding i.p.i.d. fault isolation test statistic on a sequence with different types of anomalies.}
	\label{fig:CusumLikeSampleMexican}
\end{figure}

\begin{figure}
	\centering
	\includegraphics[width=0.7\linewidth]{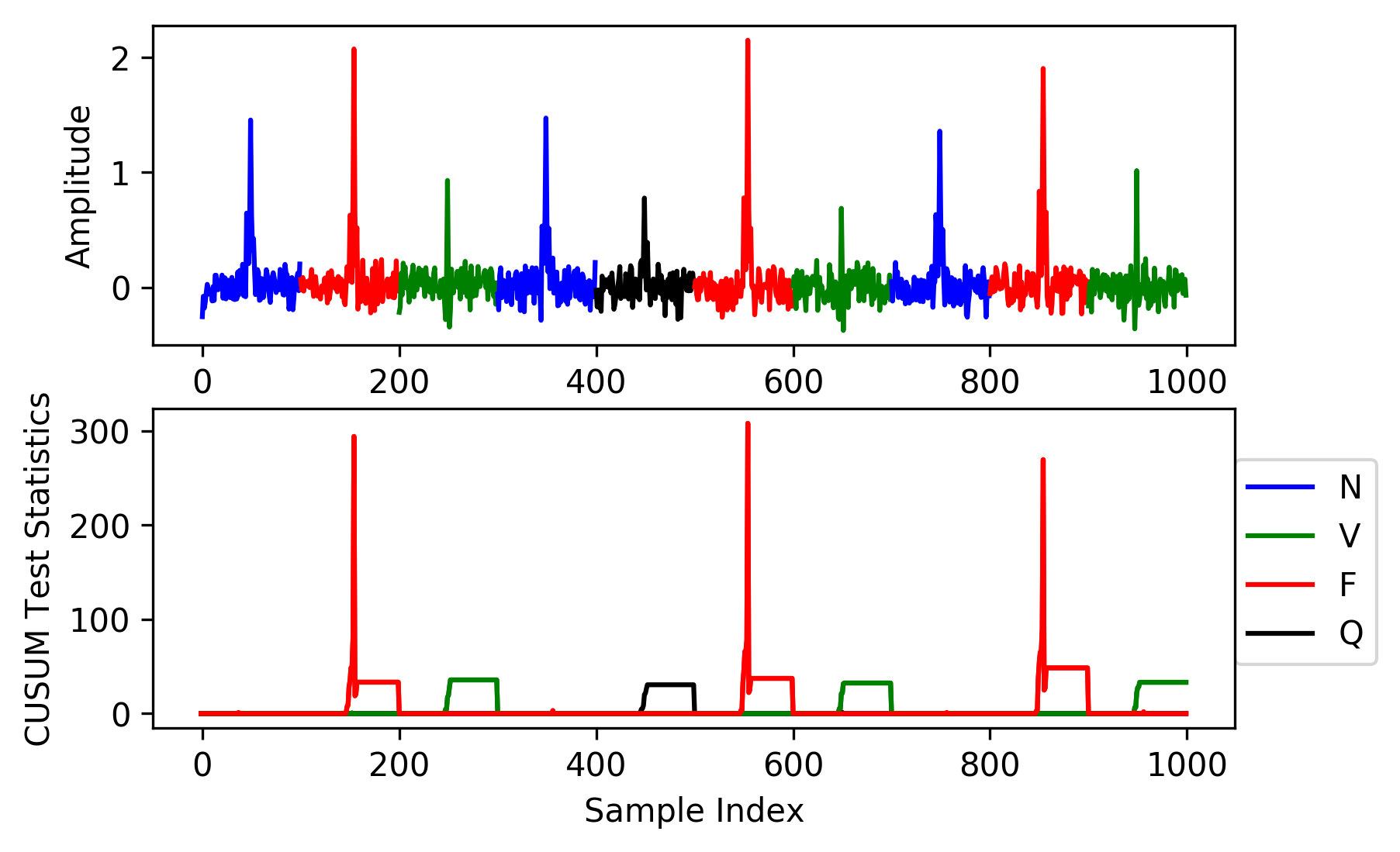}
	\caption{Illustration for different sample paths and corresponding i.p.i.d. fault isolation test statistic.}
	\label{fig:CusumLikeRndSampleMexican}
\end{figure}

\section{Conclusions}
We developed algorithms for the quickest change detection in i.p.i.d. processes when the post-change i.p.i.d. law is unknown. We introduced the concept of a least favorable i.p.i.d. law and showed that a multi-threshold Shiryaev algorithm designed using the least favorable i.p.i.d. law is robust optimal. We then proposed an algorithm for quickest change detection and fault isolation in the i.p.i.d. setting and showed that it is asymptotically optimal, as the rate of false alarms and misclassifications go to zero. We also showed that a mixture-based test is asymptotically optimal for the multislot quickest change detection problem. We showed that the developed algorithm can be successfully used to detect anomalies in real traffic data and real ECG data. 

\section{Acknowledgements}

The work of Yousef Oleyaeimotlagh, Taposh Banerjee and Ahmad
Taha was partially supported by the National Science Foundation under Grant 1917164. The work of Yousef Oleyaeimotlagh, Taposh Banerjee, and Eugene John was also partially
supported by the National Science Foundation under Grant 2041327.

\bibliographystyle{tfs}
\bibliography{TaposhQCD.bib}

\end{document}